\newtheorem{proposition}{Proposition}[section]
\newtheorem{problem}[proposition]{Problem}
\newtheorem{definition}[proposition]{Definition}
\newtheorem{theorem}[proposition]{Theorem}
\newtheorem{lemma}[proposition]{Lemma}
\newtheorem{remark}[proposition]{Remark}
\newcommand{\dsg}{\textsc{DSG}\xspace}
\newcommand{\dss}{\textsc{DSS}\xspace}
\newcommand{\apdsg}[1]{$#1$\textsc{-mean DSG}\xspace}
\newcommand{\pdsg}{\apdsg{p}}
\newcommand{\aec}[1]{\textsc{Exact $#1$-Cover}\xspace}
\newcommand{\ec}{\aec{3}}
\protected\def \plusplus#1{#1\normalfont\nolinebreak[4]\hspace{-.05em}\raisebox{.4ex}{\relsize{-3}{\textbf{++}}}\xspace}
\newcommand{\origgreedy}{\mathsf{Greedy}}
\newcommand{\origgpp}{\plusplus{\mathsf{Greedy}}}
\newcommand{\supergreedy}{\mathsf{SuperGreedy}}
\newcommand{\supergreedypp}{\plusplus{\mathsf{SuperGreedy}}}
\newcommand{\agreedy}[1]{\mathsf{Greedy\text{-}}#1}
\newcommand{\greedy}{\agreedy{p}}
\newcommand{\asgreedy}[1]{\mathsf{Simple\text{-}Greedy\text{-}}#1}
\newcommand{\sgreedy}{\asgreedy{p}}
\newcommand{\agpp}[1]{\plusplus{\mathsf{Greedy}}\text{-}#1}
\newcommand{\gpp}{\agpp{p}}
\newcommand{\asgpp}[1]{\mathsf{Simple\text{-}\plusplus{Greedy}\text{-}}#1}
\newcommand{\sgpp}{\asgpp{p}}
\newcommand{\algreedy}[1]{\mathsf{Lazy\text{-}Greedy\text{-}}#1}
\newcommand{\lgreedy}{\algreedy{p}}
\newcommand{\algpp}[1]{\mathsf{Lazy\text{-}\plusplus{Greedy}\text{-}}#1}
\newcommand{\lgpp}{\algpp{p}}
\newcommand{\abblgreedy}{\mathsf{Lazy\text{-}G\text{-}}p}
\newcommand{\abbsgreedy}{\mathsf{Simple\text{-}G\text{-}}p}
\newcommand{\abbsgpp}{\mathsf{Simple\text{-}\plusplus{G}\text{-}}p}
\newcommand{\cF}{\mathcal{F}}
\newcommand{\cS}{\mathcal{S}}
\newcommand{\cU}{\mathcal{U}}
\newcommand{\tO}{\tilde{O}}
\newcommand{\redopt}{\rho^*}
\newif\ifdraft
\newcommand{\NameColorComment}[3]{%
  \ifdraft%
  {\renewcommand\thefootnote{\textcolor{#2}{\arabic{footnote}}}%
    \footnote{\color{#2}#1: #3}%
  }%
  \fi%
}
\newcommand{\manuel}{\NameColorComment{Manuel}{magenta}}%
\newcommand{\chandra}{\NameColorComment{Chandra}{blue}}%
\begin{document}
\algsetup{indent=1.1em}  

\title{On the Generalized Mean Densest Subgraph Problem: Complexity and Algorithms}
\author{
Chandra Chekuri\thanks{Dept.\ of Computer Science, Univ.\ of Illinois, Urbana-Champaign, Urbana,
  IL 61801. {\tt chekuri@illinois.edu}. Supported in part by NSF grant CCF-1910149.}
\and
Manuel R.\ Torres\thanks{Dept.\ of Computer Science, Univ.\ of Illinois, Urbana-Champaign, Urbana,
  IL 61801. {\tt manuelt2@illinois.edu}. Supported in part by fellowships from NSF
    and the Sloan Foundation, and NSF grant CCF-1910149.}
}

\hypersetup{pageanchor=false}
\maketitle

\begin{abstract}
  Dense subgraph discovery is an important problem in graph mining
  and network analysis with several applications.  Two canonical
  problems here are to find a \emph{maxcore} (subgraph of maximum min
  degree) and to find a \emph{densest subgraph} (subgraph of maximum
  average degree). Both of these problems can be solved in polynomial time.
  Veldt, Benson, and Kleinberg \cite{vbk-21} introduced the generalized 
  $p$-mean densest subgraph problem which
  captures the maxcore problem when
  $p=-\infty$ and the densest subgraph problem when $p=1$. They
  observed that the objective leads to a supermodular function when
  $p \ge 1$ and hence can be solved in polynomial time; for this case,
  they also developed a simple greedy peeling algorithm with a bounded
  approximation ratio. In this paper, we make several
  contributions. First, we prove that for any $p \in (-\frac{1}{8}, 0) \cup (0, \frac{1}{4})$ 
  the problem is NP-Hard and for any $p \in (-3,0) \cup (0,1)$ the
  weighted version of the problem is NP-Hard, partly resolving a question left open in
  \cite{vbk-21}. Second, we describe two simple $1/2$-approximation
  algorithms for all $p < 1$, and show that our analysis of these
  algorithms is tight.  For $p > 1$ we develop a fast near-linear time
  implementation of the greedy peeling algorithm from \cite{vbk-21}.
  This allows us to plug it into the iterative peeling algorithm that
  was shown to converge to an optimum solution \cite{cqt-22}.
  We demonstrate the efficacy of our algorithms by running extensive experiments
  on large graphs. Together, our results provide a comprehensive understanding of the
  complexity of the $p$-mean
  densest subgraph problem and lead to fast and provably good algorithms
  for the full range of $p$.
\end{abstract}

\thispagestyle{empty}
\newpage
\setcounter{page}{1}

\hypersetup{pageanchor=true}

\section{Introduction} Dense subgraph discovery is an essential tool
in graph mining and network analysis. One can view the approach
as finding clusters or communities in a graph where the edges between the
nodes in the cluster are denser compared to those in the entire graph.
There are a number applications of dense subgraph discovery in biological
settings~\cite{hu-05,fratkin-06,lanciano-20}, protein-protein
interaction networks~\cite{bader-03,spirin-03}, web
mining~\cite{gibson-05,dourisboure-07}, social network
analysis~\cite{kumar-06}, real-time story
identification~\cite{angel-14}, and finance and fraud
detection~\cite{du-09,zhang-17,ji-22}.

Different density definitions are used and studied in the literature, motivated
by the needs of applications and theoretical considerations 
(see \cite{f-19,gt-15,lrja-10,tc-21} for some surveys).
Each density definition leads to a corresponding combinatorial optimization problem:
given a graph $G$, find a subgraph of maximum density. Two of the most popular
density measures in the literature are (i) the minimum degree of the
subgraph and (ii) the average degree of the subgraph. These measures lead
to the maxcore problem and densest subgraph problem (\dsg); the goal
is to find the subgraph with the maximum minimum degree and
maximum average degree, respectively. They are both polynomial-time
solvable and have been extensively studied. We briefly describe
them before discussing a common generalization that is the focus
of this paper.

A $k$-core of a graph is a maximal connected subgraph with all
vertices of degree at least $k$. The largest value of $k$ for which
$G$ contains a $k$-core is known as the degeneracy. We refer to the
$k$-core obtaining this maximum as the \emph{maxcore}. $k$-cores are
a popular notion of density, commonly finding use in what is known as
the $k$-core decomposition, a nested sequence of subgraphs that
captures all $k$-cores. One nice feature of a $k$-core
decomposition is that there is a simple linear-time peeling
algorithm $\origgreedy$ to compute it. We refer the reader to \cite{mgpv-20} for a survey
on $k$-core decomposition and applications.

The densest subgraph problem (\dsg), where the goal is to find a
subgraph of maximum average degree, is a classical problem in
combinatorial optimzation that is polynomial time
solvable~\cite{c-00,ggt-89} via network flow techniques among others.
It is widely studied in graph mining. Even though
\dsg can be solved exactly, the algorithms are slow and this has spurred a study
of approximation algorithms for
\dsg~\cite{c-00,bgm-14,dcs-17,bsw-19,boob-20,cqt-22,hqc-22}.
Amongst these approximation algorithms is the 
algorithm $\origgreedy$ introduced by Asahiro et al.~\cite{aitt-00} to solve \dsg
and shown to be a $\frac{1}{2}$-approximation by
Charikar~\cite{c-00}. Note that the peeling order is the same
as the one for computing the optimum $k$-core decomposition; it is only
in the second step where one checks all suffixes that the specific
density measure for \dsg is used. 
Charikar's analysis has spurred the
development and analysis of a variety of peeling algorithms for
several variants of \dsg in both graphs and 
hypergraphs~\cite{ac-09,t-14,t-15,hwc-17,km-18,vbk-21}.

Veldt, Benson and Kleinberg \cite{vbk-21} introduced the
generalized mean densest subgraph problem. It is a common
generalization of the two problems we described in the preceding
paragraphs and is the focus of this paper. Given a real
parameter $p \in \R \cup \{-\infty, \infty\}$ and an undirected graph
$G = (V,E)$, the density of a subgraph $G[S]$ induced by a set $S
\subseteq V$ is defined as:
\[ M_p(S) := \left(\frac{1}{\abs{S}} \sum_{v \in S}
    d_S(v)^p\right)^{1/p}
  \] where $d_S(v)$ is the
  degree of the vertex $v$ in $G[S]$. To make the
dependence on $p$ more explicit, we refer to this problem also as the
$p$-mean densest subgraph problem (\pdsg).  Note that $M_{-\infty}(S)
= \min_{v \in S} d_S(v)$ is the minimum degree in $G[S]$,
$M_{\infty}(S) = \max_{v \in S} d_S(v)$ is the maximum degree,
$M_1(S) = 2|E(S)|/|S|$ is twice the average degree of $G[S]$, and
$M_0(S) = (\prod_{v\in S} d_S(v))^{1/\abs{S}}$, which can be transformed
to $\frac{1}{\abs{S}}\sum_{v\in S} \log d_S(v)$ by taking the logarithm of the 
objective.
Thus, \pdsg for different values of $p$ captures two of the most
well-studied problems in dense subgraph discovery. As $p$ varies
from $-\infty$ to $\infty$, $M_p(S)$ prioritizes the
smallest degree in $S$ to the largest degree in $S$ and provides
a smooth way to generate subgraphs with different density properties.

Veldt et al.\ made several contributions to \pdsg.  They observed that
\apdsg{1} is equivalent to \dsg and that \apdsg{-\infty} is equivalent
to finding the maxcore.  For $p \ge 1$, they observe the set function
$f_p:2^V \rightarrow \mathbb{R}_+$ where $f_p(S) := \sum_{v \in S}
d_S(v)^p$ is a supermodular function\footnote{A real-valued set
function $f : 2^V \to \R$ is \emph{supermodular} if $f(B + v) - f(B)
\ge f(A + v) - f(A)$ for all $A \subseteq B\subseteq V$ and $v \in V
\setminus B$. Equivalently, $f(A) + f(B) \le f(A \cup B) + f(A \cap
B)$ for all $A, B \subseteq V$.  $f$ is supermodular iff $-f$ is
submodular.}. 
This implies that one can solve \pdsg in polynomial time
for all $p > 1$ via a standard reduction to submodular set function
minimization, a classical result in combinatorial optimization
\cite{s-03}. 
Note that the function $f_p(S)$ is not supermodular 
when $p < 1$, which partially stems from the fact that $x^p$ is not 
convex in $x$ when $p < 1$. 
Motivated by the fact that exact algorithms are very slow
in practice, they describe a greedy peeling algorithm $\greedy$
that runs in $O(mn)$ time and is a
$\left(\frac{1}{p+1}\right)^{1/p}$-approximation (here $m$ and $n$ are
the number of edges and nodes of the graph).
Note that the peeling order of $\greedy$ is \emph{not} the same
as that of $\origgreedy$ and depends on $p$. They supplement this
work with experiments, showing that $\greedy$ returns solutions with
desirable characteristics for values of $p$ in the range $[1,2]$.

\paragraph{Motivation for this work.} In this paper we are interested
in algorithms for \pdsg that are theoretically and empirically sound, and
its complexity status for $p < 1$ which was left open in
\cite{vbk-21}. It is intriguing that $p = -\infty$ and $p \ge 1$ are
both polynomial-time solvable while the status of $p \in (-\infty,1)$ is
non-trivial to understand. We focus on the case of $p \in (-\infty, 0) \cup (0, 1)$
as the objective for $p = 0$ is significantly different. Second, we are
interested in developing fast approximation algorithms for \pdsg when
$p > 1$ and when $p < 1$. The $O(mn)$-time algorithm in \cite{vbk-21} is 
slow for large graphs. Moreover, for  $p > 1$, it is of substantial interest to
find algorithms that provably converge to an optimum solution rather
than provide a constant factor approximation (since the problem is
efficiently solvable). Such algorithms have been of much interest
for the classical \dsg problem~\cite{bgm-14,dcs-17,bsw-19,boob-20,cqt-22,hqc-22}.

This paper is also motivated by a few recent works. One of them
is a paper of Chekuri, Quanrud,
and Torres~\cite{cqt-22} that introduced a framework under which one
can understand some of the results of~\cite{vbk-21}.  In~\cite{cqt-22}, they
observe that many notions of density in the literature are of the form
$\frac{f(S)}{\abs{S}}$ where $f:2^V \to \R$ is a supermodular
function. They refer to this general problem as the densest supermodular
subset problem (\dss).  \dss captures \pdsg for $p \ge 1$. One can see
this by noting that for $p \ge 1$, finding $\argmax_S M_p(S)$ is
equivalent to finding $\argmax_S M_p(S)^p$. Thus, for $p \ge 1$, \pdsg
is equivalent to $\max_S \frac{f_p(S)}{\abs{S}}$.
\cite{cqt-22} describes a simple peeling algorithm for \dss
whose approximation guarantee depends on the supermodular function
$f$; they refer to this as $\supergreedy$. In particular, $\greedy$ is
the same as $\supergreedy$ when specialized to $f_p$, and the analysis in \cite{cqt-22} recovers
the one in \cite{vbk-21} for \pdsg.
It is also shown in \cite{cqt-22} that
an \emph{iterated} peeling algorithm, called $\supergreedypp$,
converges to an optimum solution for \dss (thus,
for \emph{any} supermodular function). $\supergreedypp$ is motivated by
and generalizes the iterated greedy algorithm $\origgpp$ that was 
suggested by Boob et al.~\cite{boob-20} for \dsg. When one specializes
$\supergreedypp$ to \pdsg, the algorithm converges to a $(1-\eps)$-approximation 
in $O(\frac{\Delta_p \log n}{\lambda^*\eps^2})$ iterations where
$\Delta_p = \max_{v \in V} f_p(V) - f_p(V-v)$ and $\lambda^*$ is the optimal
density. A naive implementation yields a per-iteration
running time of $O(mn)$.
Another iterative algorithm that converges to an optimum
solution for \dss is described in the work of Harb et al.\ \cite{hqc-22}
and is based on the well-known Frank-Wolfe method applied to \dss; this
is inspired by the algorithm of Danisch et al.~\cite{dcs-17} for \dsg.
When one specializes this algorithm to \pdsg, a naive implementation yields 
a per-iteration running time of $O(m + n \log n)$.
Since \pdsg for $p \ge 1$ is a special case of \dss, these two iterative
algorithms will also converge to an optimum solution for \pdsg.

\subsection{Our Results}
As remarked earlier, our goal is to understand the complexity of \pdsg
and develop improved algorithms for all values of $p$. We make four
contributions in this paper towards this goal and we outline them
below with some discussion of each.

\paragraph{NP-Hardness of \pdsg for $p < 1$.} We prove that \pdsg is
NP-Hard for any fixed $p \in (-\frac{1}{8},0)\cup(0,\frac{1}{4})$ and
\pdsg for weighted graphs is NP-Hard for any fixed $p \in (-3, 0) \cup (0,1)$. The
hardness reduction is technically involved, especially
due to the non-linear objective function. Our proof involves numerical
computation and we
restrict our attention to the range $(-\frac{1}{8}, 0) \cup (0,\frac{1}{4})$ for the
unweighted case and $(-3,0) \cup(0,1)$ for the weighted case to make the calculations
and the proof transparent. We believe that our proof methodology
extends to all $p \in (-\infty, 0) \cup (0, 1)$ and describe an outline to do
so. 

\paragraph{$\frac{1}{2}$-approximation algorithms for $p \in
  (-\infty, 1)$.}  Our NP-Hardness result for \pdsg motivates the search for
approximation algorithms and heuristics when $p < 1$. In \cite{vbk-21}
the authors do some empirical evaluation using $\greedy$ for $p \in (0,1)$
even though the corresponding function $f_p$ is \emph{not}
supermodular; no approximation guarantee is known for this
algorithm. $\greedy$ is only well-defined for $p > 0$.
\chandra{Do we know a bad example for $\greedy$ when $p <1$?}
We describe two different and simple
$\frac{1}{2}$-approximation algorithms for \pdsg when $p \in (-\infty,
1)$, one based on simple greedy peeling and the other based on an exact
solution to \dsg.  These are the first algorithms with 
approximation guarantees for
this regime of $p$. We also show that $\frac{1}{2}$ is tight for these
algorithms. We describe another algorithm, based on iterative peeling,
that interpolates between the preceding two algorithms. It is also
guaranteed to be a $\frac{1}{2}$-approximation but allows us to generate
several candidate solutions along the way in a natural fashion. 

\paragraph{Fast and improved algorithms for $p > 1$.}
Greedy-$p$ from \cite{vbk-21} runs in $O(mn)$ time, which is slow for
large high-degree graphs. We describe a near-linear time algorithm for \pdsg
when $p > 1$.  This algorithm, $\lgreedy$,
is effectively a faster implementation of $\greedy$ that utilizes lazy
updates to improve the running time\footnote{We use $\tO$ notation to
suppress polylogarithmic factors.}  to $\tO(\frac{pm}{\eps})$ while
only losing a factor of $(1 - \eps)$ in the approximation ratio
compared to $\greedy$. Another important motivation and utility for
$\lgreedy$ is the following. Recall the preceding discussion that
iterating the greedy peeling algorithm yields an algorithm that
converges to the optimum solution \cite{cqt-22}.  However, iterating
the vanilla implementation of $\greedy$ is too slow for large graphs.
In contrast, the faster $\lgreedy$ allows us to run
many iterations on large graphs and gives good results in a reasonable amount
of time. 

\paragraph{Experiments.} We empirically evaluate the efficacy of our algorithms 
and some variations on a collection of ten publicly available large real-world graphs. 
For $p > 1$, we show that $\lgreedy$ returns solutions with densities close to that
of $\greedy$ while running typically at least $2$ to $4$ times as
fast. We evaluate the performance of the two iterative algorithms
that provably converge to an optimum solution and compare their
performance. We find that our algorithm converges faster on all graphs
and values of $p$ tested.  
For $p < 1$, we empirically evaluate our two new approximation algorithms,
and we compare against $\greedy$ which was tested in \cite{vbk-21} as a heuristic.
We compare the three algorithms both for their running time and solution quality
and report our findings for different values of $p \in (0,1)$. One of our
approximation algorithms significantly outperforms all algorithms tested in terms
of running time while returning a subgraph with comparable solution quality.

\paragraph{Organization.}
In an effort to highlight the algorithmic aspects of our work, we start by describing 
and analyzing $\lgreedy$ in Section~\ref{sec:fast} and presenting our two new 
approximation algorithms for $p < 1$ in Section~\ref{sec:apx}. We also discuss our
heuristics for $p > 1$ and $p < 1$ based on iterative algorithms for
\pdsg in Section~\ref{sec:apx}. We then present our 
hardness result in Section~\ref{sec:hardness}. We conclude with experiments in
Section~\ref{sec:experiments}.
All proofs that are omitted from the main body of the
paper are contained in the appendix.

\section{Related work}\label{sec:rel-work}
  \dsg has been the impetus for a wide-ranging and extensive
  subfield of research. The problem is solvable in polynomial time via different methods,
  such as a reduction to maximum flow~\cite{g-84,l-76,pq-82,ggt-89}, a reduction to 
  submodular minimization, 
  and via an exact LP relaxation~\cite{c-00}. Running times of these algorithms 
  are relatively slow and thus approximation algorithms have been considered. 
  The theoretically fastest known approximations run in near-linear time, obtaining
  $(1 - \eps)$-approximations in $\tO(m \cdot \poly(\frac{1}{\eps}))$ 
  time~\cite{bgm-14,bsw-19,cqt-22}. The fastest known 
  running time is $\tO(\frac{m}{\eps})$~\cite{cqt-22}. 
  These algorithms are relatively complex to implement and practitioners often prefer
  simpler approximation algorithms but with worse approximation ratios, such
  as $\origgreedy$~\cite{aitt-00,c-00} and $\origgpp$~\cite{boob-20,cqt-22}. 
  However, there has been recent work that showed that one can obtain
  near-optimal solutions using continuous optimization methods that are both simple and
  efficient on large real-world graphs~\cite{dcs-17,hqc-22}.
  
  We discuss some of the density definitions considered in the literature.
  Given a graph $G = (V,E)$ and a finite collection of pattern graphs $\cF$, many densities 
  take the form $\frac{f(S)}{\abs{S}}$ where $f: 2^V \to \R$ counts the number of occurrences of the
  patterns in $\cF$ in the induced subgraph $G[S]$. This exact problem was considered
  in~\cite{f-08}. \cite{t-14} considered the special case where $\cF$ is a single triangle graph
  and \cite{t-15} considered the special case where $\cF$ is a single clique on $k$ vertices.
  One can also consider a version of \dsg for hypergraphs, where the density is defined as
  $\frac{\abs{E(S)}}{\abs{S}}$ where $E(S)$ is the set of hyperedges with 
  all vertices in $S$~\cite{hwc-17}. We can reduce the 
  pattern graph problem to the hypergraph
  problem by introducing a hyperedge for each occurrence of the pattern in the input graph. 
  Other density definitions look at modifying \dsg by considering the density 
  $\frac{\abs{E(S)}}{g(\abs{S})}$ where 
  $g$ is an arbitrary function of $\abs{S}$~\cite{km-18}.
  Given a parameter $\alpha$, another density considers
  $\frac{\abs{E(S)} - \alpha \abs{\delta(S)}}{\abs{S}}$~\cite{mk-18}, which has connections to
  modularity density maximization~\cite{li-08,smt-22}. The \pdsg objective $M_p$ is a class of 
  densities that captures a wide-range of different objectives, including \dsg when 
  $p = 1$ and the maxcore problem when $p = -\infty$~\cite{vbk-21}.
  Aside from the variation modifying the denominator of \dsg from~\cite{km-18}, 
  all of the different densities mentioned above fall into the framework of \dss where we
  want to maximize $\frac{f(S)}{\abs{S}}$ for a supermodular function $f$~\cite{cqt-22}.

  This field of research is vast and we cannot hope to do it justice
  here, so we point the reader to five separate tutorials/surveys on the 
  topic and the references therein~\cite{f-19,gt-15,lrja-10,tc-21,lmfb-23}.

\section{Preliminaries}\label{sec:prelim}
  Fix a graph $G = (V,E)$. Let $d_S(v)$ denote the degree of $v$ in the 
  induced subgraph $G[S]$. Let $N(v)$ denote the neighborhood of 
  $v$ in $G$.
  For any $p \in \R \cup \{-\infty, \infty\}$,
  and any $S \subseteq V$, we define $f_p(S) := \sum_{v \in S} d_S(v)^p$,
  $\rho_p(S) := \frac{f_p(S)}{\abs{S}}$ and 
  $M_p(S) := \rho_p(S)^{1/p}$.
  We let $S_p^*(G)$ denote $\argmax_{S \subseteq V} M_p(S)$ and let
  $M_p^*(G) := M_p(S_p^*)$. Note we use $S_p^*$ and $M_p^*$
  in place of $S_p^*(G)$ and $M_p^*(G)$ when the graph $G$ is clear
  from context. For $S \subseteq V$ where $v \in V \setminus S$ and $u \in S$, 
  we denote $S \cup \{v\}$ as $S+ v$ and $S \setminus \{u\}$ as $S -u$.
  For $v \in V$ and $S \subseteq V$, let
  $f_p(v \mid S) := f_p(v + S) - f_p(S)$. As we often consider 
  $p \in \{-\infty, \infty\}$, we use $[-\infty, \infty]$
  to denote the extended real line $\R \cup \{-\infty, \infty\}$.
  
  \paragraph{Peeling algorithms.} 
  Recall that the greedy peeling algorithm $\greedy$ of Veldt et al.\ for 
  \pdsg is a $\left(\frac{1}{p+1}\right)^{1/p}$-approximation and runs in 
  $O(mn)$ time~\cite{vbk-21}. 
  We give the pseudocode in Figure~\ref{fig:prelim-greedy}.
  
  \begin{figure}
  \begin{framed}
      $\greedy(G = (V, E))$\
    \begin{algorithmic}[1]
      \STATE $S_1 \gets V$
      \FOR{$i=1$ \TO $n-1$}
        \STATE $v_i \gets \argmin_{v \in S_i} f_p(v \mid S_i - v)$
          \label{line:greedy-3}
        \STATE $S_{i+1} \gets S_i - v_i$
      \ENDFOR
      \RETURN{$\argmax_{S_i} M_p(S_i)$}\label{alg:greedy-p-ret}
    \end{algorithmic}
  \end{framed}
  \caption{Greedy peeling algorithm for \pdsg of Veldt et al.~\cite{vbk-21}.}
  \label{fig:prelim-greedy}
  \end{figure}    
  
  We now introduce the iterative peeling algorithm $\supergreedypp$ of Chekuri et al.\
  for \dss applied specifically to the function $f_p$~\cite{cqt-22}, which is
  a generalization of the $\origgpp$ algorithm of Boob et al.~\cite{boob-20}. 
  We refer to the specialization of $\supergreedypp$ to \pdsg as $\gpp$.
  $\gpp$ maintains weights for each of the vertices and instead of 
  peeling the vertex $v$ minimizing $f_p(v \mid S_i)$ like $\greedy$, it peels the 
  vertex $v$ minimizing $f_p(v\mid S_i)$ plus the weight of $v$. Since 
  $\gpp$ initializes the weights to $0$, the first iteration is exactly
  $\greedy$. However, subsequent iterations have positive values for the weights 
  and therefore different orderings of the vertices are considered. \cite{cqt-22} 
  shows a connection between this algorithm and the multiplicative 
  weights updates method and proves convergence via this connection.
  We give the pseudocode for $\gpp$ in Figure~\ref{fig:prelim-greedypp}. 
  
  \begin{figure}[h]
  \begin{framed}
      $\gpp(G = (V, E), T)$\
    \begin{algorithmic}[1]
      \STATE for all $v \in V$, set $\ell_v^{(0)} = 0$
      \FOR{$i=1$ \TO $T$}
        \STATE $S_{i,1} \gets V$
        \FOR{$j=1$ \TO $n-1$}
          \STATE $v_{i,j} \gets \argmin_{v \in S_i} \ell_v^{(i-1)} + f_p(v \mid S_{i,j} - v)$
          \STATE $\ell_{v_{i,j}}^{(i)} \gets \ell_{v_{i,j}}^{(i-1)} + f_p(v_{i,j} \mid S_{i,j}-v_{i,j})$
          \STATE $S_{i, j+1} \gets S_{i,j} - v_{i,j}$
        \ENDFOR
      \ENDFOR
      \RETURN{$\argmax_{S_{i,j}} M_p(S_{i,j})$}
    \end{algorithmic}
  \end{framed}
  \caption{Specialization of $\supergreedypp$ from~\cite{cqt-22} to \pdsg.}
  \label{fig:prelim-greedypp}
  \end{figure}    
  
  \paragraph{Properties of $M_p$.} We give a known fact about the monotonicity 
  of the $M_p$ objective in the graph setting we consider. We provide a short proof
  in the appendix (see Appendix~\ref{sec:prelim-proofs-app}).
  \begin{proposition}\label{prop:means}
    Let $S \subseteq V$. For $p \le q$, we have $M_p(S) \le M_q(S)$.
  \end{proposition}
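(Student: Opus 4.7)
The plan is to recognize this as the classical power mean inequality applied to the multiset of positive reals $\{d_S(v) : v \in S\}$. The only subtlety specific to the graph setting is handling the possibility that some $d_S(v) = 0$ (isolated vertices in $G[S]$), which only matters for $p \le 0$; I would first reduce to the case in which $G[S]$ has no isolated vertices, since otherwise $M_p(S)$ for $p \le 0$ is zero (by convention, or as a limit) and the inequality is trivially preserved.

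With all degrees positive, I would split into cases based on the signs of $p$ and $q$ and invoke Jensen's inequality. For $0 < p \le q$, I would apply Jensen to the convex function $\phi(t) = t^{q/p}$ (convex since $q/p \ge 1$) with the values $t_v = d_S(v)^p$ weighted uniformly over $S$; this yields $\frac{1}{|S|}\sum_{v \in S} d_S(v)^q \ge \bigl(\frac{1}{|S|}\sum_{v \in S} d_S(v)^p\bigr)^{q/p}$, and raising both sides to the positive power $1/q$ gives $M_q(S) \ge M_p(S)$. For $p \le q < 0$, the same Jensen inequality applies (now $q/p \in (0,1]$ makes $\phi$ concave, so the inequality on the sums reverses), but raising to the negative power $1/q$ also reverses the inequality, so the conclusion $M_p(S) \le M_q(S)$ survives.

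For $p < 0 < q$, I would use transitivity through the geometric mean $M_0(S) = \bigl(\prod_{v \in S} d_S(v)\bigr)^{1/|S|}$, showing $M_p(S) \le M_0(S) \le M_q(S)$. The right inequality is AM-GM applied to $d_S(v)^q$ and then taking $q$-th roots; the left is AM-GM applied to $d_S(v)^{-|p|}$ (or equivalently the standard harmonic/geometric mean inequality). Alternatively, I could take limits from the two one-sided cases above, since $\lim_{p \to 0} M_p(S) = M_0(S)$. The endpoint cases $q = \infty$ and $p = -\infty$ follow either by taking limits or by the elementary bounds $\min_v d_S(v) \le \bigl(\frac{1}{|S|}\sum_v d_S(v)^p\bigr)^{1/p} \le \max_v d_S(v)$ for any finite $p$.

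The main obstacle is simply bookkeeping across the case split at $p = 0$ and the infinite endpoints; there is no real mathematical difficulty since the statement is folklore. Given the standard nature of the result, I would likely keep the proof in the appendix very short: reduce to positive degrees, cite Jensen for the generic cases, and dispatch the boundary values by limits.
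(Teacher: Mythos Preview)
Your proposal is correct, but the paper takes a genuinely different route. Rather than fixing two exponents $p \le q$ and comparing via Jensen's inequality with a case split on the signs, the paper treats $p$ as a continuous variable and shows directly that $p \mapsto M_p(S)$ is increasing by differentiating: it computes $\frac{d}{dp}\ln M_p(S)$ and reduces the nonnegativity of this derivative to the inequality
\[
  \ln\Bigl(\tfrac{1}{|S|}\textstyle\sum_v d_S(v)^p\Bigr) \;\le\; \frac{\sum_v d_S(v)^p \ln d_S(v)^p}{\sum_v d_S(v)^p},
\]
which, after setting $x_v = d_S(v)^p / \sum_u d_S(u)^p$, is exactly the statement that the uniform distribution maximizes Shannon entropy. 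The boundary values $p \in \{-\infty,0,\infty\}$ are then handled by limits, just as you propose. Your Jensen-based argument is arguably more elementary and closer to how the power mean inequality is usually presented, but it requires the sign-based case analysis and the separate transit through $M_0$; the paper's entropy argument is a single uniform computation valid for all finite nonzero $p$ at once, which is cleaner in that respect though it leans on calculus and the entropy fact. Either approach is perfectly adequate for what is, as you note, a folklore statement.
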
  
  
  \paragraph{Degeneracy.}
  The degeneracy of a graph is a notion often used to measure
  sparseness of the graph. 
  There are a few different commonly-used definitions of degeneracy. We use the
  following definition for convenience.
  \begin{definition}[degeneracy and maxcore]
    The \emph{degeneracy} $d(G)$ of a graph $G$ is the maximum min-degree
    over all subgraphs i.e.\ $\max_S \min_{v \in S} d_S(v)$.
    The subset of vertices attaining the maximum min-degree is the 
    \emph{maxcore} i.e.\ $\argmax_S \min_{v \in S} d_S(v)$.
  \end{definition}   
  
  The degeneracy, which is exactly $M_{-\infty}^*$, is easy to compute via the 
  standard greedy peeling algorithm.
  We state this well-known fact in the following proposition (e.g., see~\cite{mb-83}).
  Note that the algorithm constructs the same ordering of the vertices as the
  algorithm $\agreedy{1}$. 
  
  \begin{proposition}\label{prop:greedy-peeling}
    Let $v_1, \ldots, v_n$ be the order of the vertices produced by the standard
    greedy peeling algorithm for computing the degeneracy. 
    For $i \in [n]$, let $S_i = \{v_i, v_{i+1}, \ldots, v_n\}$.
    The subset $S_i$ maximizing the minimum degree is the maxcore and
    therefore the minimum degree of $S_i$, $d_{S_i}(v_i)$, is the degeneracy of $G$.
  \end{proposition}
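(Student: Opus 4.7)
The plan is a standard two-sided argument. Let $k^* = d(G) = \max_{S \subseteq V} \min_{v \in S} d_S(v)$ and let $S^* \in \argmax_S \min_{v \in S} d_S(v)$ be any maxcore, so $\min_{v \in S^*} d_{S^*}(v) = k^*$. I want to show that $\max_i \min_{v \in S_i} d_{S_i}(v) = k^*$ and that this maximum is attained at some $S_i$.

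For the upper bound, each $S_i$ is a subset of $V$, so by definition of the degeneracy, $\min_{v \in S_i} d_{S_i}(v) \le k^*$. Hence $\max_i \min_{v \in S_i} d_{S_i}(v) \le k^*$.

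For the matching lower bound, I would identify the first index $j$ at which the peeling removes a vertex of $S^*$; that is, $j$ is the smallest index with $v_j \in S^*$. Since all previously peeled vertices $v_1, \ldots, v_{j-1}$ lie outside $S^*$, we have $S^* \subseteq S_j$. Now two observations close the argument. First, since $v_j \in S^*$ and $S^* \subseteq S_j$, adding vertices to $S^*$ can only increase degrees, so $d_{S_j}(v_j) \ge d_{S^*}(v_j) \ge k^*$. Second, by the definition of the peeling step, $v_j$ is a vertex of minimum degree in $G[S_j]$, so $\min_{v \in S_j} d_{S_j}(v) = d_{S_j}(v_j) \ge k^*$. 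Combined with the upper bound, this gives $\min_{v \in S_j} d_{S_j}(v) = k^*$, so $S_j$ achieves the maximum and in particular is a maxcore.

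There is no real obstacle here — the main thing to be careful about is the monotonicity step $d_{S_j}(v_j) \ge d_{S^*}(v_j)$, which uses $S^* \subseteq S_j$, and the use of the peeling rule to conclude that the minimum degree of $G[S_j]$ equals $d_{S_j}(v_j)$. The final remark that $d_{S_i}(v_i)$ equals the degeneracy at the optimal index follows immediately because $v_i$ is precisely the minimum-degree vertex of $G[S_i]$ by construction of the peeling order.
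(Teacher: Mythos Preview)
Your proof is correct and is the standard argument for this well-known fact. The paper does not actually prove this proposition; it merely states it as folklore with a reference to \cite{mb-83}, so there is no in-paper proof to compare against.
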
  
  
  We have the following statement connecting different values of 
  $M_p^*$. The first two inequalities follow directly from 
  Proposition~\ref{prop:means} and the last follows via a simple known argument 
  connecting the degeneracy to a subgraph with maximum average degree
  $S_1^*$ (e.g., see~\cite{ft-14}).
  \begin{proposition}\label{prop:core-to-dsg}
    For any graph and any $p \in [-\infty, 1]$, we have $M_{-\infty}^*
    \le M_p^* \le M_1^* \le 2M_{-\infty}^*$.
  \end{proposition}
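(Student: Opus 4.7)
The plan is to establish the three inequalities independently, with the first two following immediately from Proposition~\ref{prop:means} and the third reducing to a classical degeneracy-vs-average-degree argument.

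For the leftmost inequality $M_{-\infty}^* \le M_p^*$, I would let $S^\circ$ denote the maxcore, so $M_{-\infty}^* = M_{-\infty}(S^\circ)$. Applying Proposition~\ref{prop:means} with exponents $-\infty \le p$ gives $M_{-\infty}(S^\circ) \le M_p(S^\circ)$, and by optimality $M_p(S^\circ) \le M_p^*$. Symmetrically for the middle inequality $M_p^* \le M_1^*$, I would take $S^\circ$ to be a $p$-mean optimizer, observe $M_p(S^\circ) \le M_1(S^\circ)$ by Proposition~\ref{prop:means} with $p \le 1$, and then use $M_1(S^\circ) \le M_1^*$. Both of these are one-line chases and require no further work.

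The main content is the rightmost inequality $M_1^* \le 2 M_{-\infty}^*$. The plan is to pick $T = S_1^*$, the densest subgraph in the average-degree sense, and show that the degeneracy $d(G[T])$ of the induced subgraph is at least $M_1(T)/2$; since $M_{-\infty}^*(G) \ge d(G[T])$, this yields $M_{-\infty}^* \ge M_1^*/2$. To bound $d(G[T])$ from below, I would argue by contradiction: suppose that every nonempty subset $U \subseteq T$ contains a vertex of degree strictly less than $M_1(T)/2$ in $G[U]$. Then run the standard min-degree peeling on $G[T]$; each peeled vertex removes fewer than $M_1(T)/2$ edges, so the total edge count satisfies $|E(T)| < |T| \cdot M_1(T)/2 = |E(T)|$, a contradiction. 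Hence some subset of $T$ has minimum degree $\ge M_1(T)/2$, giving the required bound.

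There is no real obstacle here; the only subtle point is being careful that $M_{-\infty}^*$ is the maxcore value in all of $G$ (not restricted to $T$), so that the inequality $M_{-\infty}^*(G) \ge d(G[T])$ holds by choosing the witness subset of $T$ as a candidate for the global maxcore. Assembling the three chains yields the proposition.
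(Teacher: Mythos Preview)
Your proposal is correct and matches the paper's approach: the first two inequalities are handled exactly as the paper does (via Proposition~\ref{prop:means}), and for the last inequality the paper simply cites the standard degeneracy-versus-average-degree argument, which is what you carry out. One minor stylistic difference: elsewhere in the paper (see the proof of~(\ref{eq:ks})) the authors use the slightly more direct observation that every vertex of $S_1^*$ already has degree at least $|E(S_1^*)|/|S_1^*|$ in $G[S_1^*]$ (otherwise removing it would increase the density), so no peeling or passage to a subset is needed; but your contradiction-via-peeling argument is equally valid and equally standard.
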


\section{Fast implementation of $\greedy$}\label{sec:fast}
  In this section, we present a fast algorithm $\lgreedy$
  that runs in near-linear time for constant $p > 1$ that is effectively a faster implementation
  of $\greedy$. 
  
    The naive implementation of $\greedy$ dynamically maintains a data
    structure of the values $f_p(v\mid V - v)$ for all $v$ subject to vertex deletions in
    overall time $O(mn)$. The idea behind our near-linear time implementation of 
    $\greedy$ is simple: we can dynamically maintain a $(1+\eps)$-approximation 
    to the values $f_p(v \mid V-v)$ in $\tilde{O}(\frac{pm}{\eps})$ time while
    only losing a factor of $1 - \eps$ in the approximation ratio.
    To dynamically maintain such a data structure, we 
    use \emph{approximate values} of vertex degrees to compute a proxy for $f_p(v\mid V-v)$
    and will only update $f_p(v \mid V-v)$ when an approximate vertex degree changes. 
    
    We show that if we keep $(1+\frac{\eps}{p})$-approximations of degrees, then this leads
    to a $(1+\eps)$-approximation of $f_p(v\mid V-v)$ values. In particular, note 
    that for $p \in \R \setminus \{0\}$, for any vertex 
    $v \in V$ and $S \subseteq V$ with $v \in S$, we have
    \begin{align}
      f_p(v \mid S - v)
      &=
      \sum_{u \in S} d_S(u)^p - \sum_{u \in S - v} d_{S-v}(u)^p\label{eq:marginal-0}\\
      &=
      d_S(v)^p + \sum_{u \in N(v) \cap S} d_S(u)^p - (d_S(u) - 1)^p,\label{eq:marginal}
    \end{align}
    where~(\ref{eq:marginal}) follows from~(\ref{eq:marginal-0}) as vertices not 
    incident to $v$ have the same degree in $G[S]$ and $G[S-v]$.
    Our algorithm will always exactly update the first term in~(\ref{eq:marginal}) but will only
    update the sum when approximate degrees change.

    We give pseudocode for this algorithm in Figure~\ref{fig:lazy-greedy}. 
    It is important to note that for $p > 0$,
    maximizing $M_p(S)$ is equivalent to maximizing $\frac{f_p(S)}{\abs{S}}$. This implies
    that Line~\ref{alg:greedy-p-ret} of $\greedy$ and Line~\ref{alg:lazy-greedy-p-ret} of
    $\lgreedy$ are equivalent for $p > 0$.
      
    \begin{figure}[t]
      \begin{framed}
          $\lgreedy(G = (V, E), \eps\ge0)$\
        \begin{algorithmic}[1]
          \STATE $S_1 \gets V$
          \STATE $D[v] \gets d_G(v)$ for all $v \in V$ 
          \STATE $D'[v] \gets D[v]$ for all $v \in V$ 
          \STATE $A[v] \gets D[v]^p + \sum_{u \in N(v)} D[u]^p - (D[u]-1)^p$,  $\forall v \in V$ 
            \COMMENT{$A$ stored as a min-heap}
          \FOR{$i=1$ \TO $n-1$}
            \STATE $v_i \gets \argmin_{v \in S_i} A[v]$
            \STATE $S_{i+1} \gets S_i - v_i$; remove $v_i$ from $A$
            \FOR{$u \in N(v_i) \cap S_{i+1}$}
              \STATE $A[u] \gets A[u] - (D[u]^p - (D[u]-1)^p)$ 
              \STATE $A[u] \gets A[u] - (D'[v_i]^p - (D'[v_i]-1)^p)$
              \STATE $D[u] \gets D[u] - 1$  
              \IF{$D'[u] > (1 + \frac{\eps}{p}) D[u]$}\label{line:lazy-greedy-13}
                \FOR{$w \in N(u) \cap S_{i+1}$}\label{line:lazy-greedy-14}
                  \STATE $A[w] \gets A[w] - (D'[u]^p - (D'[u]-1)^p)$ 
                  \STATE $A[w] \gets A[w] + (D[u]^p - (D[u]-1)^p)$ 
                \ENDFOR
                \STATE $D'[u] \gets D[u]$                
              \ENDIF
            \ENDFOR
          \ENDFOR
          \RETURN{$\argmax_{S_i} \frac{f_p(S_i)}{\abs{S_i}}$}\label{alg:lazy-greedy-p-ret}
        \end{algorithmic}
      \end{framed}
      \caption{$(\frac{1-\eps}{p+1})^{1/p}$-approximation for \pdsg for $p \ge 1$ that runs in 
        $\tilde{O}(\frac{p m}{\eps})$ time.}
      \label{fig:lazy-greedy}    
    \end{figure}

    \begin{theorem}\label{thm:fast}
      Let $p \ge 1$, $G = (V,E)$ be an undirected graph, and let $\eps \in (0,\frac{1}{2}]$. Then
      $\lgreedy(G, \eps)$ is a $(\frac{1-\eps}{p+1})^{1/p}$-approximation to 
      \pdsg with an $O\left(\frac{p m \log ^2 n}{\eps}\right)$ running time.
    \end{theorem}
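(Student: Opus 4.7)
The plan is to prove Theorem~\ref{thm:fast} by separately establishing the $\left(\frac{1-\eps}{p+1}\right)^{1/p}$-approximation ratio and the $O(pm\log^2 n/\eps)$ running time. For the approximation ratio, the first step is to identify the invariant maintained by $\lgreedy$: inspecting the updates shows that at the start of iteration $i$ with current set $S_i$,
\begin{align*}
A[v] \;=\; D[v]^p + \sum_{u \in N(v) \cap S_i} \bigl(D'[u]^p - (D'[u]-1)^p\bigr),
\end{align*}
where $D[v] = d_{S_i}(v)$ is always exact and $D[u] \le D'[u] \le (1+\eps/p)\, D[u]$ thanks to the check on Line~\ref{line:lazy-greedy-13}. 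Comparing with~(\ref{eq:marginal}), I would then establish the two-sided bound
\begin{align*}
f_p(v \mid S_i - v) \;\le\; A[v] \;\le\; (1+\eps)\, f_p(v \mid S_i - v).
\end{align*}
The lower bound follows because $x \mapsto x^p - (x-1)^p$ is nondecreasing on $[1,\infty)$ for $p \ge 1$, so any stale $D'[u] \ge D[u]$ only inflates each summand. For the upper bound I would apply the mean value theorem termwise, writing each $x^p - (x-1)^p$ as $p\xi^{p-1}$ for $\xi$ in the corresponding unit interval, bounding the ratio using $D'[u] \le (1+\eps/p) D[u]$ together with $(1+\eps/p)^p \le e^\eps \le 1+2\eps$ for $\eps \le 1/2$, and separately handling the corner $D[u] = 1$ (denominator equals $1$, numerator at most $(1+\eps/p)^p$); a one-time rescaling $\eps \leftarrow \eps/c$ by an absolute constant absorbs the slack.

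Given this invariant, the peeled vertex $v_i = \argmin_{v \in S_i} A[v]$ satisfies $f_p(v_i \mid S_i - v_i) \le A[v_i] \le A[v^*] \le (1+\eps)\min_{v \in S_i} f_p(v \mid S_i - v)$. Feeding this perturbed greedy step into the analysis of $\greedy$ in~\cite{vbk-21} inflates the $(p+1)$ factor by $(1+\eps)$, yielding the ratio $\frac{1}{(1+\eps)(p+1)}$ for $\rho_p$; taking $p$-th roots and using $(1+\eps)^{-1} \ge 1-\eps$ then gives the claimed $\left(\frac{1-\eps}{p+1}\right)^{1/p}$ bound for $M_p$.

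For the running time, I would charge work to three sources: (i) the $O(n)$ outer-loop heap operations, costing $O(n \log n)$ in total; (ii) work done at each incident edge when one endpoint is peeled, costing $O(\log n)$ per edge and totalling $O(m \log n)$; and (iii) work in $D'$-refresh events on Lines~\ref{line:lazy-greedy-14}--17. Since a refresh of $D'[u]$ is triggered only after $D[u]$ has dropped by a multiplicative factor of at least $1+\eps/p$, each vertex $u$ incurs at most $\log_{1+\eps/p} d_G(u) = O\bigl(\tfrac{p \log n}{\eps}\bigr)$ refreshes, and each refresh performs $O(\log n)$ heap work per remaining neighbor. Summing $d_G(u)$ times the number of refreshes over all $u$ yields $O(pm \log^2 n/\eps)$, which dominates the other two terms.

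The main obstacle I foresee is the small-degree case of the upper bound $A[v] \le (1+\eps) f_p(v \mid S_i - v)$: the additive $-1$ inside $(D[u]-1)^p$ breaks a clean multiplicative scaling, so a naive MVT ratio of the form $(D'[u]/(D[u]-1))^{p-1}$ blows up near $D[u] = 1$. I plan to handle this via a case split at $D[u] = 1$ combined with the one-time rescaling of $\eps$ mentioned above. Once this calculation is pinned down, the remainder of the proof is modular: the approximation ratio reduces directly to the analysis of~\cite{vbk-21}, and the refresh-accounting for the running time is standard.
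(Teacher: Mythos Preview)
Your overall architecture matches the paper's proof exactly: establish the invariant for $A[v]$, prove a two-sided sandwich $f_p(v\mid S_i-v)\le A[v]\le(1+\eps)f_p(v\mid S_i-v)$, feed the resulting approximate-minimum guarantee into the analysis of $\greedy$ (this is the paper's Lemma~\ref{lem:approx-peel}), and bound the running time by counting $D'$-refresh events (Lemma~\ref{lem:fast-time}). Your running-time charging and your lower bound on $A[v]$ are both fine; in fact you are explicit about the lower bound, which the paper uses only tacitly.

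The gap is in your upper bound. Writing $g(x)=x^p-(x-1)^p$ and applying the MVT as you propose gives $g(D'[u])/g(D[u])=(\xi_1/\xi_2)^{p-1}$ with $\xi_1\in(D'[u]-1,D'[u])$ and $\xi_2\in(D[u]-1,D[u])$, so the only control you have is $\xi_1/\xi_2\le D'[u]/(D[u]-1)\le(1+\eps/p)\,D[u]/(D[u]-1)$. The factor $D[u]/(D[u]-1)$ is not a problem only at $D[u]=1$: already at $D[u]=2$ it equals $2$, and the resulting bound on the ratio is of order $2^{p-1}$, which depends on $p$ and cannot be absorbed by rescaling $\eps$ by an absolute constant. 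So the ``case split at $D[u]=1$ plus one-time rescaling'' plan does not close the argument.

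The paper sidesteps this via Lemma~\ref{lem:fast-ind-approx}, which shows directly that for $\alpha\in[1,1+\eps/p]$ one has $g(\alpha d)\le(1+\eps)\,g(d)$ for every integer $d\ge1$. An equivalent (and perhaps more transparent) way to see it: since $\alpha\ge1$ we have $\alpha(d-1)\le\alpha d-1$, hence
\[
g(\alpha d)=(\alpha d)^p-(\alpha d-1)^p\le(\alpha d)^p-(\alpha(d-1))^p=\alpha^p\,g(d)\le(1+\eps/p)^p\,g(d)\le e^\eps\,g(d),
\]
which is uniformly valid for all $d\ge1$ (including $d=1$, where $g(1)=1$ and $g(\alpha)\le\alpha^p$). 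Replacing your MVT step with this inequality makes the rest of your plan go through unchanged.
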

    
    Note that when $\eps = 0$, the approximation guarantee of Theorem~\ref{thm:fast} 
    matches that of $\greedy$~\cite{vbk-21}. $\lgreedy$ is exactly $\greedy$ when $\eps = 0$
    and thus the running time is $O(mn)$ in this case.
      
    To prove Theorem~\ref{thm:fast}, we first observe that at each iteration of 
    $\greedy$, if $S$ is the current vertex set and we return a vertex $v$ satisfying
    \begin{equation}\label{eq:apx}
      f_p(v \mid S - v)
      \le
      (1+\eps)\min_{u \in S} f_p(u\mid S - u),
    \end{equation}
    we only lose a $(1 - \eps)$-multiplicative factor in the approximation ratio.
    We show this in the following lemma. Note that the proof only requires a slight
    modification to the proof of Theorem 3.1 in~\cite{cqt-22} and is therefore
    included in the appendix for the sake of completeness 
    (see Appendix~\ref{sec:fast-proofs-app}).
      
    \begin{lemma}\label{lem:approx-peel}
      Assume $p \ge 1$ and $\eps \in [0,1]$.
      Suppose we greedily peel vertices with the update rule in~(\ref{eq:apx})
      (i.e.\ modify update rule, Line~(\ref{line:greedy-3}), of $\greedy$ with the update 
      rule in~(\ref{eq:apx})). Then the output is a $(\frac{1-\eps}{p + 1})^{1/p}$-approximation
      for \pdsg.
    \end{lemma}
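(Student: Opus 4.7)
The plan is to adapt the analysis of the exact greedy peeling algorithm (Theorem~3.1 of~\cite{cqt-22}) by tracking the effect of substituting a $(1+\eps)$-approximate minimizer for the true minimizer; this loses exactly a multiplicative factor of $\tfrac{1}{1+\eps}\ge 1-\eps$ in the final density guarantee.

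I would start with the local-optimality observation: for every $u\in S_p^*$, we have $f_p(u\mid S_p^*-u)\ge \rho_p^*:=f_p(S_p^*)/\abs{S_p^*}$, since otherwise $\rho_p(S_p^*-u)>\rho_p^*$, contradicting the optimality of $S_p^*$. Let $j$ be the first iteration at which the peeled vertex $v_j$ lies in $S_p^*$; just before iteration $j$ we have $S_j\supseteq S_p^*$. Because $p\ge 1$ implies $f_p$ is supermodular and $S_p^*-v_j\subseteq S_j-v_j$, we obtain $f_p(v_j\mid S_j-v_j)\ge f_p(v_j\mid S_p^*-v_j)\ge \rho_p^*$. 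Rearranging the approximate peel rule~(\ref{eq:apx}) then yields $\min_{u\in S_j} f_p(u\mid S_j-u)\ge f_p(v_j\mid S_j-v_j)/(1+\eps)\ge \rho_p^*/(1+\eps)$, so summing over $u\in S_j$ gives $\sum_{u\in S_j} f_p(u\mid S_j-u)\ge \abs{S_j}\rho_p^*/(1+\eps)$.

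For the matching upper bound $\sum_{u\in S_j} f_p(u\mid S_j-u)\le (p+1)f_p(S_j)$, I would expand
$f_p(u\mid S_j-u)=d_{S_j}(u)^p+\sum_{w\in N(u)\cap S_j}\bigl(d_{S_j}(w)^p-(d_{S_j}(w)-1)^p\bigr)$,
use the mean-value-theorem bound $x^p-(x-1)^p\le p\,x^{p-1}$ (valid for $p\ge 1$), and swap the order of summation so that $\sum_{u\in S_j}\sum_{w\in N(u)\cap S_j} d_{S_j}(w)^{p-1}=\sum_{w\in S_j} d_{S_j}(w)^p=f_p(S_j)$. Combining the two inequalities gives $\rho_p(S_j)\ge \rho_p^*/\bigl((p+1)(1+\eps)\bigr)$; applying $1/(1+\eps)\ge 1-\eps$ and raising to the $1/p$ power converts this to $M_p(S_j)\ge \bigl(\tfrac{1-\eps}{p+1}\bigr)^{1/p} M_p^*$. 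Since the algorithm returns the $S_i$ of maximum $M_p$-value, its output attains at least this density.

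The main subtlety is keeping straight the direction of the supermodularity inequality when transferring marginals from the smaller set $S_p^*$ to the larger set $S_j$; once that step is in place, the approximate rule~(\ref{eq:apx}) plugs in transparently, and the rest is an unchanged copy of the exact analysis in~\cite{cqt-22,vbk-21}.
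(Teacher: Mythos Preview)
Your proof is correct and follows essentially the same approach as the paper's: both pick the first iteration $j$ at which a vertex of $S_p^*$ is peeled, use supermodularity to transfer the marginal lower bound $\rho_p^*$ from $S_p^*$ to $S_j$, combine the approximate peel rule~(\ref{eq:apx}) with the inequality $\sum_{u\in S_j} f_p(u\mid S_j-u)\le (p+1)f_p(S_j)$, and finish with $1/(1+\eps)\ge 1-\eps$. The only cosmetic difference is that you spell out the supermodularity step and the $(p+1)$ bound explicitly (via the mean-value bound and swapping summation), whereas the paper cites these as facts from~\cite{cqt-22}.
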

      
    The algorithm $\lgreedy$ dynamically maintains an approximate value
    of $f_p(v \mid V -v)$ subject to vertex deletions for each vertex $v$.     
    $\lgreedy$ specifically uses approximate vertex degrees to estimate
    the sum in~(\ref{eq:marginal}). We use the following 
    lemma to show that approximate degrees suffice in maintaining a close approximation 
    to $\sum_{u \in N(v) \cap S} d_S(v)^p - (d_S(v) - 1)^p$.
    \begin{lemma}\label{lem:fast-ind-approx}
      Let $\eps \in [0,\frac{1}{2}]$, 
      $d$ be an integer and $p \ge 1$. Let $\alpha \in [0, 1 + \frac{\eps}{p}]$. Then 
      $
        (\alpha d)^p - (\alpha d - 1)^p 
        \le 
        (1 +\eps)(d^p - (d-1)^p).
      $
    \end{lemma}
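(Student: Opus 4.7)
My plan is to first reduce to the extremal value of $\alpha$ and then proceed by a scaling argument. The function $\alpha \mapsto (\alpha d)^p - (\alpha d - 1)^p$ is non-decreasing in $\alpha$ for $p \ge 1$, since its derivative in $\alpha$ is $p d \, [(\alpha d)^{p-1} - (\alpha d - 1)^{p-1}] \ge 0$ (the map $x \mapsto x^{p-1}$ is non-decreasing on $\R_{\ge 0}$ when $p \ge 1$). Hence the left-hand side attains its maximum on $[0, 1+\eps/p]$ at $\alpha = 1+\eps/p$, and it suffices to prove the inequality at this single value.

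Next, I would establish the clean algebraic bound $(\alpha d)^p - (\alpha d - 1)^p \le \alpha^p \, (d^p - (d-1)^p)$ for $\alpha \ge 1$. This follows from the identity
\[
\alpha^p\bigl(d^p - (d-1)^p\bigr) - \bigl[(\alpha d)^p - (\alpha d - 1)^p\bigr]
\;=\; (\alpha d - 1)^p - \bigl(\alpha(d-1)\bigr)^p,
\]
together with the observation that $\alpha \ge 1$ gives $\alpha d - 1 \ge \alpha d - \alpha = \alpha(d-1) \ge 0$, so the right-hand side is non-negative by monotonicity of $x \mapsto x^p$ on $[0,\infty)$.

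Combining these two reductions leaves the task of bounding $\alpha^p = (1+\eps/p)^p$ by $1+\eps$, and this is the step I expect to be the principal obstacle: Bernoulli's inequality yields $(1+\eps/p)^p \ge 1+\eps$ in the wrong direction. My plan to recover the lost slack is to refine the scaling bound via the integral representation $x^p - (x-1)^p = \int_{x-1}^x p t^{p-1}\, dt$ and the substitution $t = (1+\eps/p)\, s$, which rewrites the left-hand side as $(1+\eps/p)^p \int_{d - p/(p+\eps)}^d p s^{p-1}\, ds$. The new integration domain has length $p/(p+\eps) < 1$, giving slack of exactly one factor of $(1+\eps/p)$ that, together with a careful use of $\eps \le 1/2$ and the integrality of $d$, should push the overall scaling down from $(1+\eps/p)^p$ to $1+\eps$. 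The delicate quantitative balancing of these competing factors --- in particular ensuring the cancellation does not degrade at small values of $d$ --- is where the real technical work lies.
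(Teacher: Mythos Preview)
Your first reduction (to $\alpha=1+\eps/p$ by monotonicity of the left-hand side in $\alpha$) is exactly what the paper does. After that the two approaches diverge. The paper does not pass through your scaling bound; instead it shows directly that the function
\[
\eps\;\longmapsto\;\frac{(\alpha d)^p-(\alpha d-1)^p}{e^{\eps}}\qquad(\alpha=1+\eps/p)
\]
is nonincreasing, via a short derivative computation that reduces to the one-line inequality
$\dfrac{(\alpha d)^p-(\alpha d-1)^p}{(\alpha d)^{p-1}-(\alpha d-1)^{p-1}}\ge \alpha d\ge d$
(factor out $(\alpha d)^{p-1}$ and use $x^{p}\le x^{p-1}$ for $x\in[0,1]$). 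Evaluating at $\eps=0$ gives $(\alpha d)^p-(\alpha d-1)^p\le e^{\eps}\bigl(d^p-(d-1)^p\bigr)$, and the proof finishes with $e^{\eps}\le 1+2\eps$ on $[0,\tfrac12]$. (So the paper's own argument in fact delivers the constant $1+2\eps$, not the stated $1+\eps$; this slip is immaterial downstream since only a $(1+O(\eps))$ factor is needed.)

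Your proposed refinement has a genuine gap. The hoped-for slack from the shorter integration domain goes the wrong way: since $s\mapsto s^{p-1}$ is nondecreasing on $[d-1,d]$, the sub-interval $[d-1/\alpha,\,d]$ carries \emph{at least} (not at most) a $1/\alpha$ fraction of $\int_{d-1}^{d}ps^{p-1}\,ds$, so your substitution yields the lower bound $(\alpha d)^p-(\alpha d-1)^p\ge \alpha^{p-1}\bigl(d^p-(d-1)^p\bigr)$ rather than any improved upper bound. More decisively, the target constant $1+\eps$ is simply unattainable in general: at $d=1$, $p=3$, $\eps=\tfrac12$ one has $\alpha=7/6$ and $(\alpha d)^p-(\alpha d-1)^p=\tfrac{342}{216}=\tfrac{19}{12}>\tfrac{3}{2}=(1+\eps)\bigl(1^3-0^3\bigr)$. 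Hence no amount of balancing along your outlined route can close the gap to $1+\eps$; the clean achievable constant is $e^{\eps}$ (hence $1+2\eps$), which is precisely what the paper's monotonicity-in-$\eps$ trick produces without detours.
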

    The proof of the preceding lemma follows easily by essentially arguing that
    $\frac{(\alpha d)^p - (\alpha d - 1)^p }{1+\eps}$ is decreasing in $\eps$.

    The following lemma handles the issue of running time. Note that the statement
    also holds for $p \in (0,1)$.
    \begin{lemma}\label{lem:fast-time}
      Fix $\eps \in (0,1]$. Suppose $p > 0$. $\lgreedy(G,\eps)$ runs in 
      $O\left(m\log n + \frac{m\log^2 n}{\log(1 + \frac{\eps}{p})}\right)$ time.
      If $p \ge \eps$, this simplifies to
      $O\left(\frac{pm\log^2 n}{\eps}\right)$ time.  
      \manuel{Can cut a line here if necessary.}
    \end{lemma}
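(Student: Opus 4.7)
The plan is to decompose the running time of $\lgreedy(G,\eps)$ into three parts and bound each separately: (a) initialization; (b) the ``cheap'' per-iteration work---one extract-min at each step plus the $O(1)$-arithmetic updates to $A[u]$ that occur for every neighbor $u$ of the peeled vertex $v_i$ (the lines preceding the \textbf{if} on line~\ref{line:lazy-greedy-13}); and (c) the ``expensive'' refresh block (starting at line~\ref{line:lazy-greedy-14}) that fires only when $D'[u] > (1+\eps/p)\,D[u]$.

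Part (a) is immediate: each initial $A[v]$ is a sum over the neighbors of $v$, so all initial values can be computed in $O(\sum_v d_G(v)) = O(m)$ time, and the min-heap is built in $O(n)$ on top of this.

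For part (b), I observe that each edge $\{u,v_i\} \in E$ triggers such cheap updates at most once over the whole execution (namely, when the first of its endpoints is peeled), and each update performs $O(1)$ arithmetic together with one heap decrease-key at cost $O(\log n)$. Combined with the $n{-}1$ extract-mins (also $O(\log n)$ each), this totals $O(m\log n + n\log n) = O(m\log n)$.

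Part (c) is the heart of the argument. I fix a vertex $u$ and bound the number of times its refresh block fires. Immediately after a refresh we have $D'[u] = D[u]$; the next firing requires the (monotonically non-increasing) true degree $D[u]$ to drop below $D'[u]/(1+\eps/p)$, i.e., by a multiplicative factor of at least $1+\eps/p$. Since $D[u] \le n-1$ initially and no further refresh for $u$ can occur once $D[u]$ reaches $0$, the number of refreshes for $u$ is at most $O\!\left(\frac{\log n}{\log(1+\eps/p)}\right)$. Each firing iterates over $|N(u) \cap S_{i+1}| \le d_G(u)$ neighbors and performs $O(\log n)$ heap work per neighbor, giving a per-vertex cost of $O\!\left(\frac{d_G(u)\log^2 n}{\log(1+\eps/p)}\right)$; summing over $u$ yields $O\!\left(\frac{m\log^2 n}{\log(1+\eps/p)}\right)$.

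Adding (a), (b), and (c) gives the claimed $O(m\log n + m\log^2 n/\log(1+\eps/p))$. For the simplification when $p \ge \eps$, so that $\eps/p \in (0,1]$, I use $\log(1+x) \ge x/2$ for $x \in (0,1]$ to get $1/\log(1+\eps/p) = O(p/\eps)$, which absorbs the $m\log n$ term and produces $O(pm\log^2 n/\eps)$. The main subtlety is (c): one must be careful to charge the per-firing cost against $d_G(u)$ (a valid upper bound on the current neighborhood size uniformly over all firings), and to verify that each firing genuinely certifies a geometric drop in $D[u]$ by a factor of $1+\eps/p$, so that the refresh count per vertex is indeed $O(\log_{1+\eps/p} n)$.
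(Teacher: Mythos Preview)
Your proof is correct and follows essentially the same approach as the paper: decompose into initialization, the cheap per-edge updates when a vertex is peeled, and the expensive refresh block, then bound the number of refreshes per vertex by $O(\log_{1+\eps/p} n)$ via the geometric-drop argument. Your explicit use of $\log(1+x)\ge x/2$ for the simplification is slightly more detailed than the paper's treatment, but the structure and key ideas are identical.
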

    The running time of $\lgreedy$ is dominated by the inner for loop on 
    Line~(\ref{line:lazy-greedy-14}). The proof of Lemma~\ref{lem:fast-time} proceeds 
    by recognizing that this for loop is only ever called when the approximate degree
    of a vertex $u$ exceeds $(1 + \frac{\eps}{p})D[u]$, where $D[u]$ is the exact current
    degree. Thus, the maximum number of times we update any vertex is 
    $O(\log_{1 + \frac{\eps}{p}}(n)) = O(\frac{p \log n}{\eps})$. Since the for loop
    only requires $O(d_G(u) \log n)$ time to run, the total time spent on a single vertex
    $u$ in this for loop is $O(d_G(u) \cdot \frac{p \log^2n}{\eps})$. Summing over all
    vertices, this leads to the desired running time.
    
    The analysis above for $\eps > 0$ implies a better analysis for the case of $\eps = 0$.
    If $\eps = 0$, the condition on Line~(\ref{line:lazy-greedy-13}) is always
    satisfied. As noted above, the inner for loop always requires $O(d_G(u)\log n)$ time
    to run. Further, as $u$ can only be the neighbor of a peeled vertex at most $d_G(u)$
    times, the total amount of time spent on Line~(\ref{line:lazy-greedy-14}) for vertex $u$ is 
    $O(d_G(u)^2 \cdot \log n)$. This leads to an overall running time of 
    $O(\sum_{v \in V} d_G(v)^2 \log n)$, which is $\tilde{O}(mn)$. 
    
    We leave the proof of Theorem~\ref{thm:fast} to the appendix (see 
    Appendix~\ref{sec:fast-proofs-app}).

\section{Approximation algorithms}\label{sec:apx}
  We give two new approximation algorithms for $p$-mean DSG 
  when $p \in (-\infty, 1)$. The
  algorithms rely on the fact that $S_1^*$ and $S_{-\infty}^*$ can be found in 
  polynomial time. We show each of these subgraphs are a 
  $\frac{1}{2}$-approximation 
  for this regime of $p$. We complement this result with a family of graphs where
  $\frac{1}{2}$ is the best one can do for each algorithm.
  We also briefly discuss our iterative heuristics for both $p > 1$ and $p < 1$ 
  in Section~\ref{sec:iter-heuristics}.

  \subsection{$1/2$-approximation via the maxcore}
  \label{sec:apx-maxcore}
    Our algorithm $\sgreedy$ that leverages the standard greedy peeling algorithm for the 
    maxcore is given in Figure~\ref{fig:simple-greedy-p}. 
    The algorithm is exactly Charikar's greedy peeling algorithm when $p = 1$.

    \begin{figure}[t]
      \begin{framed}
          $\sgreedy(G = (V, E))$\
        \begin{algorithmic}[1]
          \STATE $S_1 \gets V$
          \FOR{$i=1$ \TO $n-1$}
            \STATE $v_i \gets \argmin_{v \in S_i} d_{S_i}(v)$
            \STATE $S_{i+1} \gets S_i - v_i$
          \ENDFOR
          \RETURN{$\argmax_{S_i} M_p(S_i)$}
        \end{algorithmic}
      \end{framed}
      \caption{$\frac{1}{2}$-approximation via greedy peeling for \pdsg where $p < 1$.}
      \label{fig:simple-greedy-p}
    \end{figure}    
    
    \begin{theorem}
      Let $p \in [-\infty, 1]$. Let $S_{out} = \sgreedy(G, p)$. Then 
      $M_p(S_{out}) \ge \frac{1}{2} M_p^*$.
    \end{theorem}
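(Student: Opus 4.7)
The plan is to identify the maxcore $S_{-\infty}^*$ as a ``witness'' subgraph in the sequence produced by $\sgreedy$, and then chain together the two propositions already proved in the preliminaries. First I would observe that $\sgreedy$ peels at each step a vertex of minimum degree in the current induced subgraph, which is exactly the peeling rule used by the standard algorithm for computing the degeneracy. Hence, by Proposition~\ref{prop:greedy-peeling}, the maxcore $S_{-\infty}^*$ appears among the sets $S_1,\ldots,S_n$ constructed by $\sgreedy$. Since the algorithm returns $\argmax_{S_i} M_p(S_i)$, we immediately get $M_p(S_{out}) \ge M_p(S_{-\infty}^*)$, so it suffices to prove $M_p(S_{-\infty}^*) \ge \tfrac{1}{2} M_p^*$.

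The second step is a short chain of inequalities applied to the set $S_{-\infty}^*$. Since $p \ge -\infty$, Proposition~\ref{prop:means} applied to $S_{-\infty}^*$ gives
\[
  M_p(S_{-\infty}^*) \ge M_{-\infty}(S_{-\infty}^*) = M_{-\infty}^*.
\]
Proposition~\ref{prop:core-to-dsg} then supplies $M_{-\infty}^* \ge \tfrac{1}{2} M_1^*$ and, because $p \le 1$, also $M_p^* \le M_1^*$. Combining,
\[
  M_p(S_{out}) \ge M_p(S_{-\infty}^*) \ge M_{-\infty}^* \ge \tfrac{1}{2} M_1^* \ge \tfrac{1}{2} M_p^*,
\]
which is the claim. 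The edge cases fit naturally: at $p=-\infty$ the bound is actually attained with factor $1$ since $S_{out}$ contains the maxcore itself; at $p=1$ the statement reduces to Charikar's classical $\tfrac{1}{2}$-approximation for \dsg, which is implicit in the inequality $M_{-\infty}^* \ge \tfrac{1}{2} M_1^*$.

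There is really no hard step here; the entire proof is bookkeeping on top of Propositions~\ref{prop:means} and~\ref{prop:core-to-dsg}. The only place one must be a little careful is ensuring that ties in the peeling rule do not break the argument, but since Proposition~\ref{prop:greedy-peeling} gives the maxcore property for \emph{any} min-degree peeling order, this is immediate. The argument explains why the same classical peeling algorithm that works for \dsg and for maxcore also yields a nontrivial guarantee for the entire range $p \in [-\infty,1]$: the maxcore is simultaneously a constant-factor witness for every such $p$.
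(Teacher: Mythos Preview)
Your argument is correct and matches the paper's proof essentially line for line: both invoke Proposition~\ref{prop:greedy-peeling} to locate a suffix $S_i$ achieving $M_{-\infty}^*$, apply Proposition~\ref{prop:means} to pass from $M_{-\infty}(S_i)$ to $M_p(S_i)$, and finish with Proposition~\ref{prop:core-to-dsg}. The only cosmetic difference is that you route through $M_1^*$ explicitly, whereas the paper uses the direct inequality $\tfrac{1}{2}M_p^* \le M_{-\infty}^*$ already contained in Proposition~\ref{prop:core-to-dsg}.
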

    \begin{proof}
      By Proposition~\ref{prop:greedy-peeling}, there exists $i \in [n]$ with
      $M_{-\infty}(S_i) = M_{-\infty}^*$. By Proposition~\ref{prop:means}, 
      $M_{-\infty}(S_i) \le M_p(S_i)$ and by choice of $S_{out}$, we have
      $M_p(S_i) \le M_p(S_{out})$. Therefore,  $M_{-\infty}^* \le M_p(S_{out})$.
      Finally, by Proposition~\ref{prop:core-to-dsg},  we have 
      $\frac{1}{2} M_p^* \le M_{-\infty}^* $. Combining these two statements,
      $\frac{1}{2} M_p^* \le M_p(S_{out})$. This concludes the proof.
    \end{proof}
    
    \begin{remark}
      In~\cite{vbk-21}, when $p > 1$, they show that $\sgreedy$ can perform
      arbitrarily poorly
      by constructing the following graph: the disjoint union of the 
      complete bipartite graph $K_{d,D}$ with $d$ vertices on one side and $D$ 
      vertices on the other and $r$ cliques of size $d+2$ where $d \ll D$.
      $\sgreedy$ first peels all of the vertices in $K_{d,D}$ then peels all of the cliques. 
      It is not hard to show that $K_{d,D}$ is the optimal solution with density proportional
      to $(dD^{p-1})^{1/p}$ and the highest density suffix $\sgreedy$ finds is the entire 
      graph with density proportional to $D^{(p-1)/p}$. Then $d^{-1/p}$ is the best 
      approximation $\sgreedy$ can achieve. This is a bad example is because
      the high degree vertices in $K_{d,D}$ are not prioritized and these vertices 
      contribute significantly to the density of solutions in this graph for $p > 1$.
    \end{remark}

  \subsection{$1/2$-approximation via the $1$-mean densest subgraph}
  \label{sec:apx-density}
    We analyze the algorithm that simply returns the $1$-mean densest subgraph.
    
    \begin{theorem}
      Let $p \in [-\infty, 1]$. Recall $S_1^* = \argmax_{S \subseteq V} M_1(S)$. We have
      $M_p(S_1^*) \ge \frac{1}{2} M_p^*$.
    \end{theorem}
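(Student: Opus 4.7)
The plan is to chain together two elementary facts: a classical min-degree property of the densest subgraph $S_1^*$, and the monotonicity properties already stated as Propositions~\ref{prop:means} and~\ref{prop:core-to-dsg}.

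First I would establish (or recall) the folklore property that every vertex $v \in S_1^*$ satisfies $d_{S_1^*}(v) \ge M_1^*/2$. The short argument: if some $v \in S_1^*$ had $d_{S_1^*}(v) < M_1^*/2 = |E(S_1^*)|/|S_1^*|$, then the average degree of $G[S_1^* - v]$, namely $\frac{2(|E(S_1^*)| - d_{S_1^*}(v))}{|S_1^*| - 1}$, would strictly exceed $M_1^*$, contradicting the optimality of $S_1^*$. This immediately yields
\[
  M_{-\infty}(S_1^*) \;=\; \min_{v \in S_1^*} d_{S_1^*}(v) \;\ge\; \tfrac{1}{2} M_1^*.
\]

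Next I would apply Proposition~\ref{prop:means} (with $-\infty \le p$) to the single set $S_1^*$ to get $M_p(S_1^*) \ge M_{-\infty}(S_1^*)$. Combined with the previous inequality this gives $M_p(S_1^*) \ge \tfrac{1}{2} M_1^*$. Finally, since $p \le 1$, Proposition~\ref{prop:core-to-dsg} (or directly Proposition~\ref{prop:means} applied to the optimal $S_p^*$) gives $M_p^* \le M_1^*$. Chaining these,
\[
  M_p(S_1^*) \;\ge\; \tfrac{1}{2} M_1^* \;\ge\; \tfrac{1}{2} M_p^*,
\]
which is the desired bound.

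There is no real obstacle here: the only nontrivial ingredient is the min-degree property of $S_1^*$, which is standard and has a one-line exchange-argument proof. Everything else is a direct invocation of the monotonicity of $p$-means in $p$. Note that this argument actually proves the slightly stronger statement that $M_{-\infty}(S_1^*) \ge \tfrac{1}{2} M_p^*$, meaning that the densest subgraph $S_1^*$ by itself already serves as a $\tfrac{1}{2}$-approximation for \emph{all} $p \le 1$ simultaneously, without any further peeling or postprocessing.
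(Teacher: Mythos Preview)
Your proposal is correct and follows essentially the same route as the paper: first establish $M_{-\infty}(S_1^*)\ge \tfrac{1}{2}M_1^*$ via the standard exchange argument on $S_1^*$, then sandwich using the monotonicity of $p$-means (Propositions~\ref{prop:means} and~\ref{prop:core-to-dsg}) to obtain $M_p(S_1^*)\ge M_{-\infty}(S_1^*)\ge \tfrac{1}{2}M_1^*\ge \tfrac{1}{2}M_p^*$. The paper's proof is identical in structure and in the ingredients used.
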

    \begin{proof}
      We first argue that
      \begin{equation}\label{eq:ks}
        M_{-\infty}(S_1^*) 
        \ge 
        \frac{1}{2} M_1^*.
      \end{equation}
      It suffices to show that $d_{S_1^*}(v) \ge \frac{\abs{E(S_1^*)}}{\abs{S_1^*}}$ for every 
      $v \in S_1^*$. Suppose towards a contradiction that there exists $v \in S_1^*$ such 
      that $d_{S_1^*}(v) < \frac{\abs{E(S_1^*)}}{\abs{S_1^*}}$. Using this and observing
      $\abs{E(S_1^*)} - \abs{E(S_1^*- v)} = d_{S_1^*}(v)$, after rearranging, we have
      $\frac{\abs{E(S_1^*-v)}}{\abs{S_1^* -v}} > \frac{\abs{E(S_1^*)}}{\abs{S_1^*}}$.
      Multiplying through by $2$, we obtain $M_1(S_1^* -v) > M_1(S_1^*)$, contradicting
      the optimality of $S_1^*$.
      
      We then have
      \[
        M_p(S_1^*)
        \ge
        M_{-\infty}(S_1^*)
        \ge
        \frac{1}{2}M_1^*
        \ge
        \frac{1}{2} M_p^*
      \]
      where the first and last inequality are via Proposition~\ref{prop:means} and the 
      second inequality is via~(\ref{eq:ks}). This concludes the proof.
    \end{proof}
    
  \subsection{Tight examples: showing $M_p^* \approx 2M_{-\infty}^*$}
    The goal of this section is to show that $\frac{1}{2}$ is tight for the two approximation
    algorithms from the previous sections. 
    
    \begin{theorem}\label{thm:half-lb}
      There exists a family of graphs showing the algorithms from 
      Section~\ref{sec:apx-maxcore}  and
      Section~\ref{sec:apx-density} are at best a $\frac{1}{2}$-approximation.
    \end{theorem}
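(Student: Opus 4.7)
The plan is to exhibit a family $G_{a,b} := K_{a,b} \sqcup K_{2a}$ (disjoint union of a complete bipartite graph and a clique), parameterized by $a$ and $b = b(a)\to\infty$ sufficiently quickly (for concreteness, $b \ge 2a^2$), and to argue that in the regime $b \to \infty$ followed by $a \to \infty$ both algorithms attain approximation ratio converging to $\tfrac{1}{2}$. I will carry out the ratio calculation for the $S_1^*$ algorithm in full and sketch the adaptation needed for $\sgreedy$.

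For the $S_1^*$ algorithm, first I identify $S_1^*(G_{a,b})$. We have $M_1(K_{a,b}) = \tfrac{2ab}{a+b}$, and $b \ge 2a^2$ gives $M_1(K_{a,b}) > 2a-1 = M_1(K_{2a})$; since $M_1(K_{a,b'})$ is increasing in $b'$ and $M_1$ of a disjoint union lies between those of its components, a routine check confirms $S_1^*(G_{a,b}) = K_{a,b}$. Next I evaluate the $p$-mean at $S_1^*$: with $a$ vertices of degree $b$ and $b$ vertices of degree $a$,
\[
  M_p(K_{a,b})^p \;=\; \frac{a b^p + b a^p}{a + b}.
\]
For any fixed $p < 1$ we have $b^p = o(b)$, so the $b a^p$ term dominates and $M_p(K_{a,b}) \to a$ as $b \to \infty$. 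Trivially $M_p^*(G_{a,b}) \ge M_p(K_{2a}) = 2a - 1$ since $K_{2a}$ is a $(2a-1)$-regular induced subgraph. Therefore
\[
  \frac{M_p(S_1^*)}{M_p^*(G_{a,b})} \;\le\; \frac{M_p(K_{a,b})}{2a - 1} \;\longrightarrow\; \frac{a}{2a - 1} \;\longrightarrow\; \frac{1}{2},
\]
establishing tightness for the $S_1^*$ algorithm.

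For $\sgreedy$ the disjoint union itself is insufficient: min-degree peeling first exhausts the degree-$a$ right side of $K_{a,b}$, then the resulting isolated left side, and leaves $K_{2a}$ intact as a suffix, so the algorithm recovers $M_p(K_{2a}) = M_p^*$. My plan is to modify the construction so that $K_{2a}$ cannot survive peeling long enough to appear as a clean suffix. A natural modification is to attach $K_{2a}$ to the high-degree left side of $K_{a,b}$ via a regular bipartite cover, chosen so that peeling rights of $K_{a,b}$ progressively lowers the effective $G$-degrees of $K_{2a}$ vertices and forces the peeling to enter the clique before isolating it. One then argues that every suffix in this modified peeling has $M_p$ of order $a$, while the induced $K_{2a}$ still certifies $M_p^* \ge 2a - 1$, reproducing the same $a/(2a-1) \to \tfrac{1}{2}$ bound.

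The main technical obstacle is the $\sgreedy$ analysis: unlike the $S_1^*$ argument, which reduces to identifying one global $M_1$-densest ``trap'' subgraph, $\sgreedy$ examines $M_p$ over an entire chain of subsets, so one must design the attachment carefully and track how all vertex degrees evolve over the full peeling order in order to rule out \emph{every} intermediate suffix achieving $M_p$ close to $M_p^*$. It is also possible that the paper instead uses a separate family tailored specifically to $\sgreedy$, whose optimal subgraph is structurally inaccessible to min-degree peeling.
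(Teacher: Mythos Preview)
Your argument for the $S_1^*$ algorithm is correct and in fact simpler than the paper's: the paper builds a single three-component family $H \sqcup r\cdot K_{d+3} \sqcup K_{d+1,D}$ to trap both algorithms simultaneously, whereas your $K_{a,b} \sqcup K_{2a}$ suffices when only the $S_1^*$ algorithm is under consideration.

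The $\sgreedy$ part, however, has a genuine gap, and the modification you sketch cannot be made to work. Attaching $K_{2a}$ to either side of $K_{a,b}$ only \emph{raises} the degrees of the clique vertices; since the clique already has internal min-degree $2a-1$, every $K_{2a}$ vertex retains degree at least $2a-1$ throughout the peeling, which exceeds the degree of whatever low-degree gadget vertices remain. Consequently $K_{2a}$ will always survive as a suffix and $\sgreedy$ will recover it. More generally, no amount of edge-addition can force a high-degeneracy subgraph like a clique to be peeled early.

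The missing idea is the one the paper isolates as Theorem~\ref{thm:comparing-p-to-1}: for every $d$ and $\eps$ there exists a graph $H$ with degeneracy exactly $d$ yet $M_p^*(H) \ge (2-\eps)d$. Such an $H$ has \emph{low} degeneracy, so $\sgreedy$ peels it first; but $H$ is secretly the optimal $p$-mean subgraph. The paper then pads with many copies of $K_{d+3}$ (degeneracy $d+2 > d$) so that once $H$ is gone every remaining suffix has $M_p \approx d+2$, giving the $\tfrac12$ ratio. Constructing such an $H$ is the nontrivial step---it requires a $d$-degenerate graph in which almost all vertices have degree $2d$, which the paper obtains from a degree-sequence characterization of edge-maximal $d$-degenerate graphs. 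Your clique $K_{2a}$ plays the role of the optimum, but its degeneracy equals its $M_p$ value, which is exactly why it cannot be hidden from min-degree peeling.
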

    
    We want to point out that it is a nontrivial task to devise such a family of graphs, and the
    construction we give is not immediately obvious. In the following lemma, we show that
    it suffices in our construction in Theorem~\ref{thm:half-lb} to find a graph $H$ where the 
    degeneracy $M_{-\infty}^*(H)$ is roughly half of $M_p^*(H)$. 
    From Proposition~\ref{prop:core-to-dsg}, we know that $M_p^*(H) \le M_1^*(H)
    \le 2M_{-\infty}^*(H)$, so we are essentially trying to find a family of graphs where these
    inequalities are tight. There is a simple family of graphs, namely 
    complete bipartite graphs where one partition is much larger than the other, where
    $M_1^* \approx 2M_{-\infty}^*$. It is not immediately clear, however, if such a family exists
    where $M_p^* \approx 2M_{-\infty}^*$. One might 
    quickly realize that the path satisfies this condition, which would give a relatively 
    simple construction to prove Theorem~\ref{thm:half-lb}. However, this construction would
    be brittle in the sense that it leaves open the possibility that the problem gets easier
    as the degeneracy of the graph increases. To resolve this concern, in 
    Theorem~\ref{thm:comparing-p-to-1}, we show how one can construct a graph with an 
    arbitrary degeneracy such that $M_p^* \approx 2M_{-\infty}^*$. This result is
    potentially of independent interest as it has some nice graph theoretic connections.

    \begin{figure}
      \centering
      \includegraphics[scale=0.65]{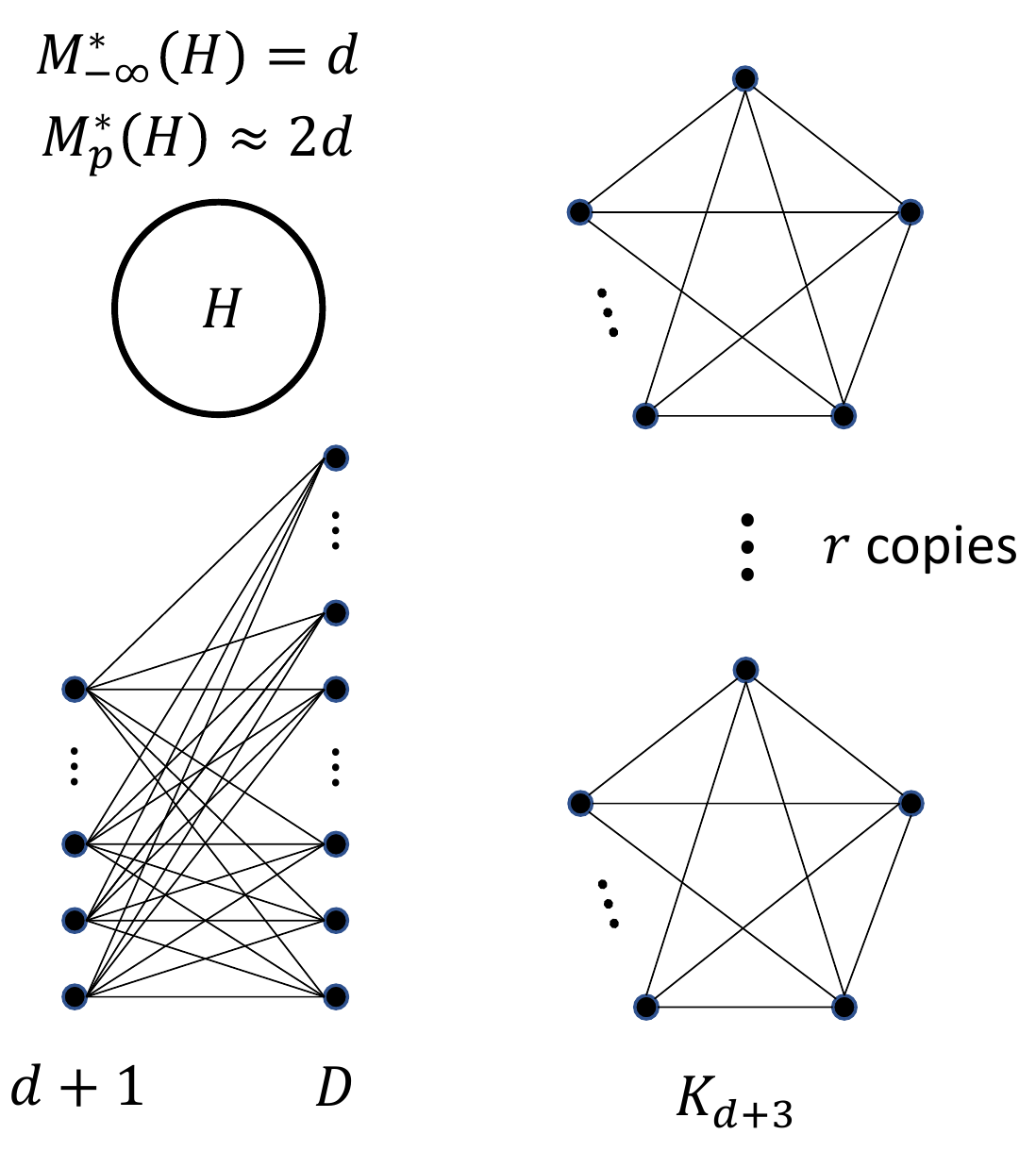}
      \caption{Tight instance for both approximation algorithms from 
      Section~\ref{sec:apx}.}
      \label{fig:tight-ex}
    \end{figure}
    
    \begin{lemma}\label{lem:bad-construct}
      Let $p \in (-\infty,1)$. Fix $\alpha \in [1,2]$ and an integer
      $d \ge 1$. Assume there exists a graph $H$ where
      $d = M_{-\infty}^*(H)$ and $M_p^*(H) \ge \alpha d$.
      Let $G$ be the disjoint union of
      (i) $H$, (ii) $r$ copies of a clique on $d+3$ vertices $K_{d+3}$, and (iii)
      the complete bipartite graph $K_{d+1,D}$ (see Figure~\ref{fig:tight-ex}).  
      Let $n_H$ be the number of vertices in $H$. 
      Assume $n_H^2 = o(r)$, $d^2 = O(D)$ and $dD = o(r)$. 
      
      Then (1) $M_p^*(G) \ge \alpha d$, (2) letting $S_1 = \sgreedy(G)$, we have
      $\displaystyle\lim_{r \to \infty} M_p(S_1) = d+2$ and (3)
      $\displaystyle\lim_{D \to \infty} M_p(M_1^*(G)) = d+1$.

    \end{lemma}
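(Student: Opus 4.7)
The plan is to tackle the three parts in turn, exploiting the disjoint-union structure of $G$. For part (1), the key observation is that $H$ appears as a connected component of $G$, so $G[S] = H[S]$ for every $S \subseteq V(H)$ and all degrees are preserved. Hence $M_p^*(G) \ge M_p^*(H) \ge \alpha d$, which gives (1) immediately.

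For part (2), I would first pin down the peeling order of $\sgreedy$ on $G$. The crucial facts are: (a) as long as $H$ has any vertex remaining, the residual is a subgraph of $H$ and hence has degeneracy at most $d$, so by Proposition~\ref{prop:greedy-peeling} the residual contains a vertex of degree at most $d$; (b) while $K_{d+1,D}$ is untouched its minimum degree is exactly $d+1$; (c) every clique vertex has degree $d+2$ until its clique is first touched. A short case analysis then shows that $\sgreedy$ first peels all of $V(H)$, then all of $V(K_{d+1,D})$ (during which the residual $K_{d+1,D}$ always contains a vertex of degree at most $d+1 < d+2$), and finally processes the $r$ cliques one at a time. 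Once the two non-clique components are exhausted, the suffix is $r$ disjoint copies of $K_{d+3}$ in which every degree equals $d+2$, so $M_p(S_1) \ge d+2$. For the matching upper bound I use $M_p \le M_1$, which holds for $p \le 1$ by Proposition~\ref{prop:means}. For any suffix still containing all $r$ full cliques plus a residue $R \subseteq V(H) \cup V(K_{d+1,D})$,
\begin{equation*}
M_1(\text{suffix}) = \frac{r(d+3)(d+2) + 2|E(R)|}{r(d+3) + |R|},
\end{equation*}
and the assumptions $n_H^2 = o(r)$ and $dD = o(r)$ make both $|E(R)| \le \binom{n_H}{2} + (d+1)D$ and $|R| \le n_H + d + 1 + D$ into $o(r)$ quantities, so the ratio tends to $d+2$. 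The remaining suffixes (some full cliques plus a partial clique $K_{d+3-j}$) have every vertex of degree at most $d+2$, so $M_p \le d+2$ trivially. Combining these bounds yields $\lim_{r\to\infty} M_p(S_1) = d+2$.

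For part (3), I would show that $S_1^*(G) = V(K_{d+1,D})$ for all sufficiently large $D$. Because $G$ is disconnected, $S_1^*(G)$ must lie within a single connected component (the $1$-mean of a disjoint union is a weighted average of the components' $1$-means). Over subgraphs of $H$ one has $M_1 \le 2 M_{-\infty}^*(H) = 2d$ by Proposition~\ref{prop:core-to-dsg}; over subgraphs of any clique, $M_1 \le d+2$; and over subgraphs of $K_{d+1,D}$ a direct computation shows the maximum is attained by the full bipartite graph with value $2(d+1)D/(d+1+D)$, which exceeds $\max(2d, d+2)$ once $D > d(d+1)$ --- a condition satisfied in the limit $D \to \infty$ and consistent with $d^2 = O(D)$. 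Hence $S_1^*(G) = V(K_{d+1,D})$ eventually, and a direct calculation gives
\begin{equation*}
\rho_p(V(K_{d+1,D})) = \frac{(d+1)D^p + D(d+1)^p}{d+1+D} \;\longrightarrow\; (d+1)^p \quad\text{as } D \to \infty,
\end{equation*}
since for $p < 1$ the first numerator term is $o(D)$; taking $1/p$-th powers gives $M_p \to d+1$.

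The main obstacle I anticipate is the peeling-order analysis in part (2): although each individual transition is straightforward, verifying that the residual $K_{d+1,D}$ never produces the global min-degree vertex while $H$ is still non-empty, and tracking exactly which residues $R$ can arise so that the asymptotic bound on $M_1$ applies uniformly across every intermediate suffix, requires careful case work. The remaining limits and computations in (1) and (3) are routine consequences of Propositions~\ref{prop:means} and~\ref{prop:core-to-dsg}.
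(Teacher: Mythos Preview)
Your proposal is correct, and parts (1) and (3) match the paper's argument essentially line for line. Part (2) takes a genuinely different route. The paper bounds $\rho_p(S_i)$ directly: it writes out $f_p$ of each suffix (residual of $H$ or of $K_{d+1,D}$, plus the untouched cliques), uses $n_H^{1+p}=o(r)$ and $dD=o(r)$ to show the non-clique contributions are negligible, and concludes that $\rho_p(S_i)\to(d+2)^p$. You instead sandwich $M_p(S_1)$: the lower bound comes from the all-cliques suffix, which is $(d+2)$-regular and hence has $M_p$ exactly $d+2$; the upper bound invokes $M_p\le M_1$ (Proposition~\ref{prop:means}) and then bounds $M_1$ of every suffix by $d+2+o(1)$, handling the partial-clique suffixes separately via $M_p\le M_\infty\le d+2$. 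Your approach is cleaner in that it works uniformly for all $p<1$ without having to track how the direction of inequalities flips when $p<0$ (the paper's bound $\sum_{v\in H_i}d_{H_i}(v)^p\le n_H^{1+p}$, for instance, needs separate justification for negative $p$). The paper's approach, on the other hand, gives the limit of $\rho_p$ itself rather than just an upper bound, which is slightly more information but not needed for the lemma.
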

    
    In the graph $G$ in Lemma~\ref{lem:bad-construct},
    the optimal solution is $H$. When we run $\sgreedy$ on $G$, $H$ is peeled first
    and the remainder of the graph has density roughly $d+2$ when there
    are sufficiently many cliques.
    For the approximation algorithm from Section~\ref{sec:apx-density}, the densest $1$-mean
    subgraph of $G$ is $K_{d+1, D}$, which satisfies $M_p^*(K_{d+1,D}) \approx d+1$
    for sufficiently large $D$.
    With more effort, one can argue that, for a fixed $p$ and $\eps$, it suffices to take 
    a graph $G$ whose size is polynomial in $D$ and $r$.

    With Lemma~\ref{lem:bad-construct}, all that remains to prove
    Theorem~\ref{thm:half-lb} is to construct a graph $H$ where
    $d = M_{-\infty}^*(H)$ and $M_p^*(H) \ge (2 - \eps)d$ for a given $d$ and $\eps$.    
    
    \begin{theorem}\label{thm:comparing-p-to-1}
      Let $\eps \in (0,1)$ and $d \ge 1$ be an integer.
      Assume $p \in (-1,1)$. For all integers $n \ge \frac{2d}{\eps}$, there 
      exists a graph $G$ on $n$ vertices with degeneracy $d$ such that 
      $M_p^*(G) \ge (1-\eps)2M_{-\infty}^*(G)$.
                    
      For $p \in (-\infty, -1]$, we can obtain the same guarantee if
      $n \ge \frac{\binom{d+1}{2}((1 - \frac{1}{2d})^p - 1) + d(2^{-p} - 1)}
      {\eps\cdot \abs{p}}$.
    \end{theorem}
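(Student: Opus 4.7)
The plan is to take $G = P_n^d$, the $d$-th power of the path on $n$ vertices: $V = \{v_1, \ldots, v_n\}$ with $v_iv_j \in E$ iff $1 \le |i-j| \le d$. The degeneracy of $G$ is exactly $d$: the ordering $v_1, \ldots, v_n$ witnesses degeneracy $\le d$ since each $v_i$ has at most $d$ later neighbors, and the induced subgraph on $\{v_1, \ldots, v_{d+1}\}$ is $K_{d+1}$, forcing degeneracy $\ge d$. The degree sequence is explicit: $d_G(v_i) = \min(i-1, d) + \min(n-i, d)$, so the $n - 2d$ interior vertices have degree exactly $2d$, while the $2d$ boundary vertices carry degrees $d, d+1, \ldots, 2d-1$, each value appearing twice. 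Thus $M_{-\infty}^*(G) = d$, and I would use $S = V$ as the witness subgraph for the lower bound on $M_p^*(G)$.

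With $S = V$, the starting identity is
\[
\frac{f_p(V)}{n} \;=\; (2d)^p + \frac{2}{n}\sum_{k=d}^{2d-1}\bigl(k^p - (2d)^p\bigr),
\]
which cleanly isolates the ``ideal'' value $(2d)^p$ from the boundary error. The goal $M_p(V) \ge (1-\eps)\cdot 2d$ is equivalent (after raising to the $p$-th power, with the inequality direction flipping for $p<0$) to $f_p(V)/n$ being on the correct side of $(1-\eps)^p(2d)^p$. Bernoulli's inequality provides the bridge: for $p \in (0, 1)$, $(1-\eps)^p \le 1 - p\eps$, so $1 - (1-\eps)^p \ge p\eps$; for $p \le 0$, $(1-\eps)^p \ge 1 - p\eps$, so $(1-\eps)^p - 1 \ge |p|\eps$. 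Either way, it suffices to control the boundary error by $|p|\eps(2d)^p$.

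For $p \in (-1, 1)$, I would use the uniform bound $|k^p - (2d)^p| \le |d^p - (2d)^p| = d^p|1 - 2^p|$ on each summand, reducing the requirement to $n \ge 2d(2^{|p|} - 1)/(|p|\eps)$. A short calculus check shows that $t \mapsto (2^t - 1)/t$ is monotone increasing on $(0, \infty)$ with value $1$ at $t = 1$, so $(2^{|p|} - 1)/|p| \le 1$ whenever $|p| \le 1$, and the stated threshold $n \ge 2d/\eps$ suffices.

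For $p \in (-\infty, -1]$ the uniform bound is too lossy, since the $p$-mean becomes very sensitive to the smallest-degree boundary vertices (especially $v_1$ and $v_n$, of degree $d$). The main obstacle is the refined analysis of $\sum_{k=d}^{2d-1}((k/2d)^p - 1) = \sum_{\ell=1}^d ((1 - \ell/(2d))^p - 1)$: one estimates the summands for small $\ell$ using the endpoint value $(1 - 1/(2d))^p - 1$ via a mean-value or convexity argument (the extra factor $\binom{d+1}{2}$ arises from $\sum_{\ell=1}^d \ell$), and collects the large-$\ell$ contribution using $2^{-p} - 1$, yielding the stated threshold $n \ge (\binom{d+1}{2}((1-\tfrac{1}{2d})^p - 1) + d(2^{-p}-1))/(\eps\,|p|)$. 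The construction itself is unchanged between the two regimes; only the quantitative bookkeeping of the boundary contribution differs.
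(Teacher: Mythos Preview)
Your construction via the $d$-th path power $P_n^d$ is a clean and valid alternative to the paper's approach, which instead invokes Bickle's characterization of edge-maximal $d$-degenerate degree sequences to build a graph with degree sequence $d,d+1,\ldots,2d-1$, then $\binom{d+1}{2}$ copies of $2d-1$, then $n-d-\binom{d+1}{2}$ copies of $2d$. Your graph has the simpler degree multiset $\{d,d,d+1,d+1,\ldots,2d-1,2d-1\}$ together with $n-2d$ copies of $2d$; both are edge-maximal $d$-degenerate, and for $p\in(-1,1)\setminus\{0\}$ your analysis (uniform bound on the boundary error plus Bernoulli and the monotonicity of $t\mapsto (2^t-1)/t$) is correct and arguably more transparent than the paper's. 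The trade-off is that the paper's threshold for $p\le -1$ was \emph{derived} from its particular degree sequence, so your task is not to rederive it but to show it dominates the threshold your own construction needs.

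For $p\le -1$ your sketch has a genuine gap. Write $\phi(t)=(1-t)^p-1$; for $p<0$ this is convex and increasing with $\phi(0)=0$, so $\phi(t)/t$ is \emph{increasing}, which gives $\phi(\ell/(2d))\ge \ell\,\phi(1/(2d))$ --- the opposite of what you need to make $\sum_{\ell=1}^d \ell=\binom{d+1}{2}$ appear as an \emph{upper} bound coefficient. The fix is to interpolate each $\phi(\ell/(2d))$ linearly between the two endpoint values $\phi(1/(2d))$ and $\phi(1/2)$: writing $\ell/(2d)=\tfrac{d-\ell}{d-1}\cdot\tfrac{1}{2d}+\tfrac{\ell-1}{d-1}\cdot\tfrac{1}{2}$ (for $d\ge 2$) and using convexity gives
\[
2\sum_{\ell=1}^{d}\phi\!\left(\tfrac{\ell}{2d}\right)\;\le\; d\,\phi\!\left(\tfrac{1}{2d}\right)+d\,\phi\!\left(\tfrac{1}{2}\right)\;\le\;\binom{d+1}{2}\phi\!\left(\tfrac{1}{2d}\right)+d\,\phi\!\left(\tfrac{1}{2}\right),
\]
the last step since $d\le \binom{d+1}{2}$. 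This is exactly the numerator of the stated threshold, and since your boundary error is $\tfrac{2}{n}(2d)^p\sum_{\ell}\phi(\ell/(2d))$, the paper's hypothesis on $n$ now suffices to conclude $M_p(V)\ge (1-\eps)2d$ via the same Bernoulli step you already have. (For $d=1$ the inequality is an equality and needs no interpolation.) So your construction does work for all $p\le -1$, but not via the mechanism you described.
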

    
    The proof of the proceeding theorem is constructive, using a result of Bickle~\cite{b-12} 
    to aid in constructing a $d$-degenerate graph where most of the vertices
    have degree $2d$. Because this graph $G$ has most vertices equal to $2d$,
    then we expect that $M_p^*(G) \approx 2d$ as the $M_p^*$ value of any 
    regular graph is simply the degree of the graph. 

    \begin{proof}[Proof of Theorem~\ref{thm:half-lb}] 
      Theorem~\ref{thm:comparing-p-to-1} implies that for sufficiently
      large $n$, there exists a graph $H$ on $n$ vertices such that $d = M_{-\infty}^*(H)$
      and $M_p^*(H) \ge (1 - \eps)2d$. Therefore, Lemma~\ref{lem:bad-construct} 
      holds with $\alpha = (1 - \eps)2d$. Thus, there exists an infinite class of graphs where
      both algorithms from Section~\ref{sec:apx-maxcore} and~\ref{sec:apx-density}
      are at best a $\frac{1}{2}$-approximation.
    \end{proof}
    
  \subsection{Iterative heuristics for $p > 1$ and $p < 1$}\label{sec:iter-heuristics}
    We introduce iterative heuristics for $p > 1$ and $p < 1$ with the goal
    of producing better solutions than the algorithms that only consider a single ordering of the
    vertices. 
    
    For $p > 1$, recall that the algorithm $\gpp$ converges
    to a near-optimal solution for \pdsg. $\gpp$ runs $\greedy$
    at each iteration, however, this is computationally prohibitive on large graphs.
    We therefore introduce $\lgpp$, which runs $\lgreedy$ at each iteration.
    We show the benefit of iteration on real-world graphs in Section~\ref{sec:exp-near-opt}.

    For $p < 1$, we consider a heuristic $\sgpp$ that essentially takes the best of both of
    our approximation algorithms from this section. For the approximation algorithm
    from Section~\ref{sec:apx-density} that returns the $1$-mean densest subgraph, 
    we use the algorithm $\origgpp$ to compute a near-optimal solution $S_1$ to
    \apdsg{1}. Our heuristic $\sgpp$ runs 
    $\origgpp$ and finds the largest $M_p$-density suffix of \emph{all} orderings 
    produced by $\origgpp$. This implies that the first iteration of $\sgpp$ is 
    exactly $\sgreedy$. As we use $\origgpp$ for computing $S_1$, we have
    that $\sgpp$ produces a subgraph that has density at least as good as both 
    of our approximation algorithms. It could even potentially produce a larger 
    density as it considers many more orderings than just the ones that correspond to 
    $S_{-\infty}^*$ and $S_1$. We run experiments testing the benefit of iteration on
    real-world graphs in Section~\ref{sec:exp-near-opt}.

\section{Hardness of \pdsg}\label{sec:hardness}
  The goal of this section is to prove the following theorem.
  \begin{theorem}\label{thm:hard-main}
    \pdsg is NP-hard for $p \in (-\frac{1}{8}, 0) \cup (0, \frac{1}{4})$ and
    weighted \pdsg is NP-hard for $p \in (-3, 0) \cup (0,1)$.
  \end{theorem}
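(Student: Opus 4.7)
The plan is to reduce from \ec for the unweighted case and from \scp for the weighted case; the two differ only in how finely we can calibrate degrees, which is what determines the two stated ranges of $p$. Given an \ec instance with universe $U$ of size $3q$ and a family $\cS$ of $3$-element subsets, I would build a graph $G$ with an element vertex $x_u$ for each $u \in U$, a set vertex $y_S$ for each $S \in \cS$, and an edge $x_uy_S$ whenever $u \in S$. On top of this bipartite core I would bolt on carefully tuned auxiliary structures --- padding cliques and/or independent ``anchor'' vertices with prescribed degree profiles --- whose role is to force any optimum $p$-mean induced subgraph to include all $3q$ element vertices together with a sub-family $\cS' \subseteq \cS$ of size very close to $q$. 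Picking the number and size of these anchors requires knowing $M_p$ in closed form on each gadget, which is the point at which the specific range of $p$ enters.

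Once the reduction is set up, the target decision is ``$M_p^*(G) \ge \alpha(p,q)$'' for an explicit threshold $\alpha$. In the YES case, selecting the $q$ cover sets together with all $x_u$ and the full anchor system makes every $x_u$ have degree exactly $1$ and every selected $y_S$ have degree exactly $3$ from the bipartite core, and the $M_p$ value evaluates to $\alpha$ by direct computation. In the NO case, any feasible sub-family either leaves an element uncovered (creating a degree-$0$ vertex, catastrophic for $p<0$ and strictly penalized for small $p>0$ since $0^p=0$) or double-covers some element (which shifts one element degree from $1$ to $2$ while costing an extra set vertex of degree $3$ elsewhere). I would catalog these deviations, together with the degenerate options of selecting too few or too many sets (ruled out by the anchor gadgets), and argue that each strictly decreases $M_p$ below $\alpha$.

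The main obstacle is the last step: because $x \mapsto x^p$ is nonlinear and not supermodular for $p<1$, small local changes in the degree sequence interact globally, and one cannot bound $M_p$ by simple edge counting. I would handle this by parametrizing each deviation as a perturbation of the YES-case degree multiset and reducing the YES-versus-NO comparison to a finite list of scalar inequalities of the form $\sum_i d_i^p > \sum_i d_i'^p$. These inequalities are checked by numerical computation, and the intervals $p \in (-\tfrac{1}{8},0) \cup (0,\tfrac{1}{4})$ (unweighted) and $p \in (-3,0)\cup(0,1)$ (weighted) are precisely the ranges in which the checks go through transparently; outside them the proof is believed to still work but the case analysis balloons. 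The weighted version admits the larger range because arbitrary edge weights let us set anchor degrees to arbitrary positive reals, giving more room to separate YES from NO. NP-hardness of \pdsg and weighted \pdsg in the claimed ranges then follows from NP-hardness of \ec and \scp respectively, yielding Theorem~\ref{thm:hard-main}.
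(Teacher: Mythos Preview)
Your outline shares the right high-level instinct (reduce from \ec, exploit that $x\mapsto x^p$ rewards uniform degree sequences when $p<1$), but the actual construction in the paper is structurally different from what you sketch, and the differences matter.

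First, the paper reduces from \ec in \emph{both} the weighted and unweighted cases; \scp is not used. The distinction between the two ranges of $p$ comes not from the source problem but from how finely one can tune a single parameter $d$ (the degree of the regular structure placed on the element vertices): in the weighted case $d$ can be any real, so $d=1.23p+4.77$ works for all $p\in(0,1)$, while in the unweighted case $d$ must be an integer and $d=5$ only works for $p\in(0,\tfrac14)$.

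Second, and more importantly, the paper does \emph{not} use separate ``padding cliques'' or ``anchor vertices.'' Instead it places a $d$-regular graph (a weighted clique in the weighted case) directly on the element-vertex set $A$. This gives every element vertex a baseline degree $d$, so that in the YES case each element vertex has degree exactly $d+1$ (not $1$), and the target density is $\rho^*=\tfrac{3^p+3(d+1)^p}{4}$. The role of the dense structure on $A$ is to force any optimum to contain all of $A$, after which the only freedom is which set vertices $S\subseteq L$ to include. Your plan of leaving element vertices at degree $1$ and bolting on external anchors would have to separately argue that the optimum selects all element vertices and the right number of anchors, and it is not clear how to do that cleanly.

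Third, the heart of the paper's NO-case analysis is not a catalog of local deviations but a single concavity argument: for fixed $|S|$, the density $\rho_p(S\cup A)$ is bounded by a function $g(\alpha)$ of the ratio $\alpha=3|S|/|A|$, and the choice of $d$ is engineered (via two explicit scalar inequalities $3^p+3(d+1)^p>4d^p$ and $4(d+2)^p+3^p<5(d+1)^p$) so that $g$ is uniquely maximized at $\alpha=1$. A tighter integer-optimization lemma (Lemma~\ref{lem:int-flat}) handles the issue that degrees are integers. The stated ranges of $p$ are exactly where these two inequalities can be satisfied for the chosen $d$. Your proposal does not identify this mechanism, and without it the ``finite list of scalar inequalities'' you allude to is not pinned down.
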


  In this section, we present our hardness 
  results for $p > 0$. In the appendix, we discuss how one can formally 
  extend the arguments to $p < 0$ (see 
  Appendix~\ref{sec:hardness-neg-app}) and we 
  informally discuss how to extend the argument for more values of $p$ 
  in $(-\infty, 0) \cup (0, 1)$ (see Appendix~\ref{sec:hardness-all-neg-app}). 
  
  Let $p \in (0,1)$. Recall $f_p(S) = \sum_{v\in S} d_S(v)^p$ and 
  $\rho_p(S) = \frac{f_p(S)}{\abs{S}}$. As $p > 0$, $\max_S M_p(S)$ is equivalent to the 
  problem of $\max_S \rho_p(S)$. We therefore focus on the problem 
  of $\max_S \rho_p(S)$. We give a reduction from the standard NP-Complete 
  problem \ec.
  
  \begin{problem}[\ec]
    In the \ec problem, the input is a family of subsets 
    $\cS = \{S_1,S_2,\ldots, S_m\}$ each of cardinality $3$ over a ground set 
    $\cU  = \{e_1,e_2,\ldots,e_{3n}\}$ and the goal is to determine if there 
    exists a collection of sets $S_{i_1}, S_{i_2},\ldots, S_{i_n}$ that form a 
    partition of $\cU$.  If such sets exist, we say that 
    $\{S_{i_1},\ldots, S_{i_n}\}$ is an exact
    $3$-cover.
  \end{problem}
  
  We give a formal definition of weighted \pdsg.
  
  \begin{problem}[weighted \pdsg]
  \label{prob:weighted-p-mean}
    The input is an edge-weighted graph $G = (V,E, c : E \to \R_+)$. 
    The weighted $p$-mean DSG problem is the same as the $p$-mean DSG problem
    if one defines $d_S(v)$ as the weighted degree $\sum_{e \in \delta(v) \cap E(S)} c_e$
    where $\delta(v)$ is the set of edges leaving $v$.
  \end{problem}
  
  \paragraph{The reduction from \ec.}
  We first give the reduction for the weighted case.
  Let $\cS = \{S_1,S_2,\ldots, S_m\}$ and 
  $\cU = \{e_1,\ldots, e_{3n}\}$ be an instance of \ec. We first 
  construct a graph $G = (L \cup A, E)$ as follows. $L$ has a vertex $v_i$ for each 
  set $S_i$ and $A$ has a vertex $u_j$ for each element $e_j$. 
  We add a weight $1$ edge from every ``set vertex" in $L$ to the corresponding 
  ``element vertices" in $A$ that the set contains. Formally, for all $i \in [m]$, we 
  add the edge set $\{(v_i u_j , 1) \mid e_j \in S_i\}$ to $E$ where $(v_iu_j, 1)$ is
  the undirected edge from $v_i$ to $u_j$ of weight $1$.
  Further, we add an edge to $E$ between all pairs of vertices in $A$ 
  with weight $\frac{d}{\abs{A} - 1}$ where $d := 1.23p + 4.77$. 
  Let $\redopt := \frac{3^p + 3(d+1)^p}{4}$ and let 
  $\OPT_G = \argmax_S \rho_p(S)$. The reduction constructs this graph and returns 
  TRUE iff $\OPT_G \ge \rho^*$. We provide an illustration of the reduction in 
  Figure~\ref{fig:hardness}.
  The weights of the edges in our reduction are rational numbers and
  could be made integral by scaling the edge weights by an appropriate integer.
  One could then argue that an unweighted version of the problem defined 
  with multigraphs is NP-Hard.
  
  For the reduction for the unweighted case, we again construct the graph
  $G = (L\cup A, E)$. All weight-$1$ edges from $L$ to $A$ remain 
  but now they are unweighted. Instead of $G[A]$ being a clique with equal weight edges, 
  we let $G[A]$ be a connected $d$-regular graph where $d = 5$. (We can assume
  $n$ is even so that such a $d$-regular graph must exist.) 
  Again, the reduction constructs this graph $G$ and returns TRUE iff $\OPT_G \ge \redopt$.
  
  \begin{figure}[t]
    \centering
    \includegraphics[scale=0.65]{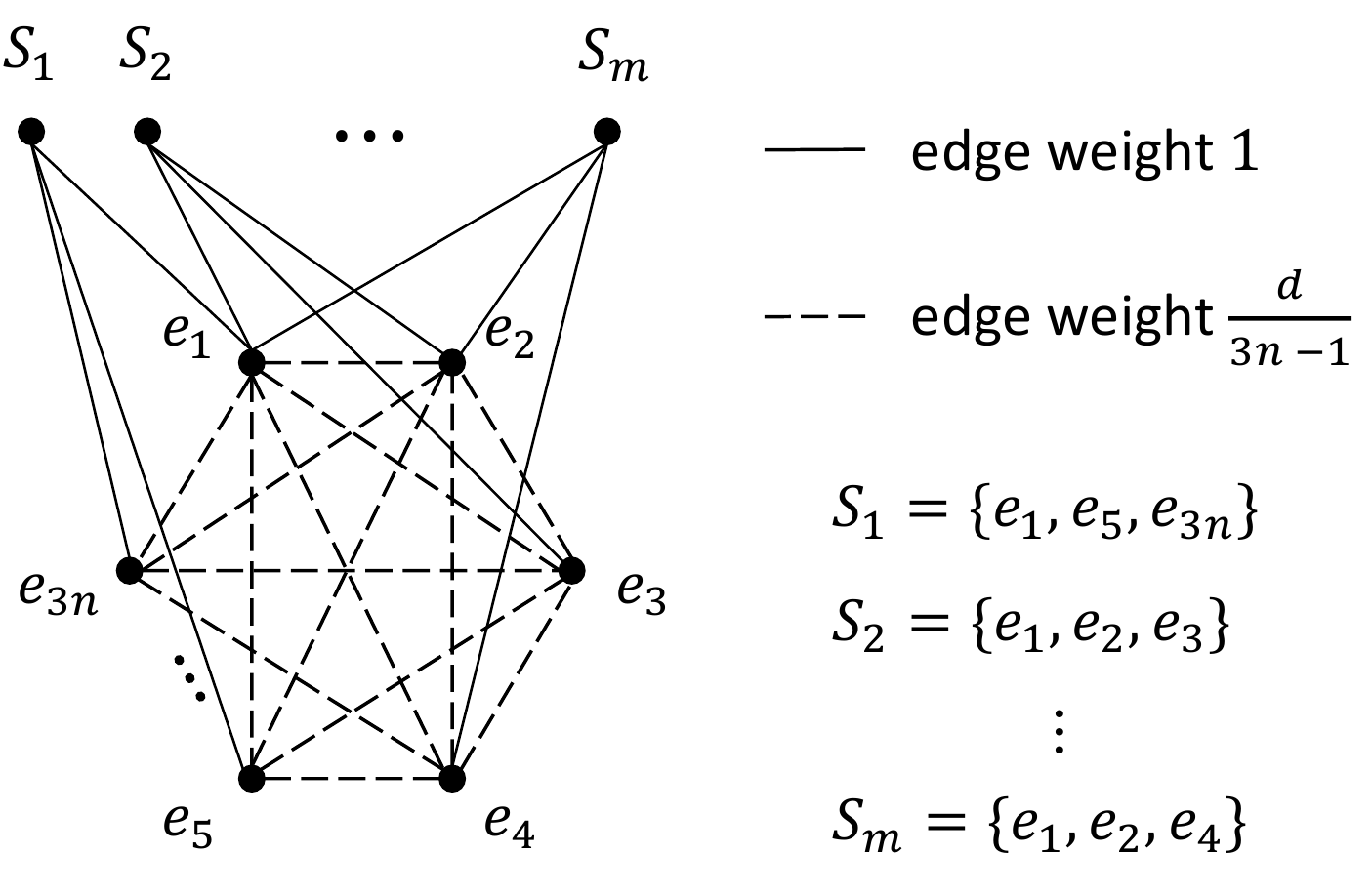}
    \caption{Graph constructed in our reduction from \ec. The \ec instance is the family
    of sets $\{S_1,\ldots, S_m\}$ defined over the ground set 
    $\{e_1,\ldots, e_{3n}\}$.}
    \label{fig:hardness}
  \end{figure}
    
  \paragraph{Proof outline.}
  The degree of the vertices in $A$ in the
  subgraph $G[A]$ is $d = 1.23p + 4.77$ for the weighted case and $d = 5$
  for the unweighted case, which are both larger than the degree-$3$ 
  vertices in $L$. $d$ is chosen to be large enough in both cases so that any 
  optimal solution will necessarily take all of $A$, but small enough so that optimal
  solutions still need to take vertices in $L$. Considering only subsets 
  $S\subseteq L$ of a fixed size, the objective function $\rho_p(S \cup A)$ favors
  solutions $S\cup A$ with more uniform degrees. We take advantage 
  of this to show that an exact 3-cover, which will add exactly one to the degree of each
  vertex in $A$, attains the largest possible density $\redopt$.
  
  Now suppose $\cS$ contains an exact $3$-cover $S_{i_1}, \ldots, S_{i_n}$. Let 
  $S = \{v_{i_1}, \ldots,v_{i_n}\}$ be the subset of $L$ corresponding to the exact 
  $3$-cover. Then as $\abs{S} = n = \frac{\abs{A}}{3}$,
  \begin{equation}\label{eq:hardness-easy-dir}
    \OPT_G 
    \ge
    \frac{f_p(S \cup A)}{\abs{S \cup A}}
    =
    \frac{3^p \cdot \abs{S} + (d+1)^p \abs{A}}{\abs{S} + \abs{A}}
    \\=
    \frac{3^p \cdot \frac{\abs{A}}{3} + (d+1)^p \abs{A}}{\frac{\abs{A}}{3}+\abs{A}}
    =
    \redopt.
  \end{equation}
  
  Now assume $\cS$ does not contain an exact $3$-cover. Letting $S \subseteq L$ and 
  $A' \subseteq A$, we want to show that $\rho_p(S \cup A') < \redopt$. We focus on 
  the case where $A' = A$. Suppose $3\cdot\abs{S} = \alpha \abs{A}$
  for some $\alpha \in \R_{\ge 0}$. We then have 
  \begin{align}
    \rho_p(S \cup A)
    &=\label{eq:outline-1}
    \frac{3^p \cdot \abs{S} + \sum_{v \in A} d_{S\cup A}(v)^p}{\abs{S} + \abs{A}}\\
    &=\label{eq:outline-1.5}
    \frac{3^p \cdot \abs{S} + \sum_{v \in A} (d + d_{S+v}(v))^p}{\abs{S} + \abs{A}}\\
    &\le\label{eq:concavity}
    \frac{3^p \cdot \abs{S} + \abs{A}\cdot (d + \frac{3 \abs{S}}{\abs{A}})^p}
      {\abs{S} + \abs{A}}\\
    &=\label{eq:outline-2}
    \frac{3^p \cdot \alpha + 3\cdot (d + \alpha)^p}{\alpha +3},
  \end{align}
  where the first equality holds as $G[A]$ is $d$-regular, 
  the inequality uses the fact that $\sum_{i=1}^n (d+x_i)^p$ is a concave 
  function in $x$ and, subject to the constraint that $\sum_i x_i = s$, is maximized
  when each $x_i = \frac{s}{n}$, and the final equality holds as
  $3 \cdot\abs{S} = \alpha \abs{A}$. Define
  \[
    g(\alpha) 
    :=
    \frac{3^p \cdot \alpha + 3\cdot (d + \alpha)^p}{\alpha +3}.    
  \]
  Note that $g(1) = \redopt$.
  
  We claim it suffices to choose $d$ such that $g(\alpha)$ is uniquely
  maximized at $\alpha =1$. So if $\alpha \ne 1$, then
  $\rho_p(S \cup A) < g(1)$ by the choice of $d$. 
  Now consider the case when $\alpha =1$. As $\cS$ does not contain an exact 
  $3$-cover, it must be the case that there exists $u,v\in A$ such that
  $d_{S \cup A}(u) \ne d_{S \cup A}(v)$. This implies that Inequality~(\ref{eq:concavity})
  is strict, which again implies $\rho_p(S \cup A) < g(1)$.
  
  A natural direction for the proof would then be to choose 
  $d$ so that $g(\alpha)$ is uniquely maximized at $\alpha = 1$. The main issue with
  this approach is that it leads to solutions where $d$ is irrational even when 
  $p$ is rational. One might think we should then consider reducing from 
  \aec{l} where $\ell>3$ is an integer. However, it is a challenging problem
  to choose such an $\ell$ and $d$. 
  
  To remedy these issues, in Lemma~\ref{lem:int-flat}, we consider a tighter
  bound than the one in Inequality~(\ref{eq:concavity}), which utilizes the fact
  that we are optimizing the function of interest over the integers. We provide a proof of the lemma 
  in Appendix~\ref{sec:app-hardness}.
  
  \begin{lemma}\label{lem:int-flat}
    Let $c \in \R_{\ge 0}$ and $p > 0$. Let $f : \R_{\ge 0}^n \to \R$ be defined as 
    $f(x) = \sum_{i=1}^n (c+x_i)^p$ where $x \in \R^n$ and $x_i$ is the $i$-th coordinate
    of $x$. Consider the program parameterized by $s\in \N$:
    \[
      \text{maximize } f(x) \text{ over } x\in \Z^n, x \ge 0 \text{ s.t. } \sum_{i=1}^n x_i = s.        
    \]
    The program is uniquely maximized at the vector of integers with sum equal to $s$
    and each entry is in $\{\ceil{s/n}, \ceil{s/n} - 1\}$.
    
    If $p < 0$ and the optimization problem is a minimization problem, the program
    is uniquely minimized at the vector of integers with sum equal to $s$ and each
    entry is in $\{\ceil{s/n}, \ceil{s/n}-1\}$.
  \end{lemma}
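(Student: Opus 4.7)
The plan is a discrete exchange (``smoothing'') argument driven by strict concavity of $t\mapsto(c+t)^p$ when $p\in(0,1)$ and strict convexity when $p<0$. Define the forward difference
\[
g(t) := (c+t+1)^p - (c+t)^p,
\]
so that $g'(t)=p\bigl[(c+t+1)^{p-1}-(c+t)^{p-1}\bigr]$. Since $p-1<0$ in both parameter regimes and $c+t+1>c+t$, the bracket is strictly negative on the relevant domain; hence $g'(t)<0$ for $p\in(0,1)$ and $g'(t)>0$ for $p<0$, so $g$ is strictly monotone on $\Z_{\ge 0}$.

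The central step is the following swap. Let $x\in\Z^n_{\ge 0}$ be feasible with $\sum_k x_k=s$, and suppose there exist indices $i,j$ with $x_i\ge x_j+2$. Set $x':=x-e_i+e_j$; this vector remains nonnegative and has the same sum. A direct computation gives
\[
f(x')-f(x) = \bigl[(c+x_j+1)^p-(c+x_j)^p\bigr]-\bigl[(c+x_i)^p-(c+x_i-1)^p\bigr] = g(x_j)-g(x_i-1).
\]
Because $x_j\le x_i-2<x_i-1$, strict monotonicity of $g$ yields $f(x')-f(x)>0$ when $p\in(0,1)$ and $f(x')-f(x)<0$ when $p<0$. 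In each regime the swap strictly improves the objective in the favorable direction (maximization in the concave case, minimization in the convex case).

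To conclude, iterate the swap starting from any optimum. Termination is forced by strict improvement combined with finiteness of the feasible integer set, and at termination no two coordinates differ by more than one. Any such vector has entries in a two-element set $\{k,k+1\}$ with $k\ge 0$, and $\sum_k x_k=s$ pins down $k=\lceil s/n\rceil-1$ (the case $n\mid s$ is the degenerate sub-case where the two values coincide). This is exactly the claimed form. Uniqueness (of the optimum value, equivalently of the multiset of coordinate entries) follows because any feasible vector not of this form admits a strictly improving swap and hence cannot be optimal.

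I do not anticipate any real obstacle in carrying this out; the only mild care needed is verifying the sign of $g'$ from the formula above and separately handling the $n\mid s$ boundary case, both of which are routine. The argument crucially uses that the swap condition $x_i\ge x_j+2$ gives $x_i-1>x_j$ \emph{strictly}, so that strict monotonicity of $g$ translates into a strict inequality on $f$, which is exactly what drives uniqueness.
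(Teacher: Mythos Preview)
Your argument is correct and is essentially the same exchange/smoothing approach as the paper's: both show that whenever two coordinates differ by at least two, moving a unit from the larger to the smaller strictly improves the objective (via concavity for $p\in(0,1)$, convexity for $p<0$), forcing the optimum to be the balanced vector. The paper organizes this as induction on the $\ell_1$-distance to $x^*$ with a three-case base analysis, whereas your forward-difference function $g$ and single swap lemma handle all cases uniformly and is arguably cleaner.
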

  
  Now consider again the case when the input $\cS$ does not contain an exact
  3-cover. From the proof outline above, we want to argue that the quantity 
  in~(\ref{eq:outline-1.5}) is strictly less than $g(1)$, which is equal to $\rho^*$.
  Reevaluating~(\ref{eq:outline-1.5})
  with Lemma~\ref{lem:int-flat} in hand,
  we have a tighter bound on $\sum_{v \in A} (d + d_{S+v}(v))^p$ compared
  to the bound in~(\ref{eq:concavity}) as $d_{S+v}(v)$ is integral for all $v \in A$.
  After carefully choosing the value of $d$, this improved bound allows us to prove 
  that the quantity in~(\ref{eq:outline-1.5}) is strictly less than $g(1)$.
  
  We use the proof outline above 
  to prove the following theorem, which is exactly Theorem~\ref{thm:hard-main}
  when assuming $p > 0$.
    
  \begin{theorem}\label{thm:hard-weighted}
    \pdsg is NP-hard for $p \in (0, \frac{1}{4})$ and
    weighted \pdsg is NP-hard for $p \in (0,1)$.
  \end{theorem}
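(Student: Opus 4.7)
My plan is to carry out the two-direction argument sketched in the outline. The forward direction is already in~(\ref{eq:hardness-easy-dir}): an exact $3$-cover yields $S\subseteq L$ with $\rho_p(S\cup A) = \redopt$, so the entire burden falls on the converse: if $\cS$ admits no exact $3$-cover, then $\rho_p(T) < \redopt$ for every nonempty $T \subseteq L \cup A$.

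First I would reduce to $T = S \cup A$ with $S \subseteq L$. The constant $d$ is chosen so that every vertex of $A$ has (weighted) $G[A]$-degree equal to $d$, which strictly dominates the degree-$3$ contribution of any vertex of $L$; a monotonicity/exchange argument then shows that if $A' \subsetneq A$, adding any missing vertex of $A$ to $S \cup A'$ strictly increases $\rho_p$, so an optimal $T$ must contain all of $A$. Writing $\alpha := 3|S|/|A|$ and applying~(\ref{eq:outline-1})--(\ref{eq:outline-2}) gives $\rho_p(S\cup A)\le g(\alpha)$ where $g(\alpha) := (3^p\alpha + 3(d+\alpha)^p)/(\alpha+3)$; the critical step is the Jensen/concavity bound~(\ref{eq:concavity}), which is strict precisely when the integer degrees $\{d_{S+v}(v)\}_{v\in A}$ are not all equal.

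Next I would analyze $g$ and close out the weighted case. Differentiating, the first-order condition $g'(1)=0$ reduces to a single identity in $d$ and $p$; the choice $d = 1.23p + 4.77$ is made so this identity is satisfied (with $g''(1) < 0$) throughout $p \in (0,1)$, making $g$ uniquely maximized at $\alpha = 1$ with $g(1) = \redopt$. Hence when $\alpha \ne 1$ we obtain $\rho_p(S \cup A) \le g(\alpha) < \redopt$ immediately. When $\alpha = 1$ but $\cS$ has no exact $3$-cover, the integer degrees $d_{S+v}(v)$ sum to $|A|$ but are not all equal to $1$, so strict concavity makes~(\ref{eq:concavity}) strict, yielding $\rho_p(S\cup A) < g(1) = \redopt$.

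The main obstacle is the unweighted case, where $d$ must be an integer (here $d = 5$) and a direct check gives $g'(1) > 0$ for $p \in (0,\tfrac14)$, so the concavity bound alone does not give $g(\alpha) < g(1)$ for $\alpha$ slightly above $1$. To recover, I would invoke Lemma~\ref{lem:int-flat}: since the $d_{S+v}(v)$ are nonnegative integers summing to $3|S|=\alpha|A|$, their contribution to $f_p$ is upper bounded by the integer-optimal split into values in $\{\lceil\alpha\rceil,\lceil\alpha\rceil-1\}$, giving a strictly tighter bound $h(\alpha) \le g(\alpha)$. The delicate step is then to verify, uniformly over $p \in (0,\tfrac14)$ and all admissible $\alpha \in \tfrac{3}{|A|}\Z_{\ge 0}$ with $\alpha \ne 1$, that $h(\alpha) < \redopt$; this reduces to an elementary but careful inequality that separates the discrete correction from the gap $g(\alpha) - \redopt$, and it is exactly what confines the unweighted range to $(0,\tfrac14)$ while the continuous tunability of $d$ lets the weighted range extend to all of $(0,1)$.
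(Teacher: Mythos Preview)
Your plan has a genuine gap in the weighted case. You assert that $d = 1.23p + 4.77$ is chosen so that the first-order condition $g'(1)=0$ holds (making $\alpha=1$ the unique maximizer of $g$), but this is not what that choice of $d$ accomplishes. Indeed the paper's outline explicitly rejects the approach of forcing $g$ to be maximized at $\alpha=1$: doing so would require $d$ to be irrational even for rational $p$. A direct computation at, say, $p=\tfrac12$ with $d=5.385$ gives $g'(1)<0$, so the continuous bound $g$ attains its maximum at some $\alpha^\ast\in(0,1)$ with $g(\alpha^\ast)>g(1)=\redopt$. Since $|A|=3n$ can be taken arbitrarily large, $\alpha$ ranges over $\tfrac{1}{n}\Z_{\ge 0}$ and can approximate $\alpha^\ast$; hence the concavity bound~(\ref{eq:concavity}) alone cannot establish $\rho_p(S\cup A)<\redopt$ for all $S$ in the weighted case.

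The paper's actual argument uses Lemma~\ref{lem:int-flat} in \emph{both} the weighted and unweighted cases, not just the unweighted one. The degrees $d_{S+v}(v)$ are integers (the $L$--$A$ edges have weight $1$), so the integer-optimal split applies, producing a piecewise bound that is handled by three cases on $\alpha=3|S|/|A'|$. The value $d=1.23p+4.77$ is chosen to satisfy the two inequalities $3^p+3(d+1)^p>4d^p$ and $4(d+2)^p+3^p<5(d+1)^p$ of Lemma~\ref{lem:good-d}; via Lemmas~\ref{lem:strict-inc} and~\ref{lem:strict-dec} these make the piecewise bound strictly increasing on $[0,1)$ and strictly decreasing on each interval $[t,t+1]$ for $t\ge 1$, forcing the unique maximum at $\alpha=1$. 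A secondary point: your reduction from $A'\subsetneq A$ to $A'=A$ by an exchange argument is not the route taken (and is not obviously valid when $|A'|$ is small, since the added vertex may have low weighted degree); the paper instead uses Lemma~\ref{lem:strict-subset-A} to bound $\rho_p(S\cup A')$ directly for arbitrary $A'$, with strict inequality when $A'\subsetneq A$.
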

  
  We note again that we prove NP-hardness for values of $p < 0$ in
  the appendix (see Appendix~\ref{sec:hardness-neg-app}).

\section{Experiments}\label{sec:experiments}
  
  We evaluate the algorithms described in previous sections on 
  ten real-world graphs. The graphs are publicly accessible from SNAP~\cite{snap} 
  and SuiteSparse Matrix Collection~\cite{suitesparse}. The size of each graph is given in
  Table~\ref{fig:size-table}. The graphs we consider are a subset of the
  graphs from~\cite{vbk-21}. We focused on the graphs 
  in~\cite{vbk-21} for which they reported raw statistics, which was done purposefully 
  to facilitate direct comparisons. Some graphs vary slightly from the graphs in~\cite{vbk-21}
  due to differences in preprocessing as we remove all isolated vertices.
  
  All algorithms are implemented in C++. We use the implementation of 
  $\origgpp$ from~\cite{boob-20}, which  serves as our implementation 
  for $\sgreedy$ and $\sgpp$.  We did not use the implementation for $\greedy$ 
  of~\cite{vbk-21} as it was written in Julia. The reported running times of
  $\greedy$ in~\cite{vbk-21} are roughly twice as fast as our implementation. This 
  difference largely does not impact our results and we mention why this is the case in 
  each section. We do not
  measure the time it takes to read the graph in order to reduce variability (the time it takes 
  to read the largest graph is 6 seconds). The experiments were done on a
  Slurm-based cluster, where we ran each experiment on 1 node with 16 cores. 
  Each node/core had Xeon PHI 5100 CPUs and 16 GB of RAM. 
  
  In each section, we present partial results for a few graphs but include all results 
  in the appendix (see Appendix~\ref{sec:experiments-app}).
  We give a few highlights here and discuss details in the following 
  sections.
  
  \begin{itemize}
    \item 
      For $p > 1$, $\lgreedy$ is much faster than $\greedy$ while returning subgraphs with
      comparable density.
    \item
      For $\lgpp$ with $p > 1$, multiple iterations improve upon the density of $\lgreedy$
      for some graphs.
    \item
       For $p < 1$, $\sgreedy$ returns solutions with similar density to the algorithms we
       compare it against but runs significantly faster.
  \end{itemize}
  
  \begin{table}[t]
    \centering
    \begin{tabular}{r | l l l l l}
      & Astro & CM05 & BrKite & Enron & roadCA\\ \hline
      $n$ & 18,771 & 39,577 & 58,228 & 36,692 & 1,965,206\\
      $m$ & 198,050 & 175,693 & 214,078 & 183,831 & 2,766,607\\
    \end{tabular}

    \bigskip

    \begin{tabular}{r | l l l l l}
      & roadTX & webG & webBS & Amaz & YTube\\ \hline
      $n$ & 1,379,917 & 875,713 & 685,230 & 334,863 & 1,134,890\\
      $m$ & 1,921,660 & 4,322,051 & 6,649,470 & 925,872 & 2,987,624\\
    \end{tabular}
    
    \caption{Sizes of the real-world graphs used in experiments.}
    \label{fig:size-table}
  \end{table}

  \subsection{$\lgreedy$ outperforms $\greedy$}\label{sec:exp-lazy}    
    \begin{table}[t]
    \begin{subtable}{\textwidth}
        \centering
    \begin{tabular}{c|l|llll}
      \hline
      metric & algorithm  & Astro & roadCA & webBS & YTube\\\hline
      \multirow{3}{*}{time (s)} & $\abblgreedy$ ($\eps$$=$$0.1$) & 0.144 & 3.148 & 13.32 & 7.685 \\
       & $\abblgreedy$ ($\eps$$=$$1.0$) & \textbf{0.105} & 3.196 & \textbf{9.952} & \textbf{4.623} \\
       & $\greedy$ & 0.312 & \textbf{3.146} & 687.2 & 75.98 \\
      \hline
      \multirow{3}{*}{\begin{tabular}{c}density \\ ($M_p$) \end{tabular}} & $\abblgreedy$ ($\eps$$=$$0.1$) & 61.6 & \textbf{3.756} & \textbf{207.4} & \textbf{95.5} \\
       & $\abblgreedy$ ($\eps$$=$$1.0$) & \textbf{61.76} & \textbf{3.756} & \textbf{207.4} & \textbf{95.5} \\
       & $\greedy$ & 61.6 & 3.683 & \textbf{207.4} & \textbf{95.5} \\
        \hline
    \end{tabular}
    \caption{Results for $p = 1.25$}
    \end{subtable}
    
    \bigskip
    
    \begin{subtable}{\textwidth}
    \centering
    \begin{tabular}{c|l|llll}
      \hline
      metric & algorithm  & Astro & roadCA & webBS & YTube\\\hline
      \multirow{3}{*}{time (s)} & $\abblgreedy$ ($\eps$$=$$0.1$) & 0.157 & 3.135 & 14.16 & 6.931 \\
       & $\abblgreedy$ ($\eps$$=$$1.0$) & \textbf{0.108} & \textbf{3.104} & \textbf{9.931} & \textbf{4.044} \\
       & $\greedy$ & 0.31 & 3.219 & 742.4 & 68.5 \\
      \hline
      \multirow{3}{*}{\begin{tabular}{c}density \\ ($M_p$) \end{tabular}} & $\abblgreedy$ ($\eps$$=$$0.1$) & 67.12 & \textbf{3.816} & \textbf{387.1} & \textbf{112.2} \\
       & $\abblgreedy$ ($\eps$$=$$1.0$) & \textbf{67.88} & 3.769 & \textbf{387.1} & \textbf{112.2} \\
       & $\greedy$ & 67.12 & \textbf{3.816} & \textbf{387.1} & \textbf{112.2} \\
        \hline
    \end{tabular}
    \caption{Results for $p = 1.75$}
    \end{subtable}
    \caption{Wall clock times and densities of returned solutions of 
    $\lgreedy$ and $\greedy$ on four of the ten graphs and for $p \in \{1.25, 1.75\}$. 
    We write $\lgreedy$ as $\abblgreedy$ for spacing issues.
    Results for all ten graphs and more values of $p$ are in Appendix~\ref{sec:experiments-app}.}
    \label{fig:exp-lazy}
    \end{table}  
    
    We find that $\lgreedy$ outperforms $\greedy$ in terms 
    of running time and typically returns a solution with similar density.
    We run experiments for $p \in \{1.05, 1.25, 1.5,1.75, 2\}$. Note that we stop at 
    $p=2$ as the objective starts rewarding subgraphs with large degrees and therefore 
    the returned subgraphs are often a large fraction of the graph. We evaluate
    $\lgreedy$ with values of $\eps \in \{0.01,0.1,1\}$, all of which result in similar
    running times and densities.
    In Table~\ref{fig:exp-lazy}, we present the running times and densities of returned
    solutions for $\lgreedy$ and $\greedy$ on four of the ten real-world graphs and for
    two values of $p$.  The full table of the results is given in 
    Appendix~\ref{sec:experiments-app}.
    
    Ignoring the road networks roadCA and roadTX, $\lgreedy$ is roughly 
    at least twice as fast as $\greedy$. The road networks have a maximum
    degree of 12, so the savings from $\lgreedy$ are 
    marginal. On graphs webG and webBS, $\lgreedy$ is
    roughly at least $4$ times faster. Since webG and webBS
    have many vertices of large degree, this was expected as we know from the
    analysis of $\lgreedy$, the running time of $\greedy$ depends on the sum of the 
    squared degrees. We therefore suggest the use of $\lgreedy$ over $\greedy$, 
    especially for graphs with large degrees, which we expect in many real-world
    graphs. We also point out that the reported running time difference 
    with~\cite{vbk-21} does not impact these results as this is a relative comparison; 
    we use the same code for $\greedy$ and $\lgreedy$ where the only 
    difference is when the $f_p(v \mid S - v)$ values in the min-heap are updated. 
  
  \subsection{Finding near-optimal solutions for $p > 1$}\label{sec:exp-near-opt}
    \begin{figure}[t]
      \centering
      \includegraphics[scale=0.6]{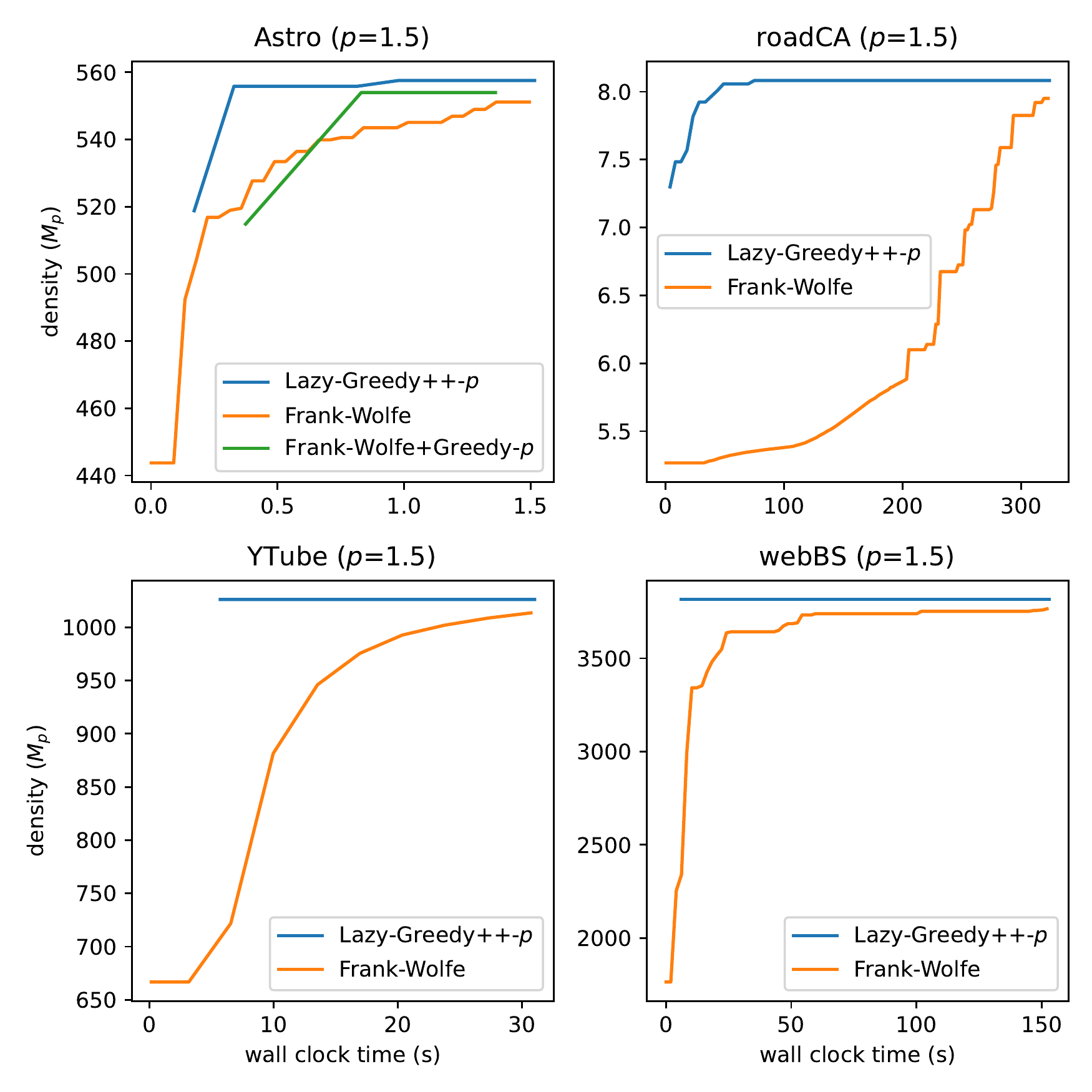}
      \caption{Results for convergence of iterative algorithms for finding near-optimal
      solutions where $p = 1.5$ on four of the ten real-world graphs. Results for all ten 
      graphs and more values of $p$ are in  
      Appendix~\ref{sec:experiments-app}.}
      \label{fig:exp-near-opt}
    \end{figure}
    
    We compare $\lgpp$, $\gpp$, and the Frank-Wolfe method in terms of running time
    and solution quality. In~\cite{hqc-22}, Harb et al.\ show how one can use Frank-Wolfe to solve 
    an appropriate convex programming relaxation of \pdsg. We provide details of the
    algorithm in the appendix (see Appendix~\ref{sec:experiments-app}).
   
    We run $\lgpp$ and
    $\gpp$ for $100$ iterations and Frank-Wolfe for $500$ iterations. We only
    run $\gpp$ on the five smallest graphs for even if we used the faster implementation
    of $\greedy$ in~\cite{vbk-21}, the running time on large graphs could reach 
    roughly around ten hours. Thus, for large graphs, this points to
    $\lgpp$ and Frank-Wolfe being better options than $\gpp$. For $\lgpp$, we set 
    $\eps =1$. We run experiments for $p \in \{1.05, 1.25, 1.5, 1.75, 2\}$. In 
    Figure~\ref{fig:exp-near-opt}, we present plots showing the rate of convergence for 
    four of the ten real-world graphs with $p = 1.5$.
    For each plot, we truncated the data to the first point at
    which all algorithms reached at least $99\%$ of the optimal density achieved in 
    order to more easily see trends in the data. The full collection of results is 
    given in Appendix~\ref{sec:experiments-app}.
    
    We find there is a benefit to iteration; $\lgpp$ produces solutions with
    increasing density for multiple iterations for around half of the graphs (e.g., see
    roadCA and Astro plots in Figure~\ref{fig:exp-near-opt}). We also note that
    the rate of convergence of $\lgpp$ is faster than Frank-Wolfe on almost every single
    graph.

  \subsection{Approximation algorithms for $p < 1$}\label{sec:exp-lt1}
    \begin{table}[t]
      \begin{subtable}{\textwidth}
      \centering
      \begin{tabular}{c|l|llll}
        \hline
        metric & algorithm  & Astro & roadCA & webBS & YTube\\\hline
        \multirow{3}{*}{time (s)} & $\abbsgreedy$ & \textbf{0.022} & \textbf{1.146} & \textbf{0.836} & \textbf{1.151} \\
        & 1-mean DSG & 2.179 & 164.9 & 76.95 & 117.9 \\
        & $\abbsgpp$ & 2.179 & 164.9 & 76.95 & 117.9 \\
        \hline
        \multirow{3}{*}{\begin{tabular}{c}density \\ ($M_p$) \end{tabular}} & $\abbsgreedy$ & 56.91 & 3.27 & \textbf{204.0} & \textbf{75.56} \\
        & 1-mean DSG & 55.52 & 3.583 & \textbf{204.0} & 72.8 \\
        & $\abbsgpp$ & \textbf{58.92} & \textbf{3.75} & \textbf{204.0} & \textbf{75.56} \\
        \hline
      \end{tabular}
      \caption{Results for $p = -1$}
      \end{subtable}
      
      \bigskip
      
      \begin{subtable}{\textwidth}      
      \centering
      \begin{tabular}{c|l|llll}
        \hline
        metric & algorithm  & Astro & roadCA & webBS & YTube\\\hline
        \multirow{4}{*}{time (s)} & $\greedy$ & 1.553 & 4.364 & 4483.8 & 292.3 \\
        & $\abbsgreedy$ & \textbf{0.036} & \textbf{0.857} & \textbf{1.31} & \textbf{0.887} \\
        & 1-mean DSG & 3.772 & 99.07 & 124.0 & 90.44 \\
        & $\abbsgpp$ & 3.772 & 99.07 & 124.0 & 90.44 \\
        \hline
        \multirow{4}{*}{\begin{tabular}{c}density \\ ($M_p$) \end{tabular}}  & $\greedy$ & 56.86 & 3.296 & \textbf{205.9} & \textbf{85.28} \\
        & $\abbsgreedy$ & 57.48 & 3.333 & \textbf{205.9} & 85.23 \\
        & 1-mean DSG & 61.59 & 3.848 & \textbf{205.9} & 84.74 \\
        & $\abbsgpp$ & \textbf{61.87} & \textbf{3.867} & \textbf{205.9} & 85.24 \\
        \hline
      \end{tabular}
      \caption{Results for $p = 0.5$}    
      \end{subtable}
      \caption{Wall clock times and densities of returned solutions for the algorithms 
      we tested for $p \in \{-1, 0.5\}$. We write $\sgreedy$ as $\abbsgreedy$ 
      and $\sgpp$ as $\abbsgpp$ for spacing issues. Results for all 
      ten graphs and more values of $p$ are in 
      Appendix~\ref{sec:experiments-app}.}
      \label{fig:exp-lt1}
    \end{table}
    
    For $p < 1$, we evaluate our two approximation algorithms from
    Section~\ref{sec:apx}, $\greedy$, and the heuristic $\sgpp$ described in
    Section~\ref{sec:iter-heuristics}.  We run all four algorithms on all ten real-world graphs and for  
    $p \in \{0.25, 0.5, 0.75\}$. We only run our approximation algorithms and $\sgpp$
    for $p \in \{-1, -0.5\}$ as $\greedy$ is not well-defined for $p < 0$. 
    We run $\sgpp$ for 100 iterations. For our approximation algorithm returning the $1$-mean DSG,
    as noted in Section~\ref{sec:iter-heuristics}, we use $\origgpp$ to compute a near-optimal solution. 
    In Table~\ref{fig:exp-lt1}, we present the results for 
    four of the ten graphs and for two values of $p$. The full table of results is given 
    in Appendix~\ref{sec:experiments-app}.
    
    We first want to highlight the fact that the orderings produced by $\sgreedy$ and $\sgpp$ 
    are not dependent on $p$. One can therefore run 
    these algorithms once to compute approximations to \pdsg for many different values of
    $p < 1$.
    
    For values of $p \in (0,1)$, all four algorithms return graphs with roughly the same
    density, however, $\sgreedy$ is much faster than the other three algorithms. 
    As $\sgreedy$ is only $1$ iteration of $\sgpp$ and we run $\sgpp$ for 
    $100$ iterations, we find that $\sgreedy$ is roughly around $100$ times faster
    than $\sgpp$ and the algorithm finding the 1-mean densest subgraph. Furthermore, 
    $\sgreedy$ is anywhere from $2$ to $20$ times as fast as $\greedy$.
    For values of $p < 0$, $\sgreedy$ again performs the best as it returns subgraphs of
    comparable density and is also the fastest algorithm. We also want to point out that 
    although $\sgpp$ sometimes produces subgraphs with larger density than all
    algorithms tested, the benefit of iteration with $\sgpp$ is marginal.

\section{Conclusion}

In this paper, we provide a deeper understanding of \pdsg for the entire range
of $p$. For $p \in (-\frac{1}{8}, 0) \cup (0, \frac{1}{4})$, we show \pdsg is NP-Hard
via a nontrivial reduction and we show how to extend this NP-hardness proof to 
$p \in (-3, 0) \cup (0,1)$ for the weighted version of the problem. This mostly 
resolves the open question regarding the complexity of \pdsg for $p < 1$.
We presented two approximation algorithms for $p < 1$, which are the
first algorithms with any provable approximation ratio for this regime of $p$. We
provided a faster implementation of $\greedy$ for $p > 1$ called $\lgreedy$. We 
developed a heuristic based on iterative algorithms for finding near-optimal solutions 
to \pdsg that utilized $\lgreedy$. This algorithm is ideal when $\lgreedy$ does not find 
a near-optimal solution after a single iteration.
Experiments show that our approximation algorithms for both $p > 1$ and 
$p < 1$ are highly effective and scalable.

We outline a few future directions. We established NP-Hardness
for \pdsg where $p \in (-\frac{1}{8},0) \cup (0,\frac{1}{4})$ and
for weighted \pdsg where $p\in (-3, 0)\cup(0, 1)$ and also provided simple $1/2$-approximation
algorithms. We believe that our NP-Hardness proof can be strengthened
to show that it is APX-Hard to approximate the optimum, that is,
there is some fixed $\delta > 0$ such that it is NP-Hard to obtain
a $(1-\delta)$-approximation. The resulting $\delta$ is likely to be
quite small. It is important to obtain tight approximation bounds
for \pdsg that closes that gap between $1/2$ and $(1-\delta)$.  This effort is
likely to lead to the development of more sophisticated approximation algorithms
and heuristics for the problem.

\paragraph{Acknowledgements.} The authors would like to thank Farouk Harb for helpful 
discussions and for providing code for the Frank-Wolfe algorithm.

\bibliographystyle{alpha}
\bibliography{main}

\newcommand{\etalchar}[1]{$^{#1}$}
\begin{thebibliography}{MGPV20}

\bibitem[AC09]{ac-09}
Reid Andersen and Kumar Chellapilla.
\newblock Finding dense subgraphs with size bounds.
\newblock In {\em International Workshop on Algorithms and Models for the
  Web-Graph}, pages 25--37. Springer, 2009.

\bibitem[AITT00]{aitt-00}
Yuichi Asahiro, Kazuo Iwama, Hisao Tamaki, and Takeshi Tokuyama.
\newblock Greedily finding a dense subgraph.
\newblock {\em Journal of Algorithms}, 34(2):203--221, 2000.

\bibitem[AKS{\etalchar{+}}14]{angel-14}
Albert Angel, Nick Koudas, Nikos Sarkas, Divesh Srivastava, Michael Svendsen,
  and Srikanta Tirthapura.
\newblock Dense subgraph maintenance under streaming edge weight updates for
  real-time story identification.
\newblock {\em The VLDB journal}, 23:175--199, 2014.

\bibitem[BGM14]{bgm-14}
Bahman Bahmani, Ashish Goel, and Kamesh Munagala.
\newblock Efficient primal-dual graph algorithms for {MapReduce}.
\newblock In {\em International Workshop on Algorithms and Models for the
  Web-Graph}, pages 59--78. Springer, 2014.

\bibitem[BGP{\etalchar{+}}20]{boob-20}
Digvijay Boob, Yu~Gao, Richard Peng, Saurabh Sawlani, Charalampos Tsourakakis,
  Di~Wang, and Junxing Wang.
\newblock Flowless: Extracting densest subgraphs without flow computations.
\newblock In {\em Proceedings of The Web Conference 2020}, pages 573--583,
  2020.

\bibitem[BH03]{bader-03}
Gary~D Bader and Christopher~WV Hogue.
\newblock An automated method for finding molecular complexes in large protein
  interaction networks.
\newblock {\em BMC bioinformatics}, 4(1):1--27, 2003.

\bibitem[Bic12]{b-12}
Allan Bickle.
\newblock Structural results on maximal $k$-degenerate graphs.
\newblock {\em Discussiones Mathematicae Graph Theory}, 32(4):659--676, 2012.

\bibitem[BSW19]{bsw-19}
Digvijay Boob, Saurabh Sawlani, and Di~Wang.
\newblock Faster width-dependent algorithm for mixed packing and covering
  {LPs}.
\newblock {\em Advances in Neural Information Processing Systems 32 (NIPS
  2019)}, 2019.

\bibitem[Cha00]{c-00}
Moses Charikar.
\newblock Greedy approximation algorithms for finding dense components in a
  graph.
\newblock In {\em International Workshop on Approximation Algorithms for
  Combinatorial Optimization}, pages 84--95. Springer, 2000.

\bibitem[CQT22]{cqt-22}
Chandra Chekuri, Kent Quanrud, and Manuel~R Torres.
\newblock Densest subgraph: Supermodularity, iterative peeling, and flow.
\newblock In {\em Proceedings of the 2022 Annual ACM-SIAM Symposium on Discrete
  Algorithms (SODA)}, pages 1531--1555. SIAM, 2022.

\bibitem[DCS17]{dcs-17}
Maximilien Danisch, T.{-}H.~Hubert Chan, and Mauro Sozio.
\newblock Large scale density-friendly graph decomposition via convex
  programming.
\newblock In {\em Proceedings of the 26th International Conference on World
  Wide Web}, pages 233--242, 2017.

\bibitem[DGP07]{dourisboure-07}
Yon Dourisboure, Filippo Geraci, and Marco Pellegrini.
\newblock Extraction and classification of dense communities in the web.
\newblock In {\em Proceedings of the 16th international conference on World
  Wide Web}, pages 461--470, 2007.

\bibitem[DH11]{suitesparse}
Timothy~A Davis and Yifan Hu.
\newblock The {University of Florida} sparse matrix collection.
\newblock {\em ACM Transactions on Mathematical Software (TOMS)}, 38(1):1--25,
  2011.

\bibitem[DJD{\etalchar{+}}09]{du-09}
Xiaoxi Du, Ruoming Jin, Liang Ding, Victor~E Lee, and John~H Thornton~Jr.
\newblock Migration motif: a spatial-temporal pattern mining approach for
  financial markets.
\newblock In {\em Proceedings of the 15th ACM SIGKDD international conference
  on knowledge discovery and data mining}, pages 1135--1144, 2009.

\bibitem[Far08]{f-08}
Andr{\'a}s Farag{\'o}.
\newblock A general tractable density concept for graphs.
\newblock {\em Mathematics in Computer Science}, 1(4):689--699, 2008.

\bibitem[FCT14]{ft-14}
Martin Farach-Colton and Meng-Tsung Tsai.
\newblock Computing the degeneracy of large graphs.
\newblock In {\em LATIN 2014: Theoretical Informatics: 11th Latin American
  Symposium, Montevideo, Uruguay, March 31--April 4, 2014. Proceedings 11},
  pages 250--260. Springer, 2014.

\bibitem[FNBB06]{fratkin-06}
Eugene Fratkin, Brian~T Naughton, Douglas~L Brutlag, and Serafim Batzoglou.
\newblock {MotifCut:} regulatory motifs finding with maximum density subgraphs.
\newblock {\em Bioinformatics}, 22(14):e150--e157, 2006.

\bibitem[FRM19]{f-19}
Andr{\'a}s Farag{\'o} and Zohre R~Mojaveri.
\newblock In search of the densest subgraph.
\newblock {\em Algorithms}, 12(8):157, 2019.

\bibitem[GGT89]{ggt-89}
Giorgio Gallo, Michael~D Grigoriadis, and Robert~E Tarjan.
\newblock A fast parametric maximum flow algorithm and applications.
\newblock {\em SIAM Journal on Computing}, 18(1):30--55, 1989.

\bibitem[GKT05]{gibson-05}
David Gibson, Ravi Kumar, and Andrew Tomkins.
\newblock Discovering large dense subgraphs in massive graphs.
\newblock In {\em Proceedings of the 31st international conference on Very
  large data bases}, pages 721--732, 2005.

\bibitem[Gol84]{g-84}
Andrew~V Goldberg.
\newblock {\em Finding a maximum density subgraph}.
\newblock University of California Berkeley, 1984.

\bibitem[GT15]{gt-15}
Aristides Gionis and Charalampos~E Tsourakakis.
\newblock Dense subgraph discovery: {KDD} 2015 tutorial.
\newblock In {\em Proceedings of the 21th ACM SIGKDD International Conference
  on Knowledge Discovery and Data Mining}, pages 2313--2314, 2015.

\bibitem[HQC22]{hqc-22}
Elfarouk Harb, Kent Quanrud, and Chandra Chekuri.
\newblock Faster and scalable algorithms for densest subgraph and
  decomposition.
\newblock In {\em Advances in Neural Information Processing Systems}, 2022.

\bibitem[HWC17]{hwc-17}
Shuguang Hu, Xiaowei Wu, and TH~Hubert Chan.
\newblock Maintaining densest subsets efficiently in evolving hypergraphs.
\newblock In {\em Proceedings of the 2017 ACM on Conference on Information and
  Knowledge Management}, pages 929--938, 2017.

\bibitem[HYH{\etalchar{+}}05]{hu-05}
Haiyan Hu, Xifeng Yan, Yu~Huang, Jiawei Han, and Xianghong~Jasmine Zhou.
\newblock Mining coherent dense subgraphs across massive biological networks
  for functional discovery.
\newblock {\em Bioinformatics}, 21(suppl\_1):i213--i221, 2005.

\bibitem[JZT{\etalchar{+}}22]{ji-22}
Yingsheng Ji, Zheng Zhang, Xinlei Tang, Jiachen Shen, Xi~Zhang, and Guangwen
  Yang.
\newblock Detecting cash-out users via dense subgraphs.
\newblock In {\em Proceedings of the 28th ACM SIGKDD Conference on Knowledge
  Discovery and Data Mining}, pages 687--697, 2022.

\bibitem[KM18]{km-18}
Yasushi Kawase and Atsushi Miyauchi.
\newblock The densest subgraph problem with a convex/concave size function.
\newblock {\em Algorithmica}, 80(12):3461--3480, 2018.

\bibitem[KNT06]{kumar-06}
Ravi Kumar, Jasmine Novak, and Andrew Tomkins.
\newblock Structure and evolution of online social networks.
\newblock In {\em Proceedings of the 12th ACM SIGKDD international conference
  on Knowledge discovery and data mining}, pages 611--617, 2006.

\bibitem[Law76]{l-76}
E.L. Lawler.
\newblock {\em Combinatorial Optimization: Networks and Matroids}.
\newblock Holt, Rinehart and Winston, 1976.

\bibitem[LBG20]{lanciano-20}
Tommaso Lanciano, Francesco Bonchi, and Aristides Gionis.
\newblock Explainable classification of brain networks via contrast subgraphs.
\newblock In {\em Proceedings of the 26th ACM SIGKDD International Conference
  on Knowledge Discovery \& Data Mining}, pages 3308--3318, 2020.

\bibitem[LK14]{snap}
Jure Leskovec and Andrej Krevl.
\newblock {SNAP Datasets}: {Stanford} large network dataset collection.
\newblock \url{http://snap.stanford.edu/data}, June 2014.

\bibitem[LMFB23]{lmfb-23}
Tommaso Lanciano, Atsushi Miyauchi, Adriano Fazzone, and Francesco Bonchi.
\newblock A survey on the densest subgraph problem and its variants.
\newblock {\em arXiv preprint arXiv:2303.14467}, 2023.

\bibitem[Lov83]{l-83}
L{\'a}szl{\'o} Lov{\'a}sz.
\newblock Submodular functions and convexity.
\newblock In {\em Mathematical programming the state of the art}, pages
  235--257. Springer, 1983.

\bibitem[LRJA10]{lrja-10}
Victor~E Lee, Ning Ruan, Ruoming Jin, and Charu Aggarwal.
\newblock A survey of algorithms for dense subgraph discovery.
\newblock In {\em Managing and Mining Graph Data}, pages 303--336. Springer,
  2010.

\bibitem[LZW{\etalchar{+}}08]{li-08}
Zhenping Li, Shihua Zhang, Rui-Sheng Wang, Xiang-Sun Zhang, and Luonan Chen.
\newblock Quantitative function for community detection.
\newblock {\em Physical review E}, 77(3):036109, 2008.

\bibitem[MB83]{mb-83}
David~W Matula and Leland~L Beck.
\newblock Smallest-last ordering and clustering and graph coloring algorithms.
\newblock {\em Journal of the ACM (JACM)}, 30(3):417--427, 1983.

\bibitem[MGPV20]{mgpv-20}
Fragkiskos~D Malliaros, Christos Giatsidis, Apostolos~N Papadopoulos, and
  Michalis Vazirgiannis.
\newblock The core decomposition of networks: Theory, algorithms and
  applications.
\newblock {\em The VLDB Journal}, 29:61--92, 2020.

\bibitem[MK18]{mk-18}
Atsushi Miyauchi and Naonori Kakimura.
\newblock Finding a dense subgraph with sparse cut.
\newblock In {\em Proceedings of the 27th ACM International Conference on
  Information and Knowledge Management}, pages 547--556, 2018.

\bibitem[PQ82]{pq-82}
Jean-Claude Picard and Maurice Queyranne.
\newblock A network flow solution to some nonlinear 0-1 programming problems,
  with applications to graph theory.
\newblock {\em Networks}, 12(2):141--159, 1982.

\bibitem[Sch03]{s-03}
Alexander Schrijver.
\newblock {\em Combinatorial optimization: polyhedra and efficiency},
  volume~24.
\newblock Springer Science \& Business Media, 2003.

\bibitem[SM03]{spirin-03}
Victor Spirin and Leonid~A Mirny.
\newblock Protein complexes and functional modules in molecular networks.
\newblock {\em Proceedings of the national Academy of sciences},
  100(21):12123--12128, 2003.

\bibitem[SMT22]{smt-22}
Issey Sukeda, Atsushi Miyauchi, and Akiko Takeda.
\newblock A study on modularity density maximization: Column generation
  acceleration and computational complexity analysis.
\newblock {\em arXiv preprint arXiv:2206.10901}, 2022.

\bibitem[TC21]{tc-21}
Charalampos Tsourakakis and Tianyi Chen.
\newblock Dense subgraph discovery: Theory and application ({Tutoral at SDM
  2021}), 2021.
\newblock
  \url{https://tsourakakis.com/dense-subgraph-discovery-theory-and-applications-tutorial-sdm-2021/}.

\bibitem[Tso14]{t-14}
Charalampos~E Tsourakakis.
\newblock A novel approach to finding near-cliques: The triangle-densest
  subgraph problem.
\newblock {\em arXiv preprint arXiv:1405.1477}, 2014.

\bibitem[Tso15]{t-15}
Charalampos Tsourakakis.
\newblock The $k$-clique densest subgraph problem.
\newblock In {\em Proceedings of the 24th international conference on world
  wide web}, pages 1122--1132, 2015.

\bibitem[VBK21]{vbk-21}
Nate Veldt, Austin~R Benson, and Jon Kleinberg.
\newblock The generalized mean densest subgraph problem.
\newblock In {\em Proceedings of the 27th ACM SIGKDD Conference on Knowledge
  Discovery \& Data Mining}, pages 1604--1614, 2021.

\bibitem[ZZY{\etalchar{+}}17]{zhang-17}
Si~Zhang, Dawei Zhou, Mehmet~Yigit Yildirim, Scott Alcorn, Jingrui He, Hasan
  Davulcu, and Hanghang Tong.
\newblock Hidden: hierarchical dense subgraph detection with application to
  financial fraud detection.
\newblock In {\em Proceedings of the 2017 SIAM International Conference on Data
  Mining}, pages 570--578. SIAM, 2017.

\end{thebibliography}

\newpage
\appendix

\section{Omitted Proofs from Section~\ref{sec:prelim}}
\label{sec:prelim-proofs-app}
  Proposition~\ref{prop:means} is folklore and we only include the following proof 
  for the sake of completeness.
  \begin{proof}[Proof of Proposition~\ref{prop:means}]
    It suffices to show that $M_p(S)$ is an increasing function in $p$ for a fixed $S$. 
    Fix $S \subseteq V$ and assume $p \not \in \{-\infty, 0, \infty\}$. We have
    \begin{align}
      \frac{d}{dp}\,M_p(S)
      &=
      M_p(S) \cdot \frac{d}{dp}\, \ln M_p(S)\notag\\
      &\ge 
      \frac{d}{dp}\, \ln M_p(S)\notag\\
      &=
      -\frac{1}{p^2} \ln \rho_p(S) + \frac{1}{p\cdot f_p(S)}\cdot\frac{d}{dp}\,f_p(S)\notag\\
      &=
      -\frac{1}{p^2} \ln \rho_p(S) + \frac{1}{p}\cdot \frac{\sum_{v \in S} d_S(v)^p\ln d_S(v)}
        {\sum_{v \in S} d_S(v)^p}\label{eq:objective-deriv}
    \end{align}
    where~(\ref{eq:objective-deriv}) is at least $0$ if and only if
    \begin{equation}\label{eq:M_p-increasing}
      \ln \rho_p(S) 
      \le 
      \frac{\sum_{v \in S} d_S(v)^p\ln d_S(v)^p}{\sum_{v \in S} d_S(v)^p}.
    \end{equation}
    Rearranging,~(\ref{eq:M_p-increasing}) is equivalent to
    \[
      \ln(1/\abs{S}) 
      \le 
      \sum_{v \in S} x_v \ln x_v
    \]
    where $x_v := \frac{d_S(v)^p}{\sum_{v \in S} d_S(v)^p}$.
    As $x$ is a probability distribution over $S$, the final inequality
    holds by recognizing that the uniform distribution maximizes the entropy function.
    Thus, $M_p(S)$ is an increasing function in $p$ over $(-\infty, 0)$ and $(0, \infty)$.
    One can extend the proof for $p \in \{-\infty, 0, \infty\}$ by recognizing that 
    the definition for $M_p(S)$ in these cases is derived by taking the limit
    (e.g., $\lim_{p\to\infty} M_p(S) = M_\infty(S)$). This concludes the proof.
  \end{proof}

\section{Omitted Proofs from Section~\ref{sec:fast}}
\label{sec:fast-proofs-app}
  \begin{proof}[Proof of Lemma~\ref{lem:approx-peel}]
    We have that $f_p$ is supermodular for $p \ge 1$~\cite{vbk-21}.
    Let $S^*$ be an optimal subset and $\rho^* = \rho_p(S^*)$. In the proof of 
    Theorem 3.1 of~\cite{cqt-22}, it is shown that for all $v \in S^*$, 
    $f_p(v \mid S^* - v) \ge \rho^*$.
    
    Let $v_i$ be the first element of $S^*$ that is peeled. Let $S_i$ be the set of
    vertices including $v_i$ and all subsequent vertices in the peeling order. 
    By the update rule
    in~(\ref{eq:apx}), we have $f_p(v_i \mid S_i - v_i) \le (1+\eps)\min_{u \in S_i}
    f_p(u \mid S_i - u)$.
    In Proposition 3.2 of~\cite{cqt-22}, it is shown that for all $S \subseteq V$,
    we have $\sum_{v\in S} f_p(v \mid S - v) \le (p+1)f(S)$. Repeating the 
    argument in Theorem 3.1 in~\cite{cqt-22} only with a small change, we have
    \begin{align*}
      \frac{f_p(S_i)}{\abs{S_i}}
      &=
      \frac{f_p(S_i)}{\sum_{u \in S_i}f(u\mid S_i - u)}
        \cdot \frac{\sum_{u \in S_i}f(u \mid S_i - u)}{\abs{S_i}}\\
      &\ge
      \frac{1}{p+1}\cdot\frac{\abs{S_i}\cdot f(v_i \mid S_i - v_i)}{(1+\eps)\abs{S_i}}\\
      &\ge 
      \frac{\rho^*}{(1+\eps)(p+1)}\\
      &\ge 
      \frac{(1-\eps)\rho^*}{p+1}.
    \end{align*}
    Raising both sides of the inequality to the $\frac{1}{p}$-th power, we have
    $M_p(S_i) \ge (\frac{1-\eps}{p+1})^{1/p} \cdot M_p^*$.
    This concludes the proof.
  \end{proof}
  \begin{proof}[Proof of Lemma~\ref{lem:fast-ind-approx}]
      Note that it suffices to consider $\alpha = 1+\frac{\eps}{p}$ as the left-hand
      side is increasing in $\alpha$. Consider the function
      \begin{equation}\label{eq:f}
        \frac{(\alpha d)^p - (\alpha d - 1)^p}{e^\eps}.
      \end{equation}
      Suppose we can show this function is decreasing in $\eps$. This would imply
      $\frac{(\alpha d)^p - (\alpha d-1)^p}{e^\eps} \le d^p - (d-1)^p$, which would
      prove the statement as $e^\eps \le 1 + 2\eps$. We prove that the function
      is decreasing in $\eps$ by arguing that the derivative of~(\ref{eq:f}) 
      with respect to $\eps$ is non-positive. 
        
      Note that $\frac{d}{d\eps} \alpha d = \frac{d}{p}$. We have
      \[
        \frac{d}{d\eps} \frac{(\alpha d)^p - (\alpha d - 1)^p}{e^\eps}
        =
        \frac{d[(\alpha d)^{p-1} - (\alpha d - 1)^{p-1}]e^\eps - [(\alpha d)^p - (\alpha d - 1)^p]
          e^\eps}{\eps^{2\eps}}.
      \]
      Since we need this quantity to be non-positive, it suffices to show
      \[
        d
        \le
        \frac{(\alpha d)^p - (\alpha d - 1)^p}{(\alpha d)^{p-1} - (\alpha d - 1)^{p-1}}
      \]
      We have
      \[
        \frac{(\alpha d)^p - (\alpha d - 1)^p}{(\alpha d)^{p-1} - (\alpha d - 1)^{p-1}}
        =
        \alpha d \cdot \frac{1 - (1 - \frac{1}{\alpha d})^p}{1 - (1 - \frac{1}{\alpha d})^{p-1}}
        \ge 
        \alpha d
        \ge
        d,
      \]
      where the first inequality holds as
      $(1 - \frac{1}{\alpha d})^p \le (1 - \frac{1}{\alpha d})^{p-1}$.  This concludes
      the proof.
    \end{proof}  
  
  \begin{proof}[Proof of Lemma~\ref{lem:fast-time}]
      The initialization of $D$ and $D'$ takes $O(m)$ time and the initialization of
      $A$ takes $O(m + n\log n)$ time. At each iteration, as $A$ is a min-heap, finding
      the minimum and removing it only takes $O(1)$ time and $O(\log n)$ time,
      respectively. Suppose $v_i$ is the vertex chosen at the $i$-th iteration.
      We consider the first three lines of the inner for loop. The number of iterations
      of the inner for loop is $O(d_G(v_i))$ and the first three lines take $O(\log n)$
      time. Thus, the total time for these three lines throughout the entire run of the
      algorithm is $O(m \log n)$ time.
      
      The only part of the algorithm unaccounted for is the conditional statement in the 
      inner for loop. In this if-statement,
      for a vertex $u$, we update $A[w]$ for all $w \in N(u) \cap S_{i+1}$ only 
      when the approximate degree, $D'[u]$, exceeds 
      $(1 + \frac{\eps}{p})D[u]$, where $D[u]$ is the exact current degree. 
      $u$ is therefore updated only when the actual degree drops by at least a 
      $1+\frac{\eps}{p}$ factor. Since $D[u] \le n$, the maximum number of times
      we update any vertex is $O(\log_{1+\frac{\eps}{p}}(n))$. If $p \ge \eps$, this is at
      most $O(\frac{p\log n}{\eps})$.
      The time it takes to update $A$ is $O(d_G(u)\log n)$. Thus, the total
      time spent on a single vertex $u$ throughout the entire run
      of the algorithm is $O\left(d_G(u) \cdot \frac{\log^2n}{\log(1+\frac{\eps}{p})}\right)$. This 
      concludes the proof.
    \end{proof}
    
        \begin{proof}[Proof of Theorem~\ref{thm:fast}.]
      Running time is analyzed in Lemma~\ref{lem:fast-time}, so we focus on
      the approximation guarantee. Consider an arbitrary iteration $i$ of
      the algorithm. By Lemma~\ref{lem:approx-peel}, it suffices to show that
      $f_p(v_i \mid S_i - v_i) \le (1+\eps)\min_{v \in S_i} f_p(v \mid S_i - v)$.
      We show something stronger. Let $A_i$ be the state of the min-heap $A$ at
      the start of iteration $i$. We show that for all $v \in S_i$, we have
      $A_i[v] \le (1+\eps)f_p(v \mid S_i - v)$. To prove this, we first note that this
      holds on the first iteration as we compute the values $f_p(v \mid V - v)$ exactly. Let
      $D_i'$ be the state of $D'$ at the beginning of iteration $i$. As 
      $D$ maintains exact degrees, we use $d_{S_i}(v)$ in place of $D[v]$. Note that
      the algorithm maintains $A$ to ensure
      $A_i[v] = d_{S_i}(v)^p + \sum_{u \in N(v) \cap S_i} D_i'[u]^p - (D_i'[u]-1)^p$.
      We see that $D'[u] \le (1 + \frac{\eps}{p})d_{S_i}(u)$ is guaranteed for
      all $u \in S_i$
      by the condition in the inner for loop of iteration $i-1$ (if $i = 1$, then this statement
      holds as $D_1'[v] = d_G(v)$). Therefore,
      \[
        A_i[v]
        \le
        d_{S_i}(v) + (1+\eps)\sum_{u \in N(v) \cap S_i} d_{S_i}(u)^p - (d_{S_i}(u)-1)^p
      \]
      by Lemma~\ref{lem:fast-ind-approx}.
      Using the rewriting of $f_p(v \mid S_i - v)$ given in Equation~(\ref{eq:marginal}),
      it follows that $A_i[v] \le (1+\eps)f_p(v \mid S_i - v)$. This
      concludes the proof.
    \end{proof}

\section{Omitted Proofs from Section~\ref{sec:apx}}
    \begin{proof}[Proof of Lemma~\ref{lem:bad-construct}]
      We start with~(1), which follows from the
      fact that $M_p^*(H) \ge \alpha d$.
      
      We next prove~(2). The algorithm $\sgreedy$ (from 
      Section~\ref{sec:apx-maxcore}) first peels $H$ as $M_{-\infty}^*(H) = d$, then it
      peels $K_{d+1,D}$, and it finally peels all of the cliques.
      We have to analyze the density of all suffixes of this ordering and show 
      that all have density at most $d+2$ as $r \to \infty$.
      
      We start by analyzing the suffixes while peeling $H$. 
      Let $H_i$ be the remaining vertices from 
      $H$ after peeling $i$ vertices for $i = 0,1,\ldots, n_H$.
      Let $T$ be the vertices of $G$ without $H$, so $\abs{T} = (d+1+D) + r(d+3)$.
      Thus, on the suffixes while peeling $H$, the algorithm obtains value
      \begin{align*}
        \max_{i=0,1,\ldots,n_H} \left(\frac{\sum_{v \in H_i} d_{H_i}(v)^p 
          + f_p(T)}{\abs{H_i} + (d+1+D) + r(d+3)}\right)^{1/p}.
      \end{align*}
      We have 
      \[
        f_p(T) = (d+1)D^p  + D(d+1)^p + r(d+3)(d+2)^p,
      \] 
      so $f_p(T) \le 2(d+1)D + r(d+3)(d+2)^p$ and note that $dD = o(r)$. 
      As $\sum_{v \in H_i} d_{H_i}(v)^p \le n_H^{1+p}$ and $n_H^{1+p} = o(r)$, 
      \[
        \lim_{r\to\infty}\frac{\sum_{v \in H_i} d_{H_i}(v)^p + f_p(T)}
          {\abs{H_i} + (d+1+D)+ r(d+3)}
        =
        (d+2)^p.
      \]
      Thus, as $r \to \infty$, the maximum density attained while peeling
      $H$ is at most $d+2$.
      
      We next analyze the suffixes while peeling $K_{d+1,D}$. We note
      that the density achieved while peeling $K_{d+1,D}$ is at most
      \[
        \max_{\substack{\ell \in [1,d+1] \\ q \in [1,D]}}
          \left(\frac{\ell q^p + q \ell^p + r(d+3)(d+2)^p}{\ell + q + r(d+3)}\right)^{1/p}.
      \]
      As $\ell q^p + q\ell^p \le 2 (d+1)D$ and $dD = o(r)$, we have 
      \[
        \lim_{r \to \infty}
        \frac{\ell q^p + q \ell^p + r(d+3)(d+2)^p}{\ell + q + r(d+3)}
        =
        (d+2)^p.
      \] 
      Therefore, $\lim_{r \to \infty} M_p(S_1) = d+1$.
      
      We now prove~(3). We start by showing that $M_1^*(G)$
      is $K_{d+1,D}$. We have 
      \[
        M_1(K_{d+1,D})
        = 
        \frac{2(d+1)D}{d+1 +D}
        > 
        2d
      \]
      for $D \ge 2d^2$.
      Furthermore, as $M_{-\infty}^*(H) = d$, Proposition~\ref{prop:core-to-dsg} 
      implies that $M_1^*(H) \le 2d$. We also have that 
      $M_1^*(K_{d+3}) = d+2$ and $M_1(K_{d+1,D}) > d+2$ for
      $D \ge d+6$. Therefore, $M_1^*(G) = K_{d+1,D}$.
      
      All that remains is to upper bound $M_p(K_{d+1,D})$. We have
      \begin{align*}
        M_p(K_{d+1,D})
        &=
        \left(\frac{(d+1)D^p + D(d+1)^p}{d+1+D}\right)^{1/p}.
      \end{align*}
      As $p < 1$, we have $D^p = o(D)$, which implies
      \[
        \lim_{D \to \infty} \frac{(d+1)D^p + D(d+1)^p}{d+1+D}
        =
        (d+1)^p.
      \]
      Therefore, $\lim_{D \to \infty} M_p(K_{d+1,D}) = d+1$.
    \end{proof}      

    To prove Theorem~\ref{thm:comparing-p-to-1}, we construct a graph whose
    degeneracy is roughly half of the density of an optimal solution to \pdsg{p}.
    Note that Proposition~\ref{prop:core-to-dsg} shows $M_p^* \le 2M_{-\infty}^*$.
    We therefore attempt to construct a $d$-degenerate graph where as
    many vertices as possible have degree $2d$. Towards this end, 
    we say that a $d$-degenerate graph is 
    \emph{edge maximal} if adding any edge to the graph makes the graph
    $(d+1)$-degenerate. The following theorem gives a characterization of 
    edge-maximal $d$-degenerate graphs via degree sequences.
  
    \begin{theorem}[\cite{b-12}]\label{thm:bickle}
      Let $d$ be a positive integer.
      A sequence of positive integers $d_1 \ge d_2 \ge \cdots \ge d_n$ is the degree 
      sequence of an edge-maximal $d$-degenerate graph if and only if
      \begin{enumerate}
        \item
          $d \le d_i \le \min\{n-1, n + d - i\}$ for all $i \in [n]$, and 
        \item
          $\sum_{i=1}^n d_i = 2\left(dn - \binom{d+1}{2}\right)$.
      \end{enumerate}
    \end{theorem}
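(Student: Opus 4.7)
The plan is to prove both directions by induction on $n$, with base case $n = d+1$, where conditions (1) and (2) jointly force the constant sequence $(d, d, \ldots, d)$, uniquely realized by $K_{d+1}$, which is trivially edge-maximal $d$-degenerate.

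For the necessity direction, let $G$ be edge-maximal $d$-degenerate with sorted degrees $d_1 \ge \cdots \ge d_n$. I would first establish the lower bound $d_i \ge d$ as follows: if some vertex $v$ had degree less than $d$, then for $n \ge d+2$ vertex $v$ would have a non-neighbor $u$, and adding $uv$ preserves $d$-degeneracy (peel $v$ first with new degree at most $d$, then use any peeling order of $G$), contradicting edge-maximality. Next, I would establish the upper bound $d_i \le n+d-i$ using a construction (reverse peeling) order $u_1,\ldots,u_n$ in which each $u_j$ has at most $d$ neighbors in $\{u_1,\ldots,u_{j-1}\}$; this gives $\deg(u_j) \le d + (n-j)$, and sorting these bounds in decreasing order shows the $i$-th largest degree is at most $n+d-i$. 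For the sum condition, I would locate a vertex $v$ of degree exactly $d$ whose removal yields an edge-maximal $d$-degenerate graph $G - v$ (any first-peeled vertex works; $G-v$ is edge-maximal since any missing edge in $G-v$ would also be addable to $G$ by peeling $v$ first), and apply induction: $|E(G)| = d + |E(G-v)| = dn - \binom{d+1}{2}$.

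For the sufficiency direction, given a sequence satisfying both conditions, I would use a Havel--Hakimi-style reduction. The two conditions force $d_n = d$: the upper bound yields $d_n \le \min\{n-1, d\} = d$ (for $n \ge d+1$) and the lower bound yields $d_n \ge d$. I would then remove $d_n$, decrement the top $d$ entries $d_1,\ldots,d_d$ by one, verify that the resulting sequence of length $n-1$ satisfies conditions (1) and (2) with parameters $(n-1, d)$, and apply induction to obtain an edge-maximal $d$-degenerate graph $G'$ realizing this reduced sequence. I would then attach a new vertex $v$ adjacent to the $d$ vertices of $G'$ whose degrees were decremented, obtaining a graph $G$ realizing the original sequence, and verify $G$ is still edge-maximal $d$-degenerate by noting that $v$ can be peeled first of any added edge's incident vertex.

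The main obstacle lies in the sufficiency direction, specifically justifying the decrement step. I must show that for $n \ge d+2$ we have $d_d \ge d+1$, so that decrementing the top $d$ entries does not violate the lower bound. This will follow from a counting argument: if instead $d_d = d$, then $d_d = d_{d+1} = \cdots = d_n = d$ by sorting, and the sum constraint combined with the trivial upper bound $d_i \le n-1$ on the top $d-1$ entries forces $d(n-2) \le (d-1)(n-1)$, i.e., $n \le d+1$, contradicting $n \ge d+2$. I must also verify that the upper bound $d'_i \le (n-1)+d-i$ is preserved after decrement-and-resort, which reduces to a routine case analysis on how many entries at each threshold get shifted, using that the original sequence's level-set sizes satisfy the corresponding inequalities. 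These two verifications, together with showing the constructed $G$ is edge-maximal, constitute the core technical content.
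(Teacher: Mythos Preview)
The paper does not prove this theorem; it is quoted verbatim from Bickle~\cite{b-12} and used as a black box in the proof of Theorem~\ref{thm:comparing-p-to-1}. There is therefore no in-paper proof to compare your proposal against.

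That said, your inductive outline is essentially the standard argument and is sound. One point to tighten: in the sufficiency direction, your justification of edge-maximality of the constructed $G$ (``$v$ can be peeled first of any added edge's incident vertex'') is garbled and does not actually establish the claim. The clean way is via the edge count: every $d$-degenerate graph on $n$ vertices has at most $dn - \binom{d+1}{2}$ edges (immediate from the peeling order), and your $G$ has exactly this many (since $G'$ has $d(n-1)-\binom{d+1}{2}$ edges by induction and you add $d$ more). Hence adding any edge would exceed the maximum, so the result cannot be $d$-degenerate, and $G$ is edge-maximal. With this fix, and the routine verification that the decremented-and-resorted sequence still satisfies (1) for parameters $(n-1,d)$, your argument goes through.
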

    
    We use this theorem of Bickle in the following proof.

    \begin{proof}[Proof of Theorem~\ref{thm:comparing-p-to-1}]
      Consider the following sequence:
      \[
        d, d+1, d+2, \ldots, 2d- 1, 
          \underbrace{2d-1, 2d-1, \ldots, 2d-1}_{\text{$\binom{d+1}{2}$ copies}},
          \underbrace{2d, 2d, \ldots, 2d}_{\text{$n - d - \binom{d+1}{2}$ copies}}
      \] 
      It is easy to check that both conditions (1) and (2) of Theorem~\ref{thm:bickle}
      hold. Thus, given such integers $n$ and $d$, we can construct a 
      $d$-degenerate graph on $n$ vertices $G$ with the given degree 
      sequence.
      
      We now argue that $M_p(G) \ge (1-\eps)2d$, which would prove the theorem. 
      For any $p \in (-\infty, 1) \setminus \{0\}$, we have that $M_p(G)$ is
      \begin{align}
        &\left(\frac{(n-d-\binom{d+1}{2})(2d)^p + \binom{d+1}{2}(2d-1)^p 
          +\sum_{i=0}^{d-1} (d + i)^p}{n}\right)^{1/p} \\      
        &\ge\label{eq:mpg-rewrite}
        \left(\frac{n +\binom{d+1}{2}((1 - \frac{1}{2d})^p - 1)
          +d(2^{-p} - 1)}{n}\right)^{1/p} \cdot 2d,        
      \end{align} 
      where we use the fact that $(\frac{a+bx^p}{c})^{1/p}$ is an increasing
      function in $x$ when $x$, $a$, $b$, and $c$ are all positive.
      
      For $p \in (0,1)$, using the rewriting of $M_p(G)$ above in~(\ref{eq:mpg-rewrite}), 
      we have
      \[
        \frac{M_p(G)}{2d}
        \ge
        \left(1 - \frac{\frac{p(d+1)}{2} + dp}{n}\right)^{1/p}
        \ge
        \left(1 - \frac{\eps(\frac{p(d+1)}{2} + dp)}{2d}\right)^{1/p}
        \ge
        1 - \eps\cdot \frac{\frac{d+1}{2} + d}{2d}
        \ge 
        1-\eps,
      \]
      where the first inequality holds as $1 - (1 - \frac{1}{2d})^p \le \frac{p}{d}$
      and $1 - 2^{-p} \le p$, the second inequality uses the fact that 
      $n \ge \frac{2d}{\eps}$, and the third inequality holds as
      $(1+x)^{1/p} \ge 1+\frac{x}{p}$ for $x \ge 0$.

      Now for $p < 0$, again using the rewriting of $M_p(G)$
      in~(\ref{eq:mpg-rewrite}), we have
      \begin{equation}\label{eq:p-le-0}
        \frac{M_p(G)}{2d}
        \ge
        \left(1 + \frac{\binom{d+1}{2}((1 - \frac{1}{2d})^p - 1) + d(2^{-p} - 1)}{n}\right)^{1/p}
        \ge
        \left(1 + \abs{p}\cdot \eps\right)^{1/p} 
        \ge
        1 - \eps,
      \end{equation}  
      where the first inequality holds as
      $n \ge \frac{\binom{d+1}{2}((1 - \frac{1}{2d})^p - 1) + d(2^{-p} - 1)}
      {\eps\cdot \abs{p}}$, and
      the last inequality holds as 
      $1 - p\cdot \eps \le (1 - \eps)^p $.
      
      If we were to assume $p \in (-1,0)$, from~(\ref{eq:p-le-0}) above, we have
      \[
        \frac{\binom{d+1}{2}((1 - \frac{1}{2d})^p - 1) + d(2^{-p} - 1)}
          {\eps\cdot \abs{p}}
        \le
        \frac{\binom{d+1}{2}\frac{\abs{p}}{d} + 2d\abs{p}}{\eps \cdot \abs{p}}
        \\=
        \frac{\frac{d+1}{2} + d}{\eps}
        \le
        \frac{2d}{\eps},
      \]
      where the first inequality holds as $(1 - \frac{1}{2d})^p - 1 \le \frac{\abs{p}}{d}$ and 
      $2^{-p} - 1 \le \abs{p}$. Thus, for the case of $p \in (-1,0)$, it suffices to take
      $n \ge \frac{2d}{\eps}$. This concludes the proof.
    \end{proof}

\section{Omitted Proofs from Section~\ref{sec:hardness}}
  \label{sec:app-hardness}
  \subsection{Hardness for $p \in (0,1)$}
  We start by providing the proof of Lemma~\ref{lem:int-flat}.
  
  \begin{proof}[Proof of Lemma~\ref{lem:int-flat}]
    Fix $s \in \N$.  Let $x^*$ be a vector such that 
    $x_i^* \in \{\ceil{s/n}, \ceil{s/n}-1\}$ for all $i \in [n]$ and $\sum_i x_i^* = s$.
    Let $x \in \Z^n$ and $x \ge 0$ such that $\sum_i x_i = s$ and at least one entry of $x$ 
    is not in $\{\ceil{s/n}, \ceil{s/n}-1\}$. Let $2r = \sum_{i=1}^n \abs{x_i - x_i^*}$. 
    We prove that $f(x^*) > f(x)$ by induction on $r$.
        
    For the base case $r = 1$, there are three cases to consider. 
    We can handle the first two cases together. In the first case, we have
    $(x_1^*, x_2^*) = (\ceil{s/n}, \ceil{s/n})$ and 
    $(x_1,x_2) = (\ceil{s/n}+1, \ceil{s/n}-1)$ and $x_i = x_i^*$ for all $i = 3,4,\cdots,n$. 
    In the second case, we have
    $(x_1^*, x_2^*) = (\ceil{s/n}-1,\ceil{s/n}-1)$ and
    $(x_1,x_2) = (\ceil{s/n}, \ceil{s/n}-2)$ and $x_i = x_i^*$ for all $i = 3,4,\cdots,n$.
    For both cases,
    all we have to show is that 
    $(c+x_1)^p + (c+x_2)^p < (c+x_1^*)^p + (c+x_2^*)^p$. As 
    $x_1^* = x_2^*$, this follows immediately from the fact that the function
    $\sum_{i=1}^n (c + y_i)^p$ is concave in $y$ and, subject to the constraint
    that $\sum_i y_i = s$, is maximized when all terms are the same 
    (i.e.\ $y_i = \frac{s}{n}$).
        
    For the third case, $x_1 = \ceil{s/n} + 1$ and $x_1^* = \ceil{s/n}$, 
    $x_2 = \ceil{s/n} - 2$ and $x_2^* = \ceil{s/n}-1$, and $x_i = x_i^*$ for 
    all $i =3,4,\ldots,n$. Again, we have to show 
    $(c+x_1)^p + (c+x_2)^p < (c+x_1^*)^p + (c+x_2^*)^p$. We have 
    $x_1 > x_1^* > x_2^* > x_2$. As $(c+x)^p$ is concave in $x$ and 
    $x_1 - x_1^* = x_2^* - x_2$, we have 
    $(c+x_1)^p - (c+x_1^*)^p < (c+x_2^*)^p - (c+x_2)^p$. 
    This gives us the inequality we want up to rearranging.
        
    For the inductive step, we assume that the statement holds for $r \ge 1$ 
    and prove that it holds for $r+1$. Suppose that $x \in \Z^n$ and $x \ge 0$ such that 
    $\sum_i x_i = s$, at least one entry of $x$ is not in $\{\ceil{s/n},\ceil{s/n}-1\}$,
    and $2(r+1) = \sum_{i=1}^n \abs{x_i - x_i^*}$. Using a similar argument 
    as for the base case, we can reduce $\sum_{i=1}^n \abs{x_i-x_i^*}$ by 
    $2$. This concludes the proof for $p > 0$.
    
    For the case of $p < 0$, it is easy to see that everything still holds 
    as $\sum_{i=1}^n(c+y_i)^p$ is convex in $y$.
  \end{proof}
  
  In order to prove Theorem~\ref{thm:hard-weighted}, we first need a few helpful
  technical lemmas.
  
    \begin{lemma}\label{lem:strict-inc}
    Let $d \ge 1$ be an integer and $p \in \R$. 
    Define $h : [0,1] \to \R$ as 
    \[
      h(\beta) 
      :=
      \frac{3^p\beta + 3\beta (d+1)^p + 3(1-\beta) d^p}{\beta + 3}.
    \]
    Let $h'$ denote the derivative of $h$ with respect to $\beta$. Fix $\beta \in [0,1]$. 
    Then $h'(\beta) > 0$ if and only if $3^p + 3(d+1)^p > 4d^p$.
  \end{lemma}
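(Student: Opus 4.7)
The plan is to observe that $h$ is a ratio of two linear functions of $\beta$, so its derivative has a particularly simple form whose sign does not depend on $\beta$. First, I would rewrite the numerator of $h(\beta)$ by collecting the $\beta$ terms:
\[
  h(\beta) = \frac{\bigl(3^p + 3(d+1)^p - 3d^p\bigr)\beta + 3d^p}{\beta + 3}.
\]
Setting $A := 3^p + 3(d+1)^p - 3d^p$, this is just $h(\beta) = (A\beta + 3d^p)/(\beta + 3)$, a Möbius-like function of $\beta$.

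Next I would apply the quotient rule to compute
\[
  h'(\beta) = \frac{A(\beta + 3) - (A\beta + 3d^p)}{(\beta + 3)^2} = \frac{3A - 3d^p}{(\beta + 3)^2} = \frac{3\bigl(3^p + 3(d+1)^p - 4d^p\bigr)}{(\beta + 3)^2}.
\]
Since $(\beta + 3)^2 > 0$ for all $\beta \in [0,1]$, the sign of $h'(\beta)$ is determined entirely by the sign of $3^p + 3(d+1)^p - 4d^p$. Thus $h'(\beta) > 0$ if and only if $3^p + 3(d+1)^p > 4d^p$, as claimed.

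This argument is essentially just a direct calculation, so I do not anticipate any substantive obstacle. The one conceptual point worth highlighting in the write-up is that the $\beta$-dependence in the derivative drops out entirely because $h$ is a ratio of affine functions of $\beta$; this is why the sign condition is a single inequality in $p$ and $d$ rather than something that must be verified pointwise in $\beta$. This fact will presumably be used later so that one can choose $d$ once and know $h$ is strictly increasing (or decreasing) on all of $[0,1]$.
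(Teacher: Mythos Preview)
Your proposal is correct and follows essentially the same approach as the paper: both compute $h'(\beta)$ via the quotient rule and simplify the numerator to $3\bigl(3^p + 3(d+1)^p - 4d^p\bigr)$, whose sign is independent of $\beta$. Your preliminary step of rewriting the numerator as $A\beta + 3d^p$ with $A = 3^p + 3(d+1)^p - 3d^p$ makes the cancellation slightly more transparent than the paper's direct expansion, but the substance is identical.
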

  \begin{proof}
    We have that $\frac{d}{d\beta} h(\beta)$ is equal to
    \[
      \frac{(3^p + 3(d+1)^p - 3 d^p)(\beta + 3) - 
        (3^p\beta + 3\beta (d+1)^p + 3(1-\beta) d^p)}{(\beta + 3)^2}.
    \]
    Simplifying the numerator, $h'(\beta) > 0$ if and only if 
    \[
      3^{1+p} + 3^2(d+1)^p - 3^2 d^p - 3 d^p > 0.
    \]
    The lemma follows after dividing through by $3$ and rearranging. 
  \end{proof}
  \begin{lemma}\label{lem:strict-dec}
    Let $d,t \ge 1$ be integers, $p \in \R$, and $\beta \in [0,1]$. 
    Define $g:[0,1] \to \R$ as
    \[
      g(\beta)
      :=
      \frac{3^p (t+\beta) + 3\beta(d+t+1)^p + 3(1-\beta) (d+t)^p}{t +\beta + 3}.
    \]

    \begin{enumerate}\itemsep0pt
      \item\label{lem:strict-dec-case-1}
        Assume $p \in (0,1)$. If $4(d+2)^p + 3^p < 5(d+1)^p$, then $g$ is strictly decreasing.
      \item\label{lem:strict-dec-case-2}
        Assume $p < 0$. If $4(d+2)^p + 3^p > 5(d+1)^p$, then $g$ is strictly increasing.      
    \end{enumerate}
  \end{lemma}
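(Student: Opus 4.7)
My plan is to differentiate $g$ with respect to $\beta$ and observe that the sign of $g'(\beta)$ is actually \emph{independent} of $\beta$, so the monotonicity of $g$ reduces to comparing two quantities in $t$ alone. Writing the numerator as $A(\beta) := 3^p(t+\beta) + 3\beta(d+t+1)^p + 3(1-\beta)(d+t)^p$ and the denominator as $B(\beta) := t+\beta+3$, the quotient rule gives
\[
g'(\beta) \;=\; \frac{\bigl[3^p + 3(d+t+1)^p - 3(d+t)^p\bigr](t+\beta+3) - A(\beta)}{(t+\beta+3)^2}.
\]
Expanding and collecting the $\beta$-dependent terms, the coefficient of $\beta$ in the numerator cancels (in the same spirit as Lemma~\ref{lem:strict-inc}, but with the shift by $t$), leaving the numerator equal to $-3\bigl[F(t) - 3^p\bigr]$, where
\[
F(t) \;:=\; (t+4)(d+t)^p - (t+3)(d+t+1)^p.
\]
Thus $g'(\beta) < 0$ iff $F(t) > 3^p$ and $g'(\beta) > 0$ iff $F(t) < 3^p$, and the hypotheses of the two cases are exactly $F(1) > 3^p$ and $F(1) < 3^p$ respectively.

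The remaining task is to propagate the inequality from $t = 1$ to all integers $t \ge 1$, which I accomplish by showing that $F$ is strictly monotone on $[1,\infty)$: strictly increasing for $p \in (0,1)$ and strictly decreasing for $p < 0$. Differentiating gives
\[
F'(t) \;=\; \bigl[(d+t)^p - (d+t+1)^p\bigr] + p\bigl[(t+4)(d+t)^{p-1} - (t+3)(d+t+1)^{p-1}\bigr].
\]
By the Mean Value Theorem applied to $x \mapsto x^p$, there is $\xi \in (d+t, d+t+1)$ with $(d+t+1)^p - (d+t)^p = p\,\xi^{p-1}$. Since $p - 1 < 0$, we have $\xi^{p-1} < (d+t)^{p-1}$ and $(d+t+1)^{p-1} < (d+t)^{p-1}$, so
\[
\xi^{p-1} + (t+3)(d+t+1)^{p-1} \;<\; (d+t)^{p-1} + (t+3)(d+t+1)^{p-1} \;\le\; (t+4)(d+t)^{p-1}.
\]
Multiplying through by $p > 0$ in case (1) gives $(d+t+1)^p - (d+t)^p < p[(t+4)(d+t)^{p-1} - (t+3)(d+t+1)^{p-1}]$, i.e.\ $F'(t) > 0$; multiplying by $-p > 0$ in case (2) gives $F'(t) < 0$. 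The same scalar inequality controls both regimes, with the sign of $p$ choosing the direction of monotonicity.

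Putting the pieces together: in case (1) the hypothesis gives $F(1) > 3^p$ and the strict increase of $F$ yields $F(t) \ge F(1) > 3^p$ for every integer $t \ge 1$, so $g'(\beta) < 0$ on $[0,1]$ and $g$ is strictly decreasing. In case (2) the hypothesis gives $F(1) < 3^p$ and the strict decrease of $F$ yields $F(t) \le F(1) < 3^p$, so $g'(\beta) > 0$ and $g$ is strictly increasing. The main obstacle is really just bookkeeping: verifying that the $\beta$-dependence in the numerator of $g'(\beta)$ genuinely cancels. Once that cancellation is in hand, a single MVT-plus-convexity estimate handles both signs of $p$ uniformly, and the lemma follows.
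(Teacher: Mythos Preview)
Your proof is correct and follows essentially the same route as the paper: compute $g'(\beta)$, observe the numerator is $\beta$-independent and equals $-3[F(t)-3^p]$ with $F(t)=(t+4)(d+t)^p-(t+3)(d+t+1)^p$, then show $F$ is monotone in $t$ so that the hypothesis at $t=1$ propagates to all $t\ge1$. The only cosmetic difference is that you invoke the Mean Value Theorem to get $(d+t+1)^p-(d+t)^p=p\,\xi^{p-1}$ and then bound $\xi^{p-1}$, whereas the paper cites the concavity/convexity tangent inequality $(x+1)^p-x^p\lessgtr p\,x^{p-1}$ directly; these are the same estimate, and your slightly more unified packaging (one scalar inequality, then multiply by $p$ or $-p$) is a nice touch.
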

  \begin{proof}
    We have $\frac{d}{d\beta} g(\beta)$ is
    \begin{multline*}
      \frac{(3^p + 3(d+t+1)^p - 3(d+t)^p)(t+\beta+3)}{(t+\beta+3)^2} \\
        - \frac{(3^p (t+\beta) + 3\beta(d+t+1)^p + 3(1-\beta) (d+t)^p)}{(t+\beta+3)^2}.
    \end{multline*}
    Simplifying the numerator of both terms, $\frac{d}{d\beta} g(\beta) < 0$ if and only if
    \[
      3t(d+t+1)^p - 3t(d+t)^p + 3^{1+p} + 3^2(d+t+1)^p
        -3^2(d+t)^p - 3 (d+t)^p 
      < 
      0.
    \]
    Dividing through by $3$ and rearranging, this is equivalent to
    \begin{equation}\label{eq:strict-dec-main}
      (4+t)(d+t)^p - (3+t)(d+t+1)^p - 3^p
      >
      0.
    \end{equation}
    To prove~(\ref{lem:strict-dec-case-1}), it suffices to show the left-hand
    side of~(\ref{eq:strict-dec-main}) is increasing in $t$ for $p \in (0,1)$. To 
    prove~(\ref{lem:strict-dec-case-2}), it suffices to show the left-hand side
    of~(\ref{eq:strict-dec-main}) is decreasing in $t$ for $p < 0$.
    
    Assume $p > 0$.
    The derivative of the left-hand side of~(\ref{eq:strict-dec-main}) with respect to $t$ is
    \begin{align*}
      &(d+t)^p + p(4+t)(d+t)^{p-1} - (d+t+1)^p - p(3+t)(d+t+1)^{p-1}\\
      &\ge 
      -p(d+t)^{p-1} + p(4+t)(d+t)^{p-1} - p(3+t)(d+t+1)^{p-1}\\
      &=
      p(3+t)[(d+t)^{p-1} - (d+t+1)^{p-1}]\\
      &\ge 
      0,
    \end{align*}
    where the first inequality follows from the fact that $(x+1)^p - x^p \le px^{p-1}$
    for all $x > 0$ as $x^p$ is concave and the second inequality follows from the
    fact that $x^{p-1}$ is a decreasing function in $x$ as $p <1$. This 
    proves~(\ref{lem:strict-dec-case-1}). 
    
    Now assume $p < 0$. As $x^p$ is convex in $x$, we have $(x+1)^p - x^p \ge 
    px^{p-1}$ for all $x >0$. Thus, one can use the same argument as for $p > 0$
    to show that the derivative of~(\ref{eq:strict-dec-main}) is non-positive. This
    proves~(\ref{lem:strict-dec-case-2}). 
  \end{proof}
  
  The following lemma considers the functions
  \begin{equation}\label{eq:good-d-1}
    3^p + 3(d+1)^p - 4d^p
  \end{equation}
  and
  \begin{equation}\label{eq:good-d-2}
    3^p + 4(d+2)^p - 5(d+1)^p,
  \end{equation}
  where $d = 1.23p +4.77$ for the weighted case and $d = 5$ for the unweighted case.
  The lemma states that for both the weighted and unweighted cases,~(\ref{eq:good-d-1}) is 
  strictly positive and~(\ref{eq:good-d-2}) is strictly negative. 
  This is easy to compute explicitly for 
  specific values of $p$; for example, when $p = 1/2$ and $d = 1.23p+4.77$, 
  the first function has value
  $\approx 0.0303$ and the second function has value $\approx -0.032$.
  
  One approach to proving the lemma for the weighted case would be to
  show that the derivative of~(\ref{eq:good-d-1}) only has one zero on the interval $[0,1]$ 
  and one can show this critical point is a maximum. Since~(\ref{eq:good-d-1}) is
  $0$ when $p \in \{0,1\}$, this would prove~(\ref{eq:good-d-1}) is strictly positive when
  $p \in (0,1)$. A similar approach can be taken 
  for~(\ref{eq:good-d-2}). For the unweighted case, the same proof idea could be used
  for verifying~(\ref{eq:good-d-2}). For~(\ref{eq:good-d-1}), one would only have to show the 
  function is strictly increasing over $(0, \frac{1}{4})$. 
  Formal verification of these facts would be quite lengthy and 
  not necessarily illuminating, so we therefore verified these inequalities analytically
  via MATLAB. To give an idea of what these functions look like, we plot both in
  Figure~\ref{fig:good-d}.

  \begin{figure}[t]
    \centering
    \includegraphics[scale=0.5]{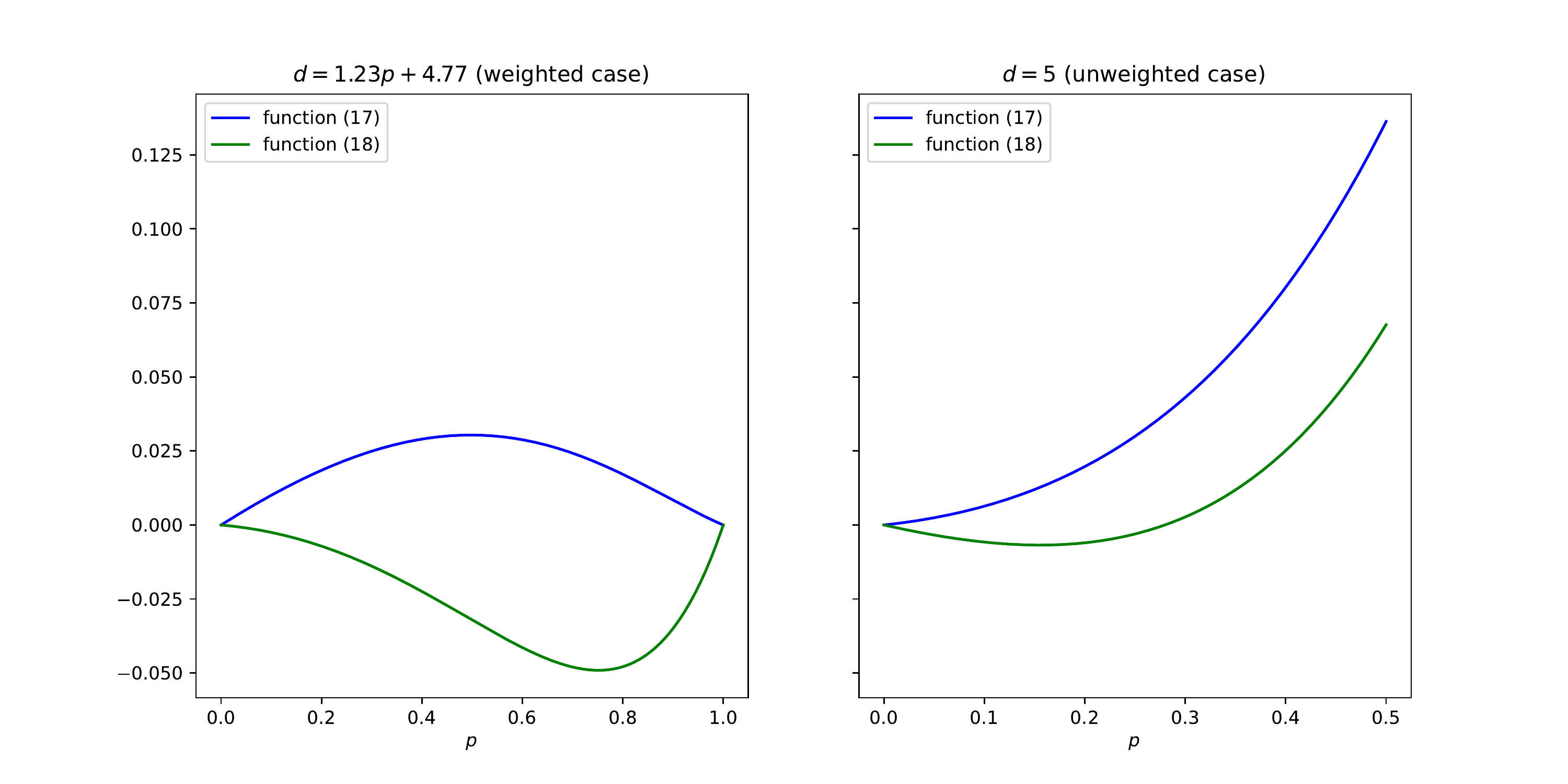}
    \caption{Plot of functions~(\ref{eq:good-d-1}) and~(\ref{eq:good-d-2}) for 
    different values of $d$.}
    \label{fig:good-d}    
  \end{figure}  

  \begin{lemma}\label{lem:good-d}
    Let $f_1 : (0, 1) \to \R$ be defined $f_1(p) = 1.23p + 4.77$ and let $f_2 : (0,\frac{1}{4})
    \to \R$ be defined $f_2(p) = 5$. 
    For $i \in \{1,2\}$ and all $p$ in the domain of $f_i$, we have
    \[
      3^p + 3(f_i(p)+1)^p > 4f_i(p)^p
      \quad\text{and}\quad
      4(f_i(p)+2)^p + 3^p < 5(f_i(p)+1)^p.
    \]
  \end{lemma}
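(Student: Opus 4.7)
The plan is to reduce each of the four inequalities to a one-variable calculus problem and analyze the resulting functions on their respective intervals. For $i=1$ (weighted case, $p \in (0,1)$, $d = d(p) = 1.23p + 4.77$) define
\[
A(p) := 3^p + 3(d+1)^p - 4 d^p, \qquad B(p) := 4(d+2)^p + 3^p - 5(d+1)^p,
\]
and for $i=2$ (unweighted case, $p \in (0,\tfrac{1}{4})$, $d=5$) define
\[
C(p) := 3^p + 3\cdot 6^p - 4\cdot 5^p, \qquad D(p) := 4\cdot 7^p + 3^p - 5\cdot 6^p.
\]
The goal is to show $A>0$, $B<0$ on $(0,1)$ and $C>0$, $D<0$ on $(0,\tfrac{1}{4})$.

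First I would record the boundary values. In the weighted case both $A$ and $B$ vanish at the endpoints: $A(0)=1+3-4=0$, $A(1)=3+3\cdot 7-4\cdot 6=0$, and similarly $B(0)=B(1)=0$ (substituting $d(0)=4.77$, $d(1)=6$). In the unweighted case $C(0)=D(0)=0$, while a direct numerical evaluation gives $C(\tfrac{1}{4})>0$ and $D(\tfrac{1}{4})<0$. Next I would compute the one-sided derivatives. A short calculation yields $A'(0) = \ln 3 + 3\ln 5.77 - 4\ln 4.77 > 0$ and $A'(1) < 0$, with the analogous inequalities $B'(0)<0<B'(1)$ for $B$. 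Thus by continuity $A$ has at least one interior local maximum and $B$ at least one interior local minimum; by the established boundary values these extrema have the required sign provided the critical point is unique.

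The core step is therefore to show that $A'$ and $B'$ each have exactly one zero in $(0,1)$. My plan is to write
\[
A(p) = 3^p + 3 \exp\bigl(p\ln(1.23p+5.77)\bigr) - 4\exp\bigl(p\ln(1.23p+4.77)\bigr),
\]
differentiate twice, and show that $A''$ changes sign exactly once on $(0,1)$, so $A'$ is first decreasing then increasing (or vice versa) with a single zero between the boundary values of opposite sign. To control the sign changes of $A''$ I would split the expression into a part coming from $3^p$ (which is strictly convex) and the remaining two terms, which reduce to combinations of exponentials in $p$ with slowly-varying coefficients in $p$; a Descartes-type bound on the number of sign changes of a short linear combination of exponentials, together with a monotonicity estimate on the logarithmic factors, should yield the needed single sign change. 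The same template handles $B$. For the unweighted case the expressions $C$ and $D$ are pure three-term exponential sums, so it suffices to show $C'>0$ and $D'<0$ throughout $(0,\tfrac{1}{4})$; since $C'(0) = \ln(3\cdot 6^3/5^4) = \ln(648/625) > 0$ and $D'(0) = \ln(7^4\cdot 3/6^5) = \ln(7203/7776) < 0$, and each derivative is a sum of three exponentials in $p$ with fixed bases, a standard sign-change argument bounds the number of zeros of $C'$ and $D'$ by two and direct verification of values at $p=\tfrac{1}{4}$ rules out any zero inside the interval.

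The main obstacle is the uniqueness of the critical point of $A$ and $B$ in the weighted case, because the base $d=d(p)$ varies with $p$ and the derivatives are no longer pure exponential sums. I expect to need fairly tight numerical control of the second derivative in a narrow window around the interior extremum to conclude uniqueness rigorously; this is exactly where the authors' MATLAB verification does the heavy lifting, and a self-contained proof would substitute explicit monotone bounds for the MATLAB computation (e.g., by bounding $\ln(1.23p+c)$ and $1/(1.23p+c)$ above and below by affine functions on $(0,1)$) and then carrying out the sign-change analysis on the resulting combination of ordinary exponentials.
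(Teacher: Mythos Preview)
Your approach is essentially the same as the paper's: reduce to showing the one-variable functions vanish at the endpoints, check the signs of the one-sided derivatives, and argue there is a unique interior critical point; the paper outlines exactly this strategy and then defers the uniqueness step to a MATLAB verification rather than carrying out the tedious analytic bounds. One small caveat in your unweighted argument: knowing that $C'$ (a three-term exponential sum) has at most two zeros together with $C'(0)>0$ and $C'(\tfrac14)>0$ does not by itself exclude a pair of zeros inside $(0,\tfrac14)$, so you would still need an interior sample point or a convexity/monotonicity bound to finish that case.
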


  We need one more lemma regarding the general form of the density when solutions
  in the reduction do not take all of $A$.
  
  \begin{lemma}\label{lem:strict-subset-A}
    Let $G = (L\cup A, E)$ be the graph of the reduction for either the weighted or
    unweighted case. For the weighted case, assume $p \in (0,1)$ and $d = 1.23p+4.77$.
    For the unweighted case, assume $p \in (0, \frac{1}{4})$ and $d = 5$.
    
    Fix $S \subseteq L$ and $A' \subseteq A$. Then 
    \[
      \rho_p(S \cup A')
      =
      \frac{f_p(S \cup A')}{\abs{S \cup A'}} 
      \le
      \frac{3^p \cdot \abs{S} + \sum_{v \in A'} (d + d_{S+v}(v))^p}{\abs{S} + \abs{A'}}         
    \]
    If $\abs{A'} < \abs{A}$, the inequality above is strict.
  \end{lemma}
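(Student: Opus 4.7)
The plan is to decompose $f_p(S \cup A')$ over $S$ and $A'$ separately and bound each contribution by its maximum possible value. First I would split
\[
f_p(S \cup A') = \sum_{v \in S} d_{S \cup A'}(v)^p + \sum_{v \in A'} d_{S \cup A'}(v)^p.
\]
For each $v \in S \subseteq L$, the set $L$ is independent in $G$ and $v$ has exactly three neighbors in $A$ (corresponding to the three elements of the set $S_v$), so $d_{S \cup A'}(v) \le 3$ and hence $d_{S \cup A'}(v)^p \le 3^p$ (using $p > 0$). For each $v \in A'$, I would write $d_{S \cup A'}(v) = d_{A'}(v) + d_{S+v}(v)$, where $d_{A'}(v)$ collects the contribution from edges inside $G[A']$ and $d_{S+v}(v)$ counts the unit-weight edges from $v$ to $S$. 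In both constructions $d_{A'}(v) \le d$: in the weighted case the equal-weight clique gives $d_{A'}(v) = \frac{d(\abs{A'}-1)}{\abs{A}-1}$, while in the unweighted case $G[A]$ is $d$-regular, so at worst every neighbor of $v$ in $G[A]$ also lies in $A'$. Monotonicity of $x \mapsto x^p$ for $p > 0$ then gives $d_{S \cup A'}(v)^p \le (d + d_{S+v}(v))^p$, and summing yields the claimed inequality.

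For the strict inequality when $\abs{A'} < \abs{A}$, I would split into subcases. If $A' = \emptyset$, then $S \cup A' = S \subseteq L$ is an independent set and $f_p(S) = 0$, while the right-hand side equals $3^p > 0$, so strict inequality is immediate (the empty set case for $S \cup A'$ is excluded since $\rho_p$ would be undefined there). If $A' \neq \emptyset$ and $\abs{A'} < \abs{A}$, I would exhibit at least one $v \in A'$ with $d_{A'}(v) < d$; by strict monotonicity of $x \mapsto x^p$ the corresponding term in the sum drops, yielding strict inequality after summation.

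In the weighted case this step is easy, since $d_{A'}(v) = \frac{d(\abs{A'}-1)}{\abs{A}-1} < d$ uniformly for every $v \in A'$. The subtler case, which I expect to be the main obstacle, is the unweighted construction, where the $d$-regularity of $G[A]$ does not rule out the existence of $d$-regular induced subgraphs on proper subsets. Here I would invoke the fact, stated in the reduction, that $G[A]$ was chosen to be a \emph{connected} $d$-regular graph. Since $\emptyset \neq A' \subsetneq A$ and $G[A]$ is connected, there must be an edge of $G[A]$ crossing the cut between $A'$ and $A \setminus A'$; its endpoint $v \in A'$ then satisfies $d_{A'}(v) \le d - 1 < d$, which is exactly what is needed to make the $v$-th term in the bound strict and complete the proof.
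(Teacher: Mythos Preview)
Your proof is correct and follows essentially the same approach as the paper: split $f_p(S\cup A')$ into contributions from $S$ and from $A'$, bound each $A'$-term by replacing $d_{A'}(v)$ with $d$ via monotonicity of $x\mapsto x^p$, and for strictness when $A'\subsetneq A$ use the explicit formula $d_{A'}(v)=\frac{d(|A'|-1)}{|A|-1}$ in the weighted case and connectedness of $G[A]$ in the unweighted case to find a vertex with $d_{A'}(v)<d$. Your treatment is in fact slightly more careful than the paper's in two places: you correctly write $d_{S\cup A'}(v)\le 3$ for $v\in S$ (the paper asserts equality, which need not hold when some of $v$'s three neighbors lie outside $A'$, though only the inequality is needed), and you explicitly dispose of the $A'=\emptyset$ edge case for the strict inequality, which the paper's argument silently skips.
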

  \begin{proof}
    As the degree of each vertex in $S$ is exactly $3$ in both the weighted and 
    unweighted cases, 
    \[
      \rho_p(S\cup A')
      =
      \frac{f_p(S \cup A')}{\abs{S \cup A'}}
      =
      \frac{3^p \cdot \abs{S} + \sum_{v \in A'} d_{S\cup A'}(v)^p}{\abs{S} + \abs{A'}}.
    \]
    
    In the weighted case, for all $v \in A'$, 
    $d_{S\cup A'}(v)^p = \left(d\cdot \frac{\abs{A'}-1}{\abs{A}-1} + d_{S+v}(v)\right)^p$.
    As $x^p$ is a strictly increasing function in $x$, we have 
    $d_{S\cup A'}(v)^p \le (d + d_{S+v}(v))^p$. If $\abs{A'} < \abs{A}$, then the inequality
    is strict. In the unweighted case, as 
    $G[A]$ is $d$-regular, then $d_{S\cup A'}(v)^p \le (d + d_{S+v}(v))^p$ for all
    $v \in A'$. If $\abs{A'} < \abs{A}$, as $G[A]$ is connected, 
    there must exist a vertex $u \in A'$
    such that $d_{A'}(u) \le 4 < d$. Thus, $d_{S\cup A'}(u)^p < (d + d_{S+u}(u))^p$.
    Therefore, in both the weighted and unweighted cases, we have 
    $\sum_{v \in A'} d_{S\cup A'}(v)^p \le \sum_{v \in A'}(d + d_{S+v}(v))^p$, 
    where the inequality is strict if $\abs{A'} < \abs{A}$. This concludes the proof.
  \end{proof}    

  We are now ready to prove Theorem~\ref{thm:hard-weighted}.
  
  \begin{proof}[Proof of Theorem~\ref{thm:hard-weighted}]
    Consider an instance of the \ec problem: let $\cS = \{S_1,\ldots,S_m\}$ 
    be a family of subsets of the ground set $\cU = \{e_1,e_2,\ldots, e_{3 n}\}$.
    We use the reductions given in Section~\ref{sec:hardness} for the weighted
    and unweighted cases.
    Recall that both reductions construct the graph $G = (L \cup A, E)$ and
    return TRUE iff $\OPT_G \ge \redopt$ where 
    $\redopt = \frac{3^p + 3(d+1)^p}{4}$.  Note $d = 1.23p + 4.77$ for the weighted
    case and $d = 5$ for the unweighted case.
      
    If $\cS$ contains an exact $3$-cover, we showed in 
    Equation~(\ref{eq:hardness-easy-dir}) this implies $\OPT_G \ge \redopt$
    for both the weighted and unweighted cases.

    Assume $\cS$ does not contain an exact $3$-cover. Let
    $S \subseteq L$ and $A' \subseteq A$. 
    We proceed via cases based on the value of the ratio $\frac{\abs{S}}{\abs{A'}}$.
      
    \textit{Case 1: $3\abs{S} = \abs{A'}$.}
    We have
    \begin{align}
      \frac{f_p(S \cup A')}{\abs{S \cup A'}}
      &\le
      \frac{3^p \cdot \abs{S} + \sum_{v \in A'} (d + d_{S+v}(v))^p}{\abs{S} + \abs{A'}}
        \label{eq:first-d}\\
      &\le
      \frac{3^p \cdot \abs{S} + \abs{A'} \cdot \left(d + 
        \frac{3\cdot \abs{S}}{\abs{A'}}\right)^p}{\abs{S} + \abs{A'}}
        \label{eq:third-d}\\\label{eq:fourth-d}
      &=
      \redopt.
    \end{align}
    where (\ref{eq:first-d}) holds by Lemma~\ref{lem:strict-subset-A}, (\ref{eq:third-d}) holds 
    as $\sum_{i=1}^n (c+x_i)^p$ is concave in $x$ and therefore is 
    maximized when all terms are the same, and
    (\ref{eq:fourth-d}) follows from $\abs{S} = \frac{1}{3} \cdot \abs{A'}$.
      
    As $S$ does not correspond to an exact cover, either 
    $\abs{A'} < \abs{A}$ or there exists $u,v \in A$ such that 
    $d_{S\cup A}(u) \ne d_{S\cup A}(v)$. In the former case, 
    Inequality~(\ref{eq:first-d}) is strict by Lemma~\ref{lem:strict-subset-A}. In the latter case, 
    Inequality~(\ref{eq:third-d}) is strict. So in either case, we have that the 
    density of $S \cup A'$ is strictly smaller than $\redopt$.

    \textit{Case 2: $3\cdot\abs{S} = \beta \abs{A'}$ for $\beta \in [0,1)$.}
    By Lemma~\ref{lem:strict-subset-A}, $\rho_p(S\cup A')
    \le \frac{3^p \cdot \abs{S} + \sum_{v \in A'} (d + d_{S+v}(v))^p}{\abs{S} + \abs{A'}}$.
    Thus, by Lemma~\ref{lem:int-flat}, the density $\rho_p(S \cup A')$ is maximized when
    the vertices in $A'$ have degree either $d$ or
    $d+ 1$. We have
    \begin{align}
      \rho_p(S\cup A')
      &\le \label{eq:case2-1}
      \frac{3^p \cdot \abs{S} + \beta\cdot\abs{A'} (d+1)^p + (1-\beta)\abs{A'} d^p}
        {\abs{S}+\abs{A'}}\\
      &=\label{eq:case2-2}
      \frac{3^p\beta + 3\beta (d+1)^p + 3(1-\beta) d^p}{\beta + 3},
    \end{align}
    where the equality uses the fact that $\abs{S} = \frac{\beta \abs{A'}}{3}$. By
    differentiating~(\ref{eq:case2-2}) with respect to $\beta$, it is easy to show that this
    function is strictly increasing in $\beta$ if and only if $3^p + 3(d+1)^p > 4d^p$.
    (We give a proof in Lemma~\ref{lem:strict-inc}.)
    By the choice of $d = 1.23p + 4.77$ for the weighted case and $d = 5$ for the 
    unweighted case, Lemma~\ref{lem:good-d} shows 
    the inequality is satisfied. Thus, the density $\rho_p(S\cup A')$ is strictly less than the value 
    of the function in~(\ref{eq:case2-2}) when $\beta = 1$,
    which is exactly $\redopt$.
      
    \textit{Case 3: $3\cdot\abs{S} = \alpha \abs{A'}$ for $\alpha > 1$.}
    We reparameterize such that $\alpha = t + \beta$ where $t\ge 1$ is an integer
    and $\beta \in [0,1]$. Note that we can assume that $(t, \beta) \ne (1, 0)$ as this
    was already handled in Case 1. By Lemma~\ref{lem:strict-subset-A}, $\rho_p(S\cup A')
    \le \frac{3^p \cdot \abs{S} + \sum_{v \in A'} (d + d_{S+v}(v))^p}{\abs{S} + \abs{A'}}$.
    By Lemma~\ref{lem:int-flat}, the density 
    $\rho_p(S \cup A')$ is maximized when the vertices in $A'$ are either 
    $d + t$ or $d+ t+1$. We have
    \begin{align}
      \rho_p(S\cup A')
      &\le \label{eq:case3-1}
      \frac{3^p \cdot \abs{S} + \beta\cdot\abs{A'} (d+t+1)^p + (1-\beta)\abs{A'} (d+t)^p}
        {\abs{S}+\abs{A'}}\\
      &= \label{eq:case3-2}
      \frac{3^p (t+\beta) + 3\beta(d+t+1)^p + 3(1-\beta) (d+t)^p}{t +\beta + 3},
    \end{align}
    where the equality uses the fact that $\abs{S} = \frac{(t+\beta)\abs{A'}}{3}$. 
    By differentiating~(\ref{eq:case3-2}) with respect to $\beta$, it is easy to show that this
    function is strictly decreasing in $\beta$ if
    $4(d+2)^p + 3^p < 5(d+1)^p$. (We give a proof in 
    Lemma~\ref{lem:strict-dec}.) By the choice of 
    $d = 1.23p + 4.77$, Lemma~\ref{lem:good-d} shows the inequality is satisfied. 
    Therefore,~(\ref{eq:case3-2}) is uniquely maximized at $(t,\beta) = (1,0)$, which has
    value $\redopt$. This implies $\rho_p(S \cup A') < \redopt$.  
  \end{proof}
  
  \subsection{Hardness for $p \in (-3,0)$}
  \label{sec:hardness-neg-app}
    As $p < 0$, we have $\max_S M_p(S)$ is equivalent to the problem of 
    $\min_S \rho_p(S)$ since 
    $x^p$ is a decreasing function in $x$. We therefore focus on the problem of
    $\min_S \rho_p(S)$.\footnote{Note that, although it is not explicitly stated, 
    we are optimizing over all sets $S$ with no vertices of degree $0$, for 
    such sets have undefined density when $p < 0$.} 
    
    The reductions for $p < 0$ are nearly identical to that of the case of $p >0$. 
    For the reduction for the weighted case, we change the value of 
    $d$ from $1.23p + 4.77$ to $p/2 + 5$. The value of $d$ stays the same for 
    the unweighted case. For both the weighted and unweighted
    case, we change the reduction by returning TRUE iff $\OPT_G \le \rho^*$.
    These changes only minimally alter the analyses. The same reasoning
    in~(\ref{eq:hardness-easy-dir}) holds, so when the given instance contains an 
    exact $3$-cover, we have $\OPT_G \le \rho^*$. Now suppose the given
    instance does not contain an exact $3$-cover.
    As $p < 0$, we have $(c + x)^p$ is a convex function in $x$. 
    The same general outline for the proof given in~(\ref{eq:outline-1})-(\ref{eq:outline-2}) 
    therefore holds for $p < 0$. In particular, for $S \subseteq L$ and assuming
    $3\cdot\abs{S} = \alpha \cdot \abs{A}$, 
    \begin{align*}
      \rho_p(S \cup A)
      &=
      \frac{3^p \cdot \abs{S} + \sum_{v \in A} d_{S\cup A}(v)^p}{\abs{S} + \abs{A}}\\
      &\ge
      \frac{3^p \cdot \abs{S} + \abs{A}\cdot (d + \frac{3 \abs{S}}{\abs{A}})^p}
        {\abs{S} + \abs{A}}\\
      &=
      \frac{3^p \cdot \alpha + 3\cdot (d + \alpha)^p}{\alpha +3},
    \end{align*}
    where the inequality now uses convexity instead of concavity. Our goal is now
    to choose $d$ such that this function is \emph{minimized} when $\alpha =1$.
    As is the case with $p \in (0,1)$, we use a stronger bound on 
    $\sum_{v \in A} d_{S\cup A}(v)^p$. 
    In particular, we note Lemma~\ref{lem:int-flat} applies also when $p < 0$. 
    With this lemma in hand and using this general outline, we can
    prove the following theorem.
    
    \begin{theorem}\label{thm:hard-weighted-neg}
      \pdsg is NP-hard for $p \in (-\frac{1}{8}, 0)$ and
      weighted \pdsg is NP-hard for $p \in (-3,0)$.    
    \end{theorem}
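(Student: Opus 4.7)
The plan is to mirror the proof of Theorem~\ref{thm:hard-weighted} with the sign changes sketched in the paragraphs preceding the statement. Given an instance of \ec, the reduction builds the same graph $G = (L \cup A, E)$, but with $d := p/2 + 5$ in the weighted case (still $d = 5$ in the unweighted case), and returns TRUE iff $\min_S \rho_p(S) \le \redopt$; the switch from $\max$ to $\min$ is forced by $x^p$ being decreasing for $p < 0$. The forward direction is identical to~(\ref{eq:hardness-easy-dir}): an exact $3$-cover yields a set $S \cup A$ with $\rho_p(S \cup A) = \redopt$.

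For the reverse direction I would first establish the $p < 0$ analogue of Lemma~\ref{lem:strict-subset-A}: for any $S \subseteq L$ and $A' \subseteq A$,
\[
\rho_p(S \cup A') \ge \frac{3^p \abs{S} + \sum_{v \in A'}(d + d_{S+v}(v))^p}{\abs{S} + \abs{A'}},
\]
with strict inequality whenever $\abs{A'} < \abs{A}$. Its proof is copied verbatim from the original, except that the step $d_{S \cup A'}(v)^p \ge (d + d_{S+v}(v))^p$ now flows from $x^p$ being decreasing rather than increasing (and the unweighted strictness uses that some vertex of $A'$ has degree at most $4 < d$ in $G[A']$, so its $p$-th power is strictly larger). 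I would then run the same three-case split on $\alpha := 3\abs{S}/\abs{A'}$ as in Cases~1--3 of the original, with convexity of $(c+x)^p$ replacing concavity and the $p<0$ clause of Lemma~\ref{lem:int-flat} replacing the $p>0$ clause; both reversals mean the constrained optimum configuration flips from maximizer to minimizer, so every inequality bounding $\rho_p(S \cup A')$ from above becomes a bound from below. Cases 1--3 thereby reduce to showing that the function
\[
g(\alpha) := \frac{3^p \alpha + 3(d + \alpha)^p}{\alpha + 3}
\]
and its integer-shifted variants from Cases~2 and~3 are uniquely minimized at $\alpha = 1$, with value $\redopt$.

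The main obstacle is proving the $p < 0$ analogue of Lemma~\ref{lem:good-d}, namely that for $d = p/2 + 5$ with $p \in (-3, 0)$ and for $d = 5$ with $p \in (-\frac{1}{8}, 0)$ one has
\[
3^p + 3(d+1)^p < 4 d^p \qquad\text{and}\qquad 4(d+2)^p + 3^p > 5(d+1)^p.
\]
These reversed inequalities are exactly what cause the functions appearing in Lemma~\ref{lem:strict-inc} and Lemma~\ref{lem:strict-dec} to be strictly decreasing and increasing in the directions opposite to the $p>0$ case, which in turn pins the minimizer of $g$ uniquely at $\alpha = 1$. The specific choice $d = p/2 + 5$ is tuned precisely so that both inequalities survive throughout $(-3, 0)$, while the unweighted choice $d = 5$ only survives the narrower range $(-\frac{1}{8}, 0)$ — which is why the ranges in the statement are asymmetric. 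As in Lemma~\ref{lem:good-d} I would verify the two inequalities by the same analytical/numerical recipe used there (cf.\ Figure~\ref{fig:good-d}), for instance by showing each expression has sign-definite $p$-derivative on the relevant interval and checking the boundary values $p \to 0^-$, $p = -3$, and $p = -\tfrac{1}{8}$. Once this calculus fact is in hand, the three-case analysis carries through verbatim and gives $\rho_p(S \cup A') > \redopt$ whenever $S$ does not correspond to an exact $3$-cover, completing the reduction.
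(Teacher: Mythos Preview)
Your proposal is correct and follows essentially the same approach as the paper: the same reduction with $d = p/2 + 5$ (weighted) or $d = 5$ (unweighted), the switch to $\min_S \rho_p(S) \le \redopt$, the $p<0$ analogue of Lemma~\ref{lem:strict-subset-A} with the inequality reversed via $x^p$ decreasing, convexity replacing concavity, the $p<0$ clause of Lemma~\ref{lem:int-flat}, and the reversed inequalities of Lemma~\ref{lem:good-d} (stated in the paper as Lemma~\ref{lem:good-d-neg}) to flip the monotonicity conclusions of Lemmas~\ref{lem:strict-inc} and~\ref{lem:strict-dec}. The paper likewise verifies the two reversed inequalities numerically via MATLAB rather than by a fully formal derivative argument, so your ``analytical/numerical recipe'' comment matches what the authors actually do.
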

    
    The proof of Theorem~\ref{thm:hard-weighted-neg} is nearly identical to that of
    Theorem~\ref{thm:hard-weighted} except for a few small differences, which is why we
    do not rewrite all of the details. We point out these small differences here. We first note that 
    we need an analogous lemma to 
    Lemma~\ref{lem:good-d} for the case of $p < 0$. 
    
    \begin{lemma}\label{lem:good-d-neg}
      Let $f_1 : (-3, 0) \to \R$ be defined $f_1(p) = \frac{p}{2} + 5$ and let 
      $f_2 : (-\frac{1}{8}, 0)\to \R$ be defined $f_2(p) = 5$.
      For $i \in \{1,2\}$ and all $p$ in the domain of $f_i$, we have
      \[
        3^p + 3(f_i(p)+1)^p < 4f_i(p)^p
        \quad\text{and}\quad
       4(f_i(p)+2)^p + 3^p > 5(f_i(p)+1)^p.
      \]
    \end{lemma}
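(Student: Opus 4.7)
The plan is to mirror the approach outlined for Lemma~\ref{lem:good-d}, adapted for negative $p$. I would define the auxiliary functions $\phi_i(p) := 4 f_i(p)^p - 3^p - 3(f_i(p)+1)^p$ and $\psi_i(p) := 4(f_i(p)+2)^p + 3^p - 5(f_i(p)+1)^p$ for $i \in \{1,2\}$, so that the two inequalities of the lemma are equivalent to $\phi_i(p) > 0$ and $\psi_i(p) > 0$ on the stated intervals. A direct check gives $\phi_i(0) = 4 - 1 - 3 = 0$ and $\psi_i(0) = 4 + 1 - 5 = 0$, since any nonzero base raised to $0$ equals $1$. Thus, just as in the positive-$p$ case where the functions vanish at the endpoints $p \in \{0,1\}$, here the task reduces to a strict positivity argument on the half-open intervals as $p$ moves away from $0$ in the negative direction.

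For the unweighted case with $f_2 \equiv 5$ on $(-\tfrac{1}{8}, 0)$, the functions simplify to fixed linear combinations of exponentials, $\phi_2(p) = 4\cdot 5^p - 3^p - 3\cdot 6^p$ and $\psi_2(p) = 4\cdot 7^p + 3^p - 5\cdot 6^p$. I would compute $\phi_2'(0)$ and $\psi_2'(0)$ numerically (they come out strictly negative, so $\phi_2, \psi_2$ strictly increase as $p$ decreases from $0$), and use the fact that each function is an analytic combination of exponentials whose derivative can change sign only finitely often. Restricting to the very narrow interval $(-\tfrac{1}{8}, 0)$, a Taylor expansion about $p = 0$ with an explicit remainder bound would rigorously certify positivity without having to globally locate critical points. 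For the weighted case with $f_1(p) = p/2 + 5$ on $(-3, 0)$, both the base and the exponent depend on $p$, so differentiating a term like $(p/2+5)^p$ produces $\ln(p/2+5) + p/(p+10)$ factors. I would compute $\phi_1'$ and $\psi_1'$ explicitly, check their sign at $p=0$, and argue that on $(-3, 0)$ each derivative has at most one zero, so that the sign of $\phi_1$ and $\psi_1$ is controlled by the values at $p = 0$ and $p = -3$. A direct numerical computation at $p = -3$ (using $f_1(-3) = 3.5$) confirms that $\phi_1(-3) > 0$ and $\psi_1(-3) > 0$, closing the argument.

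The hard part will be the rigorous verification that $\phi_1'$ and $\psi_1'$ each have at most one zero on $(-3, 0)$, since the coupled $p$-dependence in both base and exponent makes these derivatives transcendental and not obviously monotone. This is exactly the same obstacle that prompted the authors to verify Lemma~\ref{lem:good-d} via MATLAB rather than present a formal derivation; I would proceed analogously, producing a plot in the spirit of Figure~\ref{fig:good-d} to illustrate the behavior and then certifying the sign of the second derivative (or of a monotone upper/lower bound) on appropriate sub-intervals via interval arithmetic. Once this monotonicity is pinned down, combining it with the vanishing at $p = 0$ and the direct evaluation at $p = -3$ (respectively $p = -\tfrac{1}{8}$) yields the two strict inequalities throughout the stated intervals.
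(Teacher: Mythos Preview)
Your proposal is correct and matches the paper's own treatment essentially verbatim: the paper defines the same two auxiliary functions (your $-\phi_i$ and $\psi_i$), notes they vanish at $p=0$, outlines the same derivative/critical-point argument with endpoint checks, and then, acknowledging exactly the obstacle you flag, states that formal verification would be lengthy and instead verifies the inequalities via MATLAB together with a plot analogous to Figure~\ref{fig:good-d}. Your suggestion of a Taylor remainder bound on the narrow interval $(-\tfrac{1}{8},0)$ and interval arithmetic for the weighted case are mild refinements, but the overall strategy is the same.
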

    We consider the following functions:
    \begin{equation}\label{eq:good-d-1-neg}
      3^p + 3(d+1)^p - 4d^p
    \end{equation}
    and 
    \begin{equation}\label{eq:good-d-2-neg}
      4(d+2)^p + 3^p - 5(d+1)^p,
    \end{equation}

    where $d=\frac{p}{2}+5$ for the weighted case and $d = 5$ for the unweighted case.
    To prove Lemma~\ref{lem:good-d-neg},
    we need to show that~(\ref{eq:good-d-1-neg}) is negative 
    and~(\ref{eq:good-d-2-neg}) is positive for both cases.
    For the weighted case,
    in one approach to prove~(\ref{eq:good-d-2-neg}) is positive, we could
    first show the derivative only has one zero on the interval $[-3,0]$ and that
    this critical point is a maximum. Since~(\ref{eq:good-d-2-neg}) is non-negative
    at $p=-3$ and $p=0$, this would prove~(\ref{eq:good-d-2-neg}) is positive.
    A more straightforward analysis for showing~(\ref{eq:good-d-1-neg}) is 
    negative is to argue that the derivative of~(\ref{eq:good-d-1-neg}) is positive
    over the interval $[-3,0]$, and therefore~(\ref{eq:good-d-1-neg}) is strictly 
    increasing over the interval. Since~(\ref{eq:good-d-1-neg}) is $0$ at $p = 0$, 
    this would prove~(\ref{eq:good-d-1-neg}) is negative over the interval.
    For the unweighted case, we could show~(\ref{eq:good-d-2-neg}) is negative
    by showing that the single critical point on the interval $(-\frac{1}{8}, 0)$ is a
    minimum and the function at the endpoints of the interval is non-positive. To
    show~(\ref{eq:good-d-1-neg}) is positive, one could simply show the function
    is strictly decreasing on the interval $(-\frac{1}{8},0)$ and the function is non-negative
    at $0$. 
    
    As was the case for Lemma~\ref{lem:good-d}, the formal verification of these
    facts would be lengthy and not necessarily useful, so we again verified these
    inequalities analytically via MATLAB. We plot these functions
    over the respective domains in Figure~\ref{fig:good-d-neg}.
    
    \begin{figure}[t]
      \centering
      \includegraphics[scale=0.5]{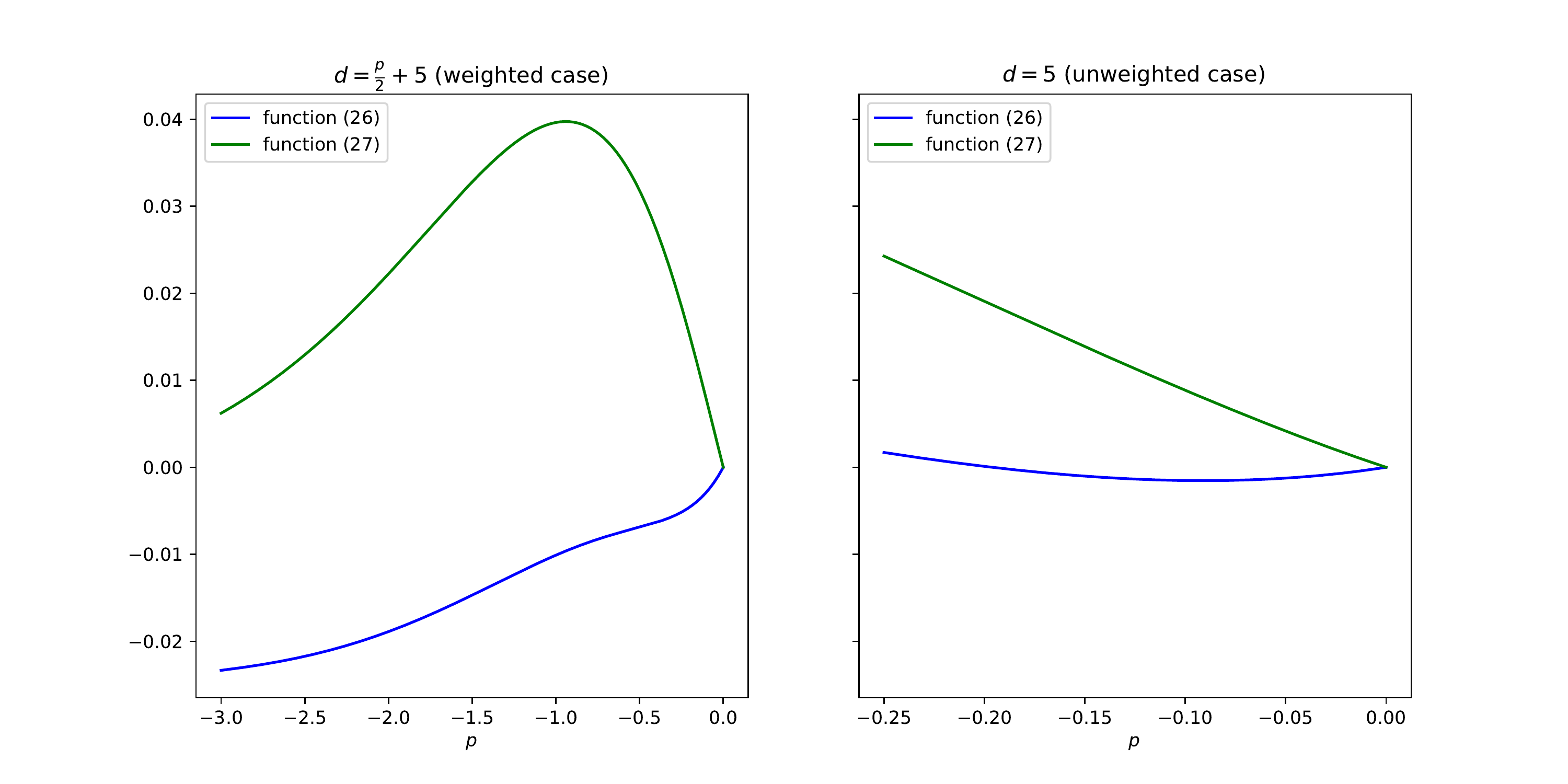}
      \caption{Plot of functions~(\ref{eq:good-d-1-neg}) and~(\ref{eq:good-d-2-neg}).}
      \label{fig:good-d-neg}    
    \end{figure}  
    
    We also need an analogous lemma to Lemma~\ref{lem:strict-subset-A} for the 
    case of $p < 0$. It is easy to argue that since $x^p$ is now a strictly decreasing 
    function in $x$, $\rho_p(S \cup A') \ge
    \frac{3^p \cdot \abs{S} + \sum_{v \in A'} (d + d_{S+v}(v))^p}{\abs{S} + \abs{A'}}$.
    The inequality is strict when $\abs{A'} < \abs{A}$.

    \paragraph{Proof sketch for Theorem~\ref{thm:hard-weighted-neg}.}
    We are now ready to discuss the few differences in the proof of
    Theorem~\ref{thm:hard-weighted-neg} compared to the proof of
    Theorem~\ref{thm:hard-weighted}. At the beginning of this section, we discussed the
    changes to the reduction for the case of $p < 0$. We also noted that a similar argument
    to that in Equation~\ref{eq:hardness-easy-dir} would hold for $p < 0$, showing that
    $\OPT_G \le \rho^*$ when the given instance $(\cS, \cU)$ contains an exact $3$-cover.
    
    Now suppose the \ec instance does not contain an exact $3$-cover. We discuss how the
    three cases in the proof of Theorem~\ref{thm:hard-weighted} would change for $p < 0$.
    Recall that we take arbitrary sets $S \subseteq L$ and $A' \subseteq A$ and the goal
    is to show that $\rho_p(S \cup A') > \rho^*$. For case 1 
    (corresponding to $3\abs{S} = \abs{A'}$),
    the only difference is that the function $\sum_{i=1}^n (c+x_i)^p$ is convex instead of
    concave. The reasoning for this case still remains the same. Now we consider case 2
    (corresponding to $3\abs{S} = \beta \abs{A'}$ for some $\beta \in [0,1)$). There are two
    differences here. We first see that Lemma~\ref{lem:int-flat} holds for $p < 0$, implying
    that $\rho_p(S \cup A')$ is \emph{at least} the quantity in~(\ref{eq:case2-1}). The 
    second difference is in guaranteeing that the function in~(\ref{eq:case2-2}) is 
    strictly decreasing in $\beta$ if and only if $3^p + 3(d+1)^p < 4d^p$. This 
    inequality holds because of Lemma~\ref{lem:good-d-neg} 
    (instead of Lemma~\ref{lem:good-d}).
    
    Now we consider case 3 (corresponding to $3\abs{S} = \alpha \abs{A'}$ for $\alpha > 1$).
    There are again two differences. Lemma~\ref{lem:int-flat} holds for $p < 0$, so we have that
    $\rho_p(S\cup A')$ is \emph{at least} the quantity in~(\ref{eq:case3-1}). Then we have
    that the function in~(\ref{eq:case3-2}) is strictly increasing in $\beta$ if 
    $4(d+2)^p + 3^p > 5(d+1)^p$.  This holds because of Lemma~\ref{lem:good-d-neg}.
   
  \subsection{Discussion on extending hardness results to all $p \in (-\infty, 0) \cup (0,1)$}
  \label{sec:hardness-all-neg-app}
    \paragraph{Weighted graphs.}
    Suppose one wants to prove that the weighted \pdsg is NP-hard for some
    $p \le -3$. All one would have to do is define $d$ in terms of $p$ such that
    the inequalities in Lemma~\ref{lem:good-d-neg} hold. For example, take 
    $p = -10$. It would suffice to set $d = 0.2p + 5$ in order to prove the inequalities
    in Lemma~\ref{lem:good-d-neg} hold for $p \in [-10 - \delta, -10 + \delta]$ for some small
    $\delta$. (It is important to note that this choice of $d$ would not work for 
    $p \in [-3, -0.5]$, for example.)
    Once we have a definition of $d$ that allows us to prove a version of
    Lemma~\ref{lem:good-d-neg} for our new choice of $p$, the proof of 
    Theorem~\ref{thm:hard-weighted-neg} will still work with this new version of 
    Lemma~\ref{lem:good-d-neg}.    
    
    \paragraph{Unweighted graphs.}
    In the unweighted case, we have less flexibility than we do for the weighted case.
    In this case, one potential approach to proving NP-hardness for values of $p$ in
    $(-\infty, -\frac{1}{8}) \cup (\frac{1}{4}, 1)$
    is to reduce from \aec{\ell} for some $\ell > 3$ instead of \ec and carefully choose
    the value of $d$, the degree of the regular graph $G[A]$. For example, the analysis
    goes through as above for $p = \frac{1}{2}$ if one reduces from \aec{6} and sets $d =11$.
    The difficulty in this approach is being able to choose a value for $d$ and $\ell$ for 
    every given $p$.  
  
\section{Experiments}
\label{sec:experiments-app}
  \subsection{Background on Frank-Wolfe}
    The Frank-Wolfe method is used to solve constrained convex optimization problems.
    Suppose we want to minimize a differentiable, convex function $f$ over a convex set $S$. 
    The Frank-Wolfe method critically relies on the ability to efficiently optimize
    linear functions over $S$. The algorithm proceeds as follows. We start with an
    arbitrary point $x_0 \in S$. We start iteration $k$ by first solving 
    $y_k := \argmin_{z \in S} z^T \nabla f(x_k)$. Then we let
    $x_{k+1} = (1 - \alpha_k) x_k + \alpha_k y_k$ where $\alpha_k$ is the step size.
    The standard Frank-Wolfe method sets $\alpha_k = \frac{2}{k+2}$. 

    We now discuss how we use the Frank-Wolfe method to solve \pdsg. 
    We use the idea from~\cite{hqc-22} where they take the same approach for
    solving \apdsg{1}. Harb et al.\ introduce the following convex program for 
    a supermodular function $f$:
    \begin{equation}\label{pr:dsg-sq}
      \text{minimize } \sum_{v \in V} b_v^2 \text{ subject to } b \in B_{f}
    \end{equation}
    where 
    \[
      B_{f} = \{x \in \R^V \mid x \ge 0; x(S) \ge f(S), \forall S \subseteq V; x(V) = f(V)\}.
    \]
    Note that $B_f$ is the base contrapolymatroid associated
    with $f$ (see, e.g.,~\cite{s-03}). \cite{hqc-22} shows that if one could obtain an
    optimal solution $b$ to~(\ref{pr:dsg-sq}), then the vertices with the largest value
    in $b$ will form an optimal solution to \dss (i.e.\ $\max_S \frac{f(S)}{\abs{S}}$).
    We can then use the Frank-Wolfe method to obtain near-optimal solutions
    to~(\ref{pr:dsg-sq}). For rounding a near-optimal
    solution $b$ to~(\ref{pr:dsg-sq}), we use the same heuristic approach 
    that~\cite{hqc-22} and~\cite{dcs-17} used for \apdsg{1},
    which is to sort the vertices in increasing order of $b$ and output the suffix with the
    largest density. 
    
    We address a couple of implementation aspects of using the Frank-Wolfe
    method to find solutions to~(\ref{pr:dsg-sq}). We use Frank-Wolfe to optimize over the 
    base contrapolymatroid associated with $f_p$, $B_{f_p}$.
    The Frank-Wolfe method requires an initial point in the given convex set $B_{f_p}$. In our
    experiments, we start with a simple point in $B_{f_p}$: for all $v \in V$, we set 
    $x_v = d_G(v)^p$. We experimented with other starting points, such as those based
    on the ordering produced by $\greedy$, but this simple solution worked best and was
    much faster.
    
    Furthermore, we note that when applying Frank-Wolfe to solve the convex
    program~(\ref{pr:dsg-sq}), the optimization problem at each iteration is essentially
    $\min_{z \in B_f} z^T x_k$ where $x_k$ is our current solution. We can easily solve this
    optimization problem as algorithms for optimizing linear functions over the base 
    contrapolymatroid are well-understood~\cite{l-83}.
    
  \subsection{Full reporting of results}
    We include the full table of results in Table~\ref{fig:exp-lazy-full} for the experiments 
    comparing $\greedy$ and $\lgreedy$ on all ten real-world graphs and all values of 
    $p \in \{1.05, 1.25, 1.5, 1.75, 2\}$. For $\lgreedy$, we report results
    for values of $\eps \in \{0.01, 0.1, 1\}$.
    
    We include results for the experiments comparing $\gpp$, $\lgpp$, and Frank-Wolfe
    on all ten real-world graphs and all values of $p \in \{1.05, 1.25, 1.5, 1.75, 2\}$.
    We include separate figures for each value of $p$ in
    Figures~\ref{fig:exp-near-opt-1.05}, \ref{fig:exp-near-opt-1.25},
    \ref{fig:exp-near-opt-1.5}, \ref{fig:exp-near-opt-1.75}, \ref{fig:exp-near-opt-2}.

    We include the full table of results in Table~\ref{fig:exp-lt1-full} for the experiments 
    for \pdsg for $p < 1$. We consider $p \in \{-1, -0.5, 0.25, 0.5, 0.75\}$.
  
    \begin{table*}
    \resizebox{\textwidth}{!}{
    \begin{tabular}{c|c|l|llllllllll}
      \hline
      & metric & algorithm  & Astro & CM05 & BrKite & Enron & roadCA & roadTX & webG & webBS & Amaz & YTube\\\hline
      \multirow{8}{*}{$p=1.05$} &      \multirow{4}{*}{time (s)} & $\abblgreedy$ ($\eps$$=$$0.01$) & 0.272 & 0.173 & 0.339 & 0.325 & \textbf{3.142} & \textbf{2.222} & 23.25 & 32.68 & 1.36 & 15.19 \\
      &        & $\abblgreedy$ ($\eps$$=$$0.1$) & 0.138 & 0.134 & 0.199 & 0.147 & 3.145 & 2.224 & 12.58 & 13.13 & 1.234 & 5.917 \\
      &        & $\abblgreedy$ ($\eps$$=$$1.0$) & \textbf{0.103} & \textbf{0.107} & \textbf{0.148} & \textbf{0.109} & 3.191 & 2.227 & \textbf{9.521} & \textbf{10.07} & \textbf{1.041} & \textbf{3.886} \\
      &        & $\greedy$ & 0.312 & 0.175 & 0.421 & 0.588 & 3.364 & 2.432 & 40.31 & 686.5 & 1.36 & 73.33 \\
      \cline{2-13}
      &       \multirow{4}{*}{\begin{tabular}{c}density \\ ($M_p$) \end{tabular}} & $\abblgreedy$ ($\eps$$=$$0.01$) & 59.49 & \textbf{31.74} & \textbf{81.42} & \textbf{75.16} & \textbf{3.751} & \textbf{4.164} & \textbf{54.39} & \textbf{206.9} & \textbf{8.867} & \textbf{91.91} \\
      &        & $\abblgreedy$ ($\eps$$=$$0.1$) & 59.49 & \textbf{31.74} & \textbf{81.42} & \textbf{75.16} & \textbf{3.751} & \textbf{4.164} & \textbf{54.39} & \textbf{206.9} & 8.864 & \textbf{91.91} \\
      &        & $\abblgreedy$ ($\eps$$=$$1.0$) & \textbf{59.5} & \textbf{31.74} & \textbf{81.42} & \textbf{75.16} & \textbf{3.751} & 3.915 & \textbf{54.39} & \textbf{206.9} & 8.841 & \textbf{91.91} \\
      &        & $\greedy$ & 59.49 & \textbf{31.74} & \textbf{81.42} & \textbf{75.16} & \textbf{3.751} & 4.007 & \textbf{54.39} & \textbf{206.9} & \textbf{8.867} & \textbf{91.91} \\
        \hline
      \multirow{8}{*}{$p=1.25$} &      \multirow{4}{*}{time (s)} & $\abblgreedy$ ($\eps$$=$$0.01$) & 0.285 & 0.176 & 0.349 & 0.35 & \textbf{3.145} & \textbf{2.213} & 24.88 & 35.29 & 1.565 & 16.53 \\
      &        & $\abblgreedy$ ($\eps$$=$$0.1$) & 0.144 & 0.135 & 0.209 & 0.154 & 3.148 & 2.214 & 13.09 & 13.32 & 1.342 & 7.685 \\
      &        & $\abblgreedy$ ($\eps$$=$$1.0$) & \textbf{0.105} & \textbf{0.114} & \textbf{0.151} & \textbf{0.113} & 3.196 & 2.234 & \textbf{9.714} & \textbf{9.952} & \textbf{1.052} & \textbf{4.623} \\
      &        & $\greedy$ & 0.312 & 0.174 & 0.424 & 0.581 & 3.146 & 2.284 & 41.85 & 687.2 & 1.38 & 75.98 \\
      \cline{2-13}
      &       \multirow{4}{*}{\begin{tabular}{c}density \\ ($M_p$) \end{tabular}} & $\abblgreedy$ ($\eps$$=$$0.01$) & 61.6 & 32.7 & \textbf{82.7} & \textbf{77.21} & \textbf{3.756} & \textbf{4.054} & \textbf{54.63} & \textbf{207.4} & 11.46 & \textbf{95.5} \\
      &        & $\abblgreedy$ ($\eps$$=$$0.1$) & 61.6 & 32.71 & \textbf{82.7} & \textbf{77.21} & \textbf{3.756} & \textbf{4.054} & \textbf{54.63} & \textbf{207.4} & 11.46 & \textbf{95.5} \\
      &        & $\abblgreedy$ ($\eps$$=$$1.0$) & \textbf{61.76} & \textbf{32.74} & \textbf{82.7} & \textbf{77.21} & \textbf{3.756} & 3.958 & \textbf{54.63} & \textbf{207.4} & \textbf{11.48} & \textbf{95.5} \\
      &        & $\greedy$ & 61.6 & 32.7 & \textbf{82.7} & \textbf{77.21} & 3.683 & 4.036 & \textbf{54.63} & \textbf{207.4} & 11.46 & \textbf{95.5} \\
        \hline
      \multirow{8}{*}{$p=1.5$} &      \multirow{4}{*}{time (s)} & $\abblgreedy$ ($\eps$$=$$0.01$) & 0.294 & 0.174 & 0.36 & 0.329 & 3.151 & 2.215 & 23.99 & 37.91 & 1.365 & 17.34 \\
      &        & $\abblgreedy$ ($\eps$$=$$0.1$) & 0.14 & 0.138 & 0.216 & 0.161 & 3.145 & 2.215 & 12.99 & 13.56 & 1.255 & 6.733 \\
      &        & $\abblgreedy$ ($\eps$$=$$1.0$) & \textbf{0.106} & \textbf{0.109} & \textbf{0.156} & \textbf{0.113} & \textbf{3.107} & \textbf{2.172} & \textbf{9.553} & \textbf{9.887} & \textbf{1.052} & \textbf{4.051} \\
      &        & $\greedy$ & 0.314 & 0.175 & 0.42 & 0.582 & 3.141 & 2.657 & 36.11 & 752.4 & 1.505 & 60.99 \\
      \cline{2-13}
      &       \multirow{4}{*}{\begin{tabular}{c}density \\ ($M_p$) \end{tabular}} & $\abblgreedy$ ($\eps$$=$$0.01$) & 64.24 & \textbf{33.94} & \textbf{84.46} & \textbf{80.31} & 3.72 & \textbf{4.074} & 54.86 & \textbf{244.2} & 13.81 & \textbf{101.7} \\
      &        & $\abblgreedy$ ($\eps$$=$$0.1$) & 64.32 & \textbf{33.94} & \textbf{84.46} & \textbf{80.31} & 3.72 & \textbf{4.074} & 54.86 & \textbf{244.2} & 13.81 & \textbf{101.7} \\
      &        & $\abblgreedy$ ($\eps$$=$$1.0$) & \textbf{64.57} & \textbf{33.94} & 84.45 & \textbf{80.31} & \textbf{3.763} & 4.014 & \textbf{54.91} & \textbf{244.2} & \textbf{13.87} & \textbf{101.7} \\
      &        & $\greedy$ & 64.24 & \textbf{33.94} & \textbf{84.46} & \textbf{80.31} & 3.72 & \textbf{4.074} & 54.86 & \textbf{244.2} & 13.81 & \textbf{101.7} \\
        \hline
      \multirow{8}{*}{$p=1.75$} &      \multirow{4}{*}{time (s)} & $\abblgreedy$ ($\eps$$=$$0.01$) & 0.297 & 0.173 & 0.362 & 0.383 & 3.132 & 2.2 & 20.87 & 41.33 & 1.333 & 17.2 \\
      &        & $\abblgreedy$ ($\eps$$=$$0.1$) & 0.157 & 0.141 & 0.223 & 0.176 & 3.135 & 2.198 & 12.28 & 14.16 & 1.245 & 6.931 \\
      &        & $\abblgreedy$ ($\eps$$=$$1.0$) & \textbf{0.108} & \textbf{0.109} & \textbf{0.154} & \textbf{0.113} & \textbf{3.104} & \textbf{2.163} & \textbf{9.399} & \textbf{9.931} & \textbf{1.049} & \textbf{4.044} \\
      &        & $\greedy$ & 0.31 & 0.172 & 0.406 & 0.586 & 3.219 & 2.541 & 31.26 & 742.4 & 1.406 & 68.5 \\
      \cline{2-13}
      &       \multirow{4}{*}{\begin{tabular}{c}density \\ ($M_p$) \end{tabular}} & $\abblgreedy$ ($\eps$$=$$0.01$) & 67.12 & \textbf{35.38} & \textbf{86.45} & \textbf{84.19} & \textbf{3.816} & \textbf{4.113} & 66.5 & \textbf{387.1} & 15.63 & \textbf{112.2} \\
      &        & $\abblgreedy$ ($\eps$$=$$0.1$) & 67.12 & \textbf{35.38} & \textbf{86.45} & \textbf{84.19} & \textbf{3.816} & \textbf{4.113} & 66.51 & \textbf{387.1} & 15.63 & \textbf{112.2} \\
      &        & $\abblgreedy$ ($\eps$$=$$1.0$) & \textbf{67.88} & 35.37 & 86.44 & \textbf{84.19} & 3.769 & 4.071 & \textbf{67.68} & \textbf{387.1} & \textbf{16.76} & \textbf{112.2} \\
      &        & $\greedy$ & 67.12 & \textbf{35.38} & \textbf{86.45} & \textbf{84.19} & \textbf{3.816} & \textbf{4.113} & 66.5 & \textbf{387.1} & 15.63 & \textbf{112.2} \\
        \hline
      \multirow{8}{*}{$p=2.0$} &      \multirow{4}{*}{time (s)} & $\abblgreedy$ ($\eps$$=$$0.01$) & 0.248 & 0.125 & 0.307 & 0.345 & 2.187 & 1.555 & 18.4 & 41.71 & 0.966 & 16.21 \\
      &        & $\abblgreedy$ ($\eps$$=$$0.1$) & 0.112 & 0.096 & 0.17 & 0.124 & 2.184 & 1.554 & 9.105 & 10.29 & 0.919 & 6.069 \\
      &        & $\abblgreedy$ ($\eps$$=$$1.0$) & \textbf{0.057} & \textbf{0.064} & \textbf{0.101} & \textbf{0.066} & \textbf{2.161} & \textbf{1.528} & \textbf{6.089} & \textbf{5.217} & \textbf{0.748} & \textbf{3.272} \\
      &        & $\greedy$ & 0.257 & 0.124 & 0.337 & 0.518 & \textbf{2.161} & 1.534 & 25.65 & 662.0 & 0.98 & 60.97 \\
      \cline{2-13}
      &       \multirow{4}{*}{\begin{tabular}{c}density \\ ($M_p$) \end{tabular}} & $\abblgreedy$ ($\eps$$=$$0.01$) & 71.46 & \textbf{37.2} & \textbf{88.78} & \textbf{88.99} & 3.759 & \textbf{4.224} & 98.65 & \textbf{675.3} & \textbf{23.5} & \textbf{182.4} \\
      &        & $\abblgreedy$ ($\eps$$=$$0.1$) & 71.44 & \textbf{37.2} & \textbf{88.78} & \textbf{88.99} & 3.759 & \textbf{4.224} & \textbf{99.21} & \textbf{675.3} & \textbf{23.5} & \textbf{182.4} \\
      &        & $\abblgreedy$ ($\eps$$=$$1.0$) & \textbf{71.7} & \textbf{37.2} & 88.77 & 88.97 & \textbf{3.775} & 4.092 & 98.67 & 669.0 & \textbf{23.5} & \textbf{182.4} \\
      &        & $\greedy$ & 71.46 & \textbf{37.2} & \textbf{88.78} & \textbf{88.99} & 3.759 & \textbf{4.224} & 98.65 & \textbf{675.3} & \textbf{23.5} & \textbf{182.4} \\
        \hline
    \end{tabular}
    }
    \caption{Results for experiments comparing $\greedy$ to $\lgreedy$. 
    We write $\lgreedy$ as $\abblgreedy$ for spacing issues.
    See Section~\ref{sec:exp-lazy} for more details.}
    \label{fig:exp-lazy-full}
    \end{table*}

\begin{figure*}
\centering
\includegraphics[scale=0.45]{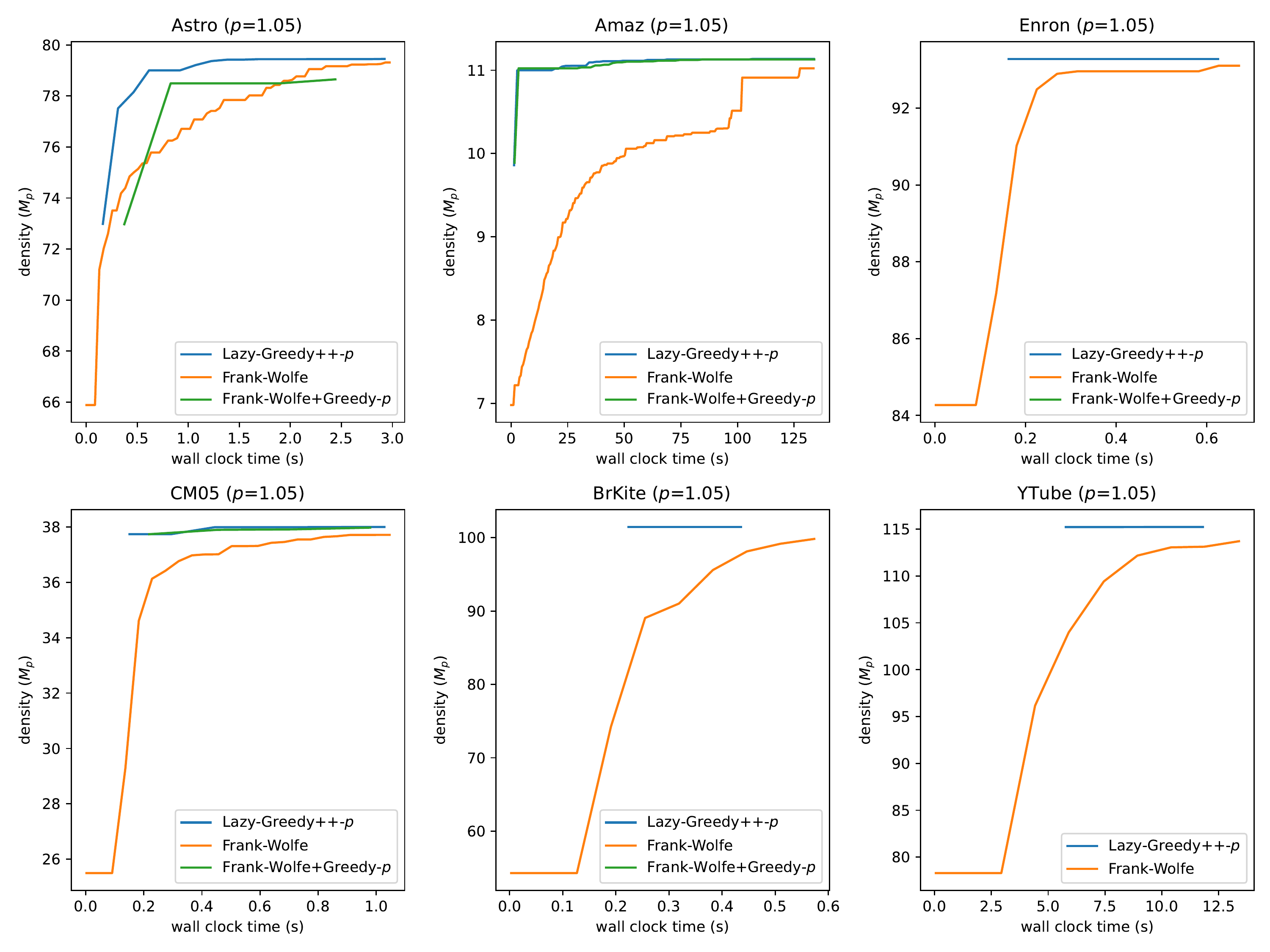}

\bigskip

\includegraphics[scale=0.45]{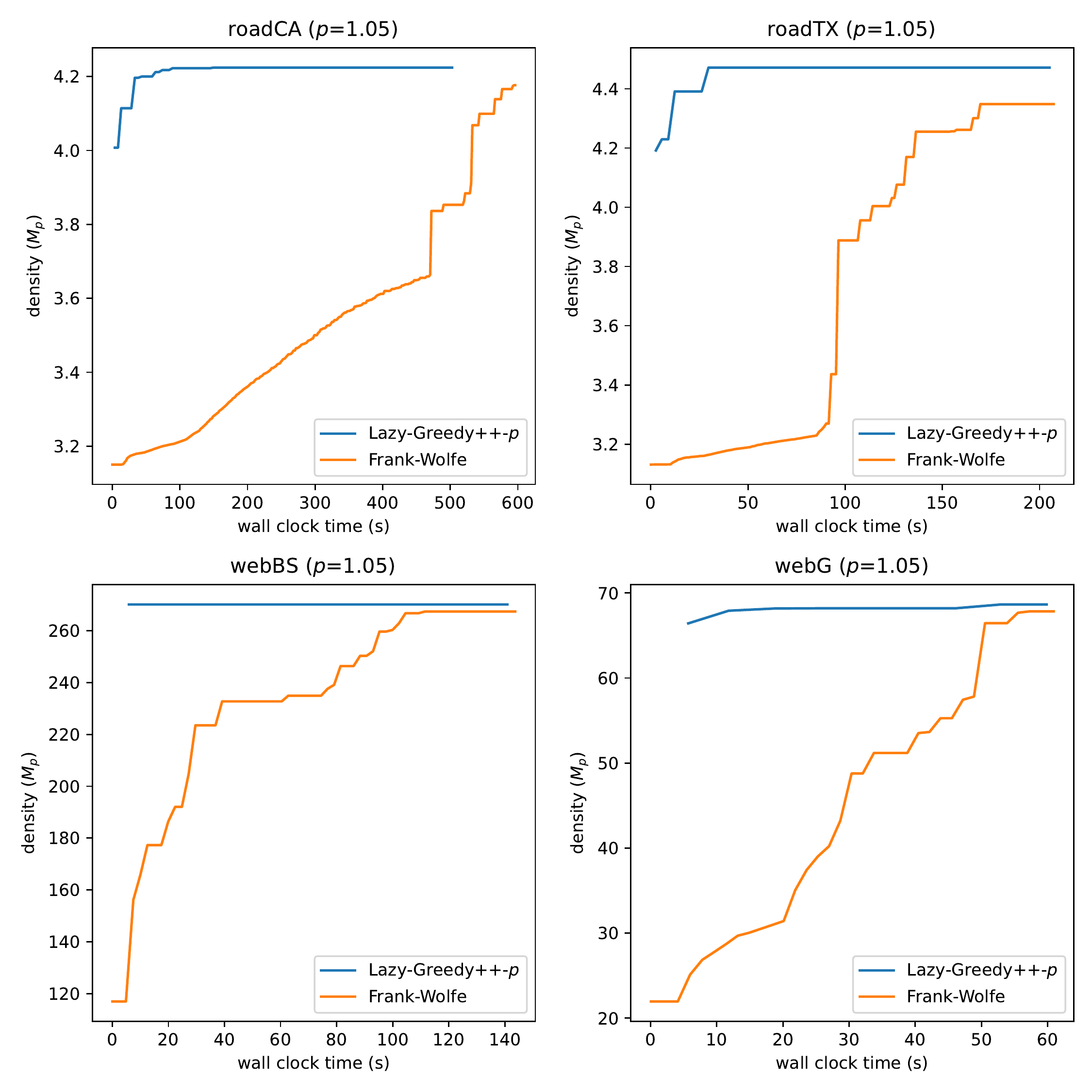}
\caption{Experiments comparing convergence rates of the
algorithms we consider in Section~\ref{sec:exp-near-opt}. Results for $p = 1.05$. }
\label{fig:exp-near-opt-1.05}
\end{figure*}

\begin{figure*}
\centering
\includegraphics[scale=0.45]{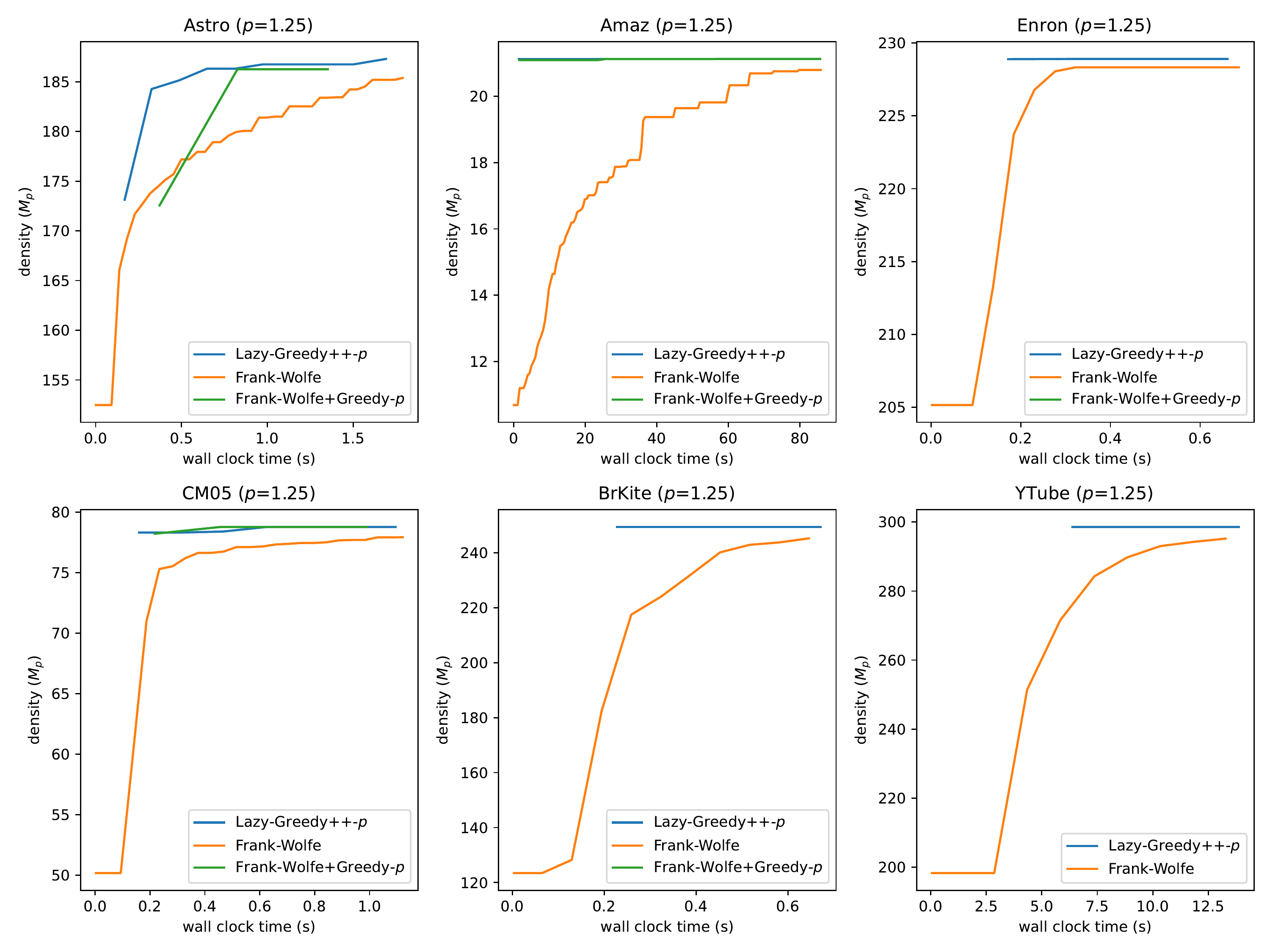}

\bigskip

\includegraphics[scale=0.45]{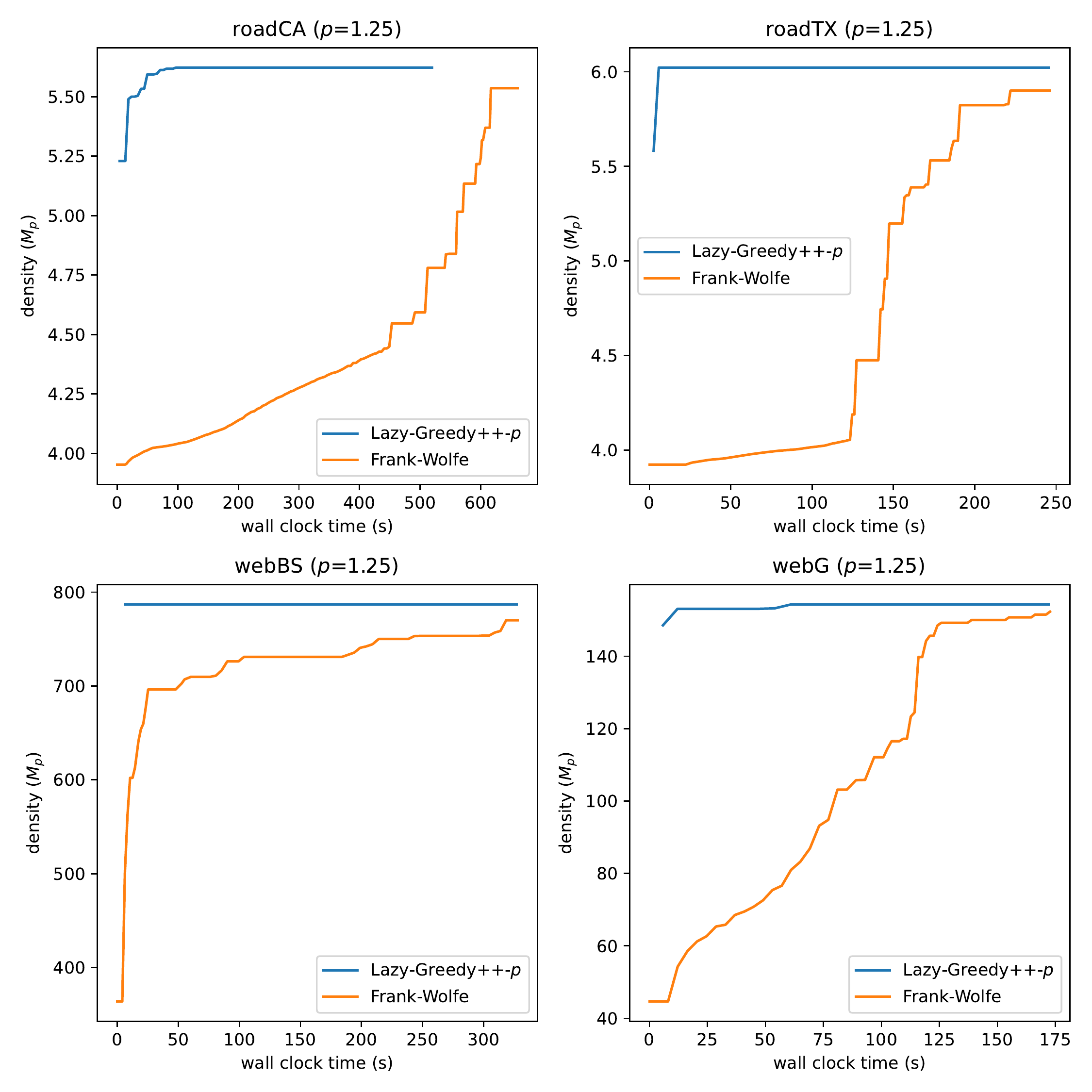}
\caption{Experiments comparing convergence rates of the
algorithms we consider in Section~\ref{sec:exp-near-opt}. Results for $p = 1.25$. }
\label{fig:exp-near-opt-1.25}
\end{figure*}

\begin{figure*}
\centering
\includegraphics[scale=0.45]{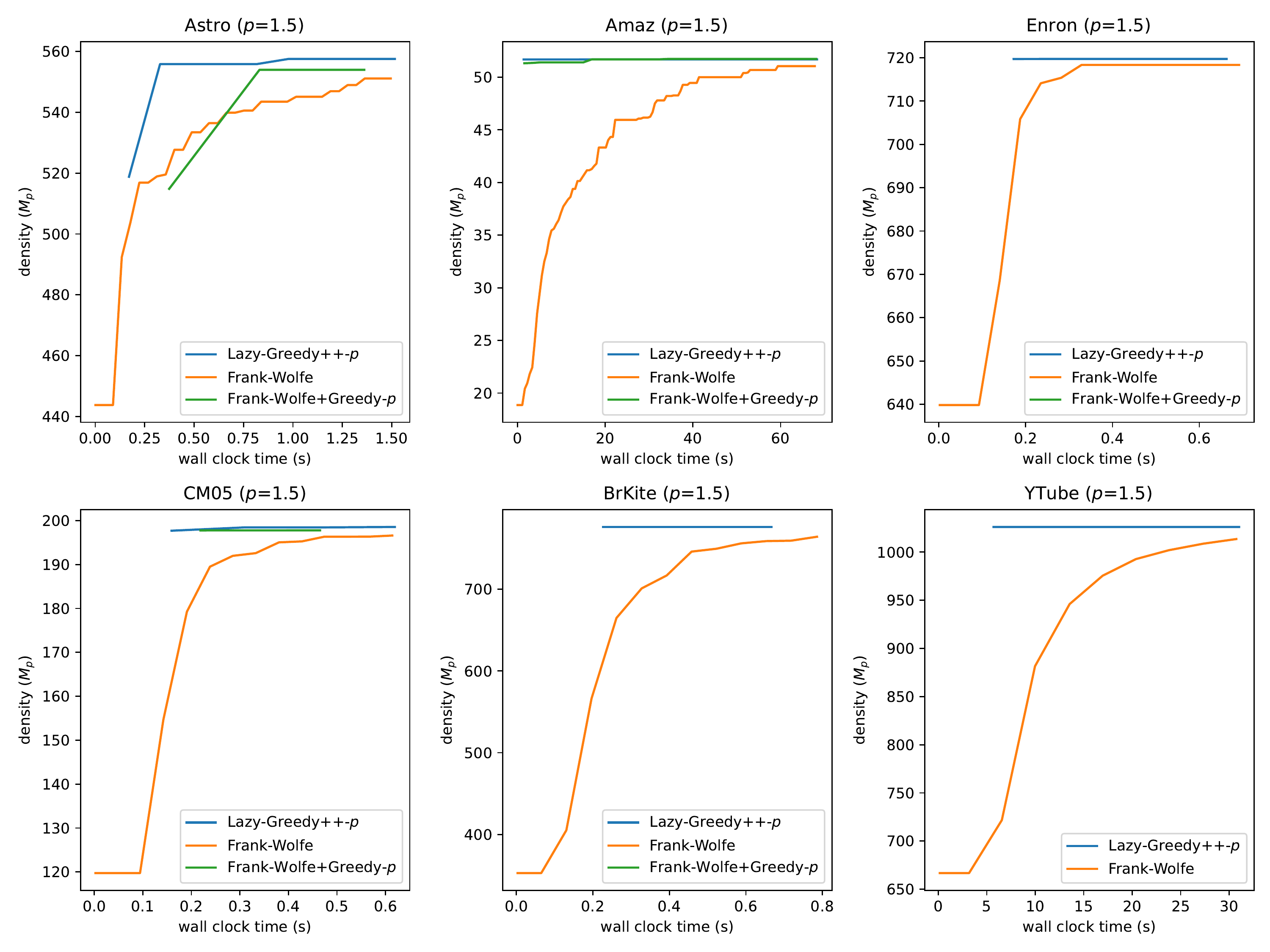}

\bigskip

\includegraphics[scale=0.45]{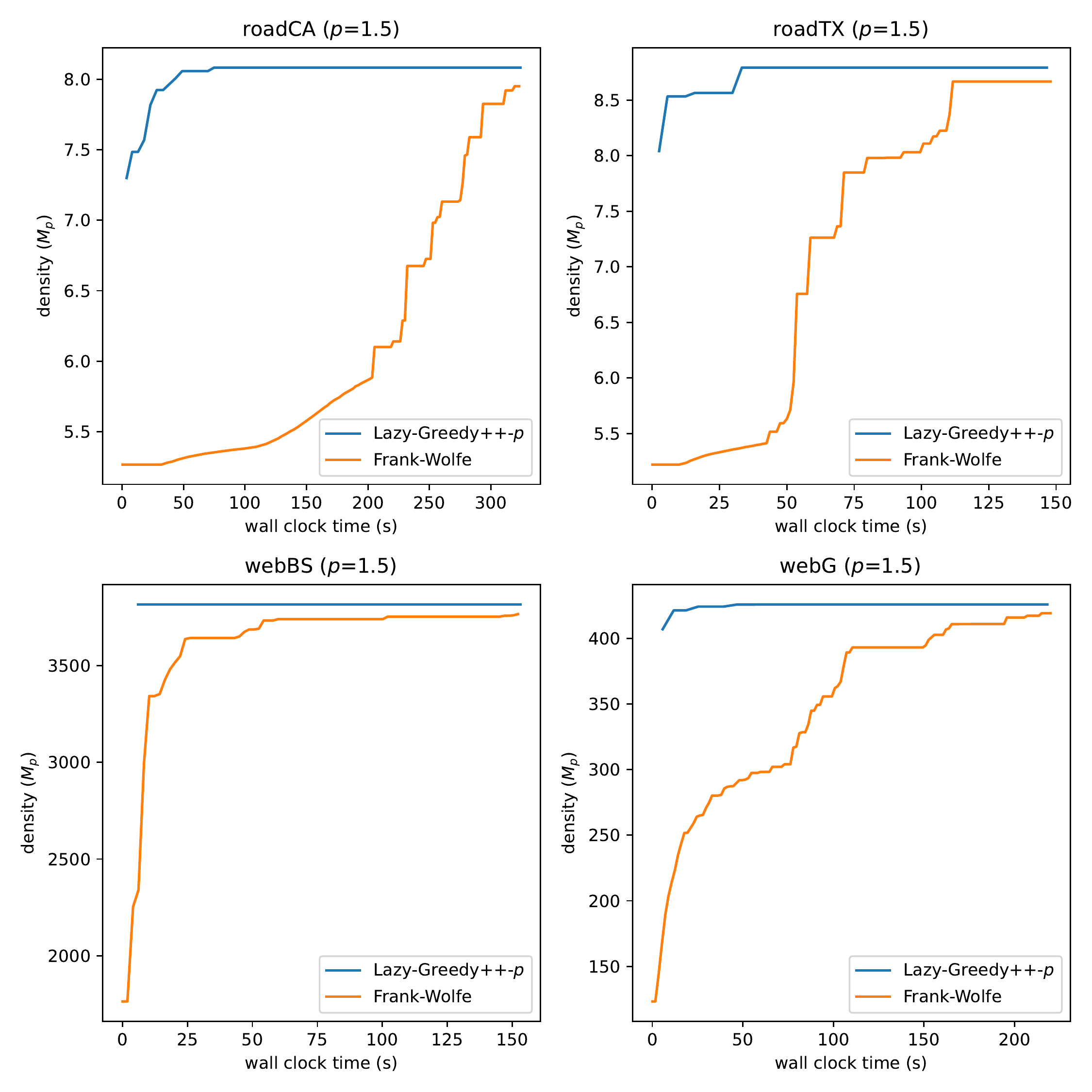}
\caption{Experiments comparing convergence rates of the
algorithms we consider in Section~\ref{sec:exp-near-opt}. Results for $p = 1.5$.}
\label{fig:exp-near-opt-1.5}
\end{figure*}

\begin{figure*}
\centering
\includegraphics[scale=0.45]{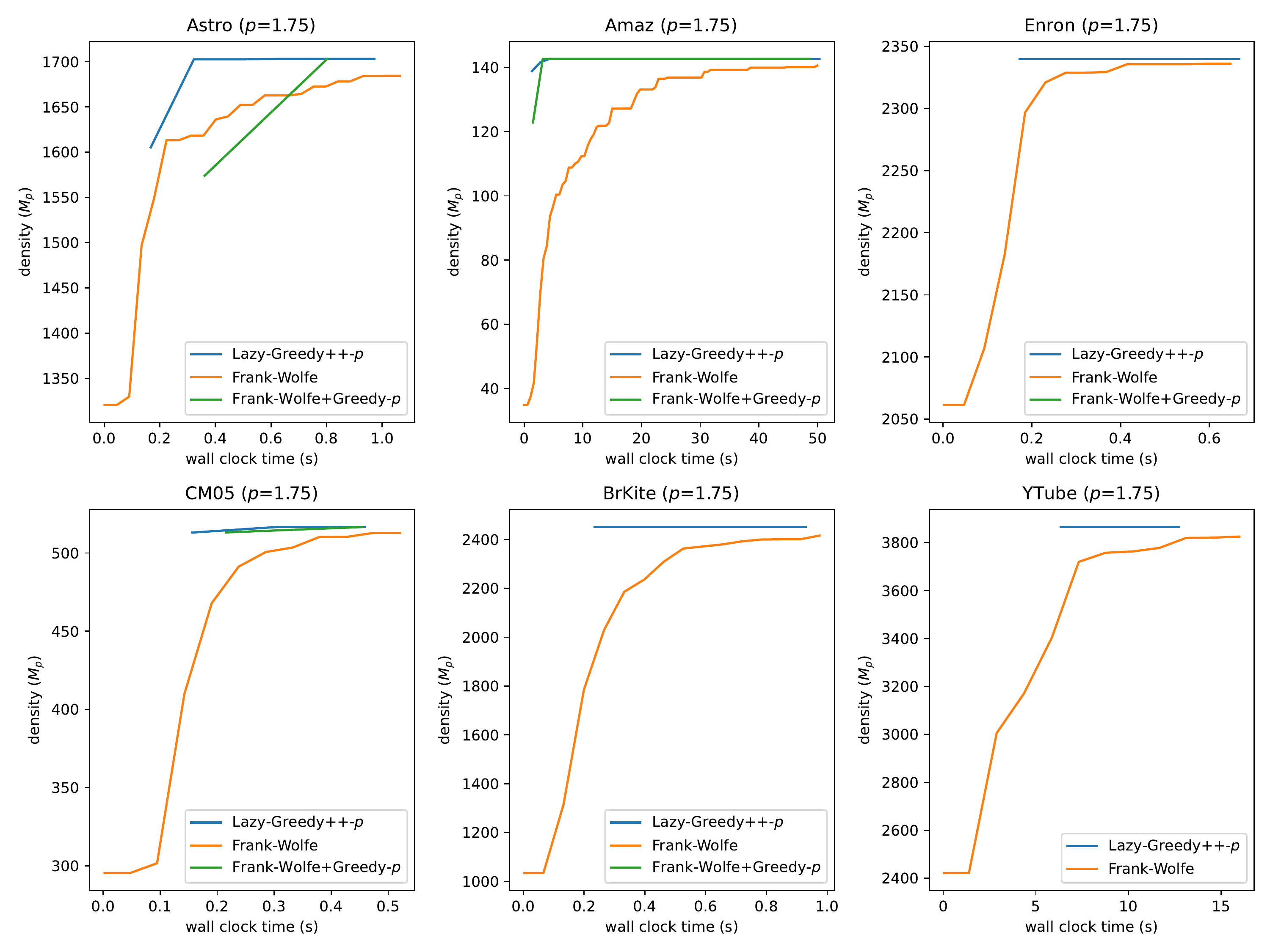}

\bigskip

\includegraphics[scale=0.45]{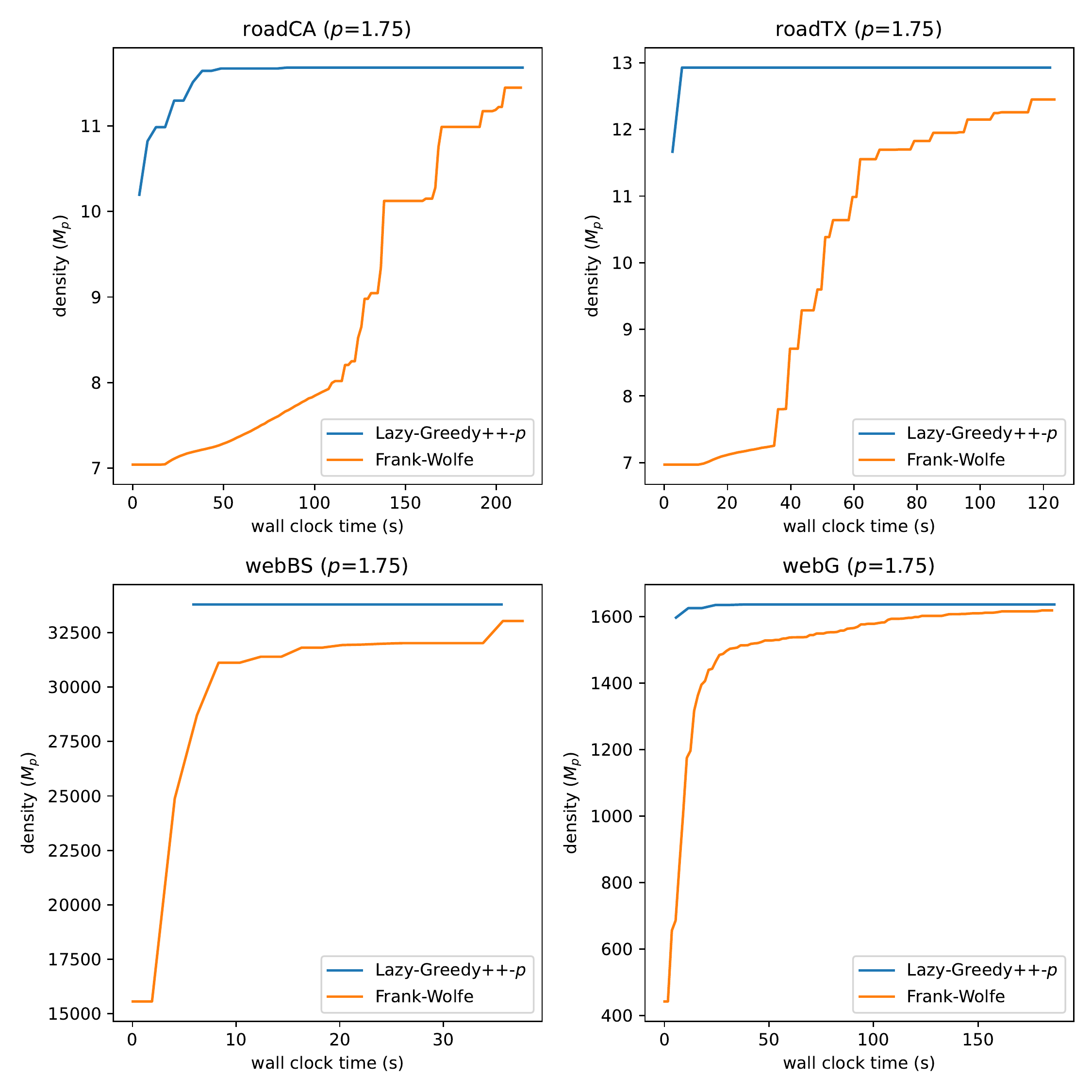}
\caption{Experiments comparing convergence rates of the
algorithms we consider in Section~\ref{sec:exp-near-opt}. Results for $p = 1.75$.}
\label{fig:exp-near-opt-1.75}
\end{figure*}

\begin{figure*}
\centering
\includegraphics[scale=0.45]{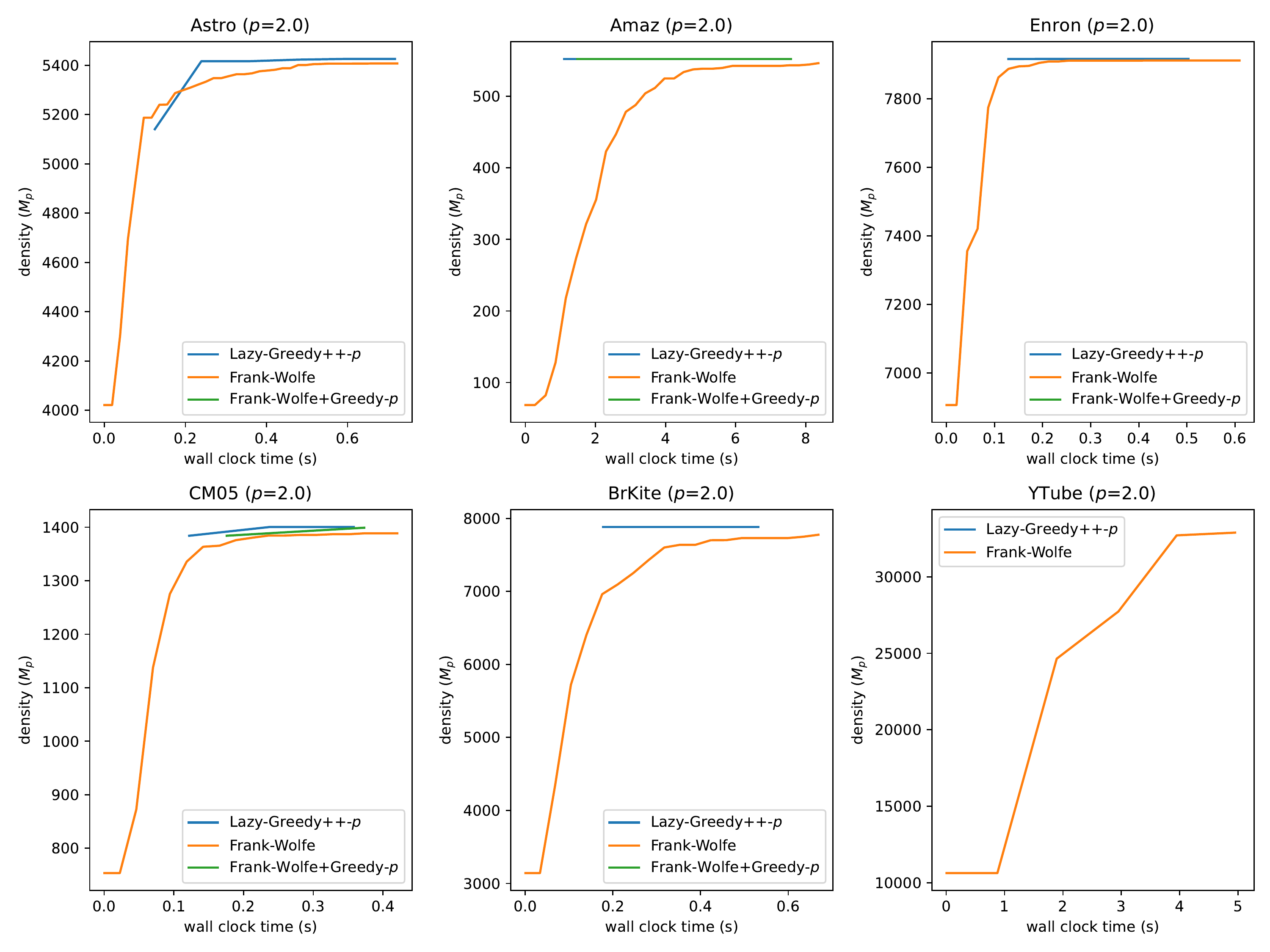}

\bigskip

\includegraphics[scale=0.45]{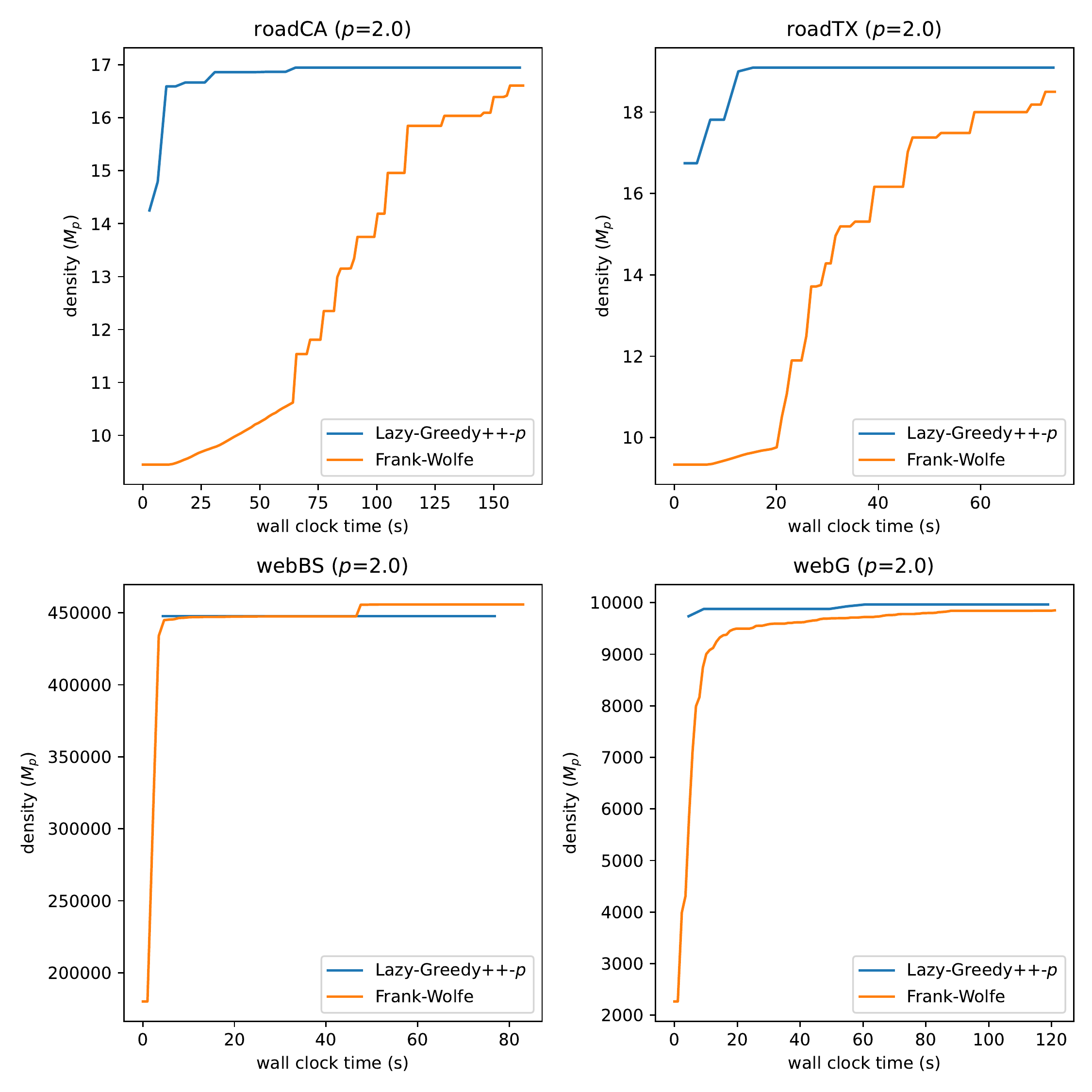}
\caption{Experiments comparing convergence rates of the
algorithms we consider in Section~\ref{sec:exp-near-opt}. Results for $p = 2$.}
\label{fig:exp-near-opt-2}
\end{figure*}

\begin{table*}
      \resizebox{\textwidth}{!}{
      \begin{tabular}{c|c|l|llllllllll}
        \hline
        & metric & algorithm  & Astro & CM05 & BrKite & Enron & roadCA & roadTX & webG & webBS & Amaz & YTube\\\hline
        \multirow{6}{*}{$p=-1.0$} &        \multirow{3}{*}{time (s)} & $\abbsgreedy$ & \textbf{0.022} & \textbf{0.032} & \textbf{0.043} & \textbf{0.028} & \textbf{1.146} & \textbf{0.793} & \textbf{1.297} & \textbf{0.836} & \textbf{0.356} & \textbf{1.151} \\
        &         & 1-mean DSG & 2.179 & 2.984 & 4.4 & 2.834 & 164.9 & 111.8 & 117.9 & 76.95 & 37.81 & 117.9 \\
        &         & $\abbsgpp$ & 2.179 & 2.984 & 4.4 & 2.834 & 164.9 & 111.8 & 117.9 & 76.95 & 37.81 & 117.9 \\
        \cline{2-13}
        &         \multirow{3}{*}{\begin{tabular}{c}density \\ ($M_p$) \end{tabular}} & $\abbsgreedy$ & 56.91 & \textbf{29.0} & \textbf{73.08} & \textbf{63.21} & 3.27 & 3.257 & 52.87 & \textbf{204.0} & \textbf{7.398} & \textbf{75.56} \\
        &         & 1-mean DSG & 55.52 & 26.64 & 70.28 & 61.19 & 3.583 & \textbf{3.833} & 53.99 & \textbf{204.0} & 5.45 & 72.8 \\
        &         & $\abbsgpp$ & \textbf{58.92} & \textbf{29.0} & \textbf{73.08} & \textbf{63.21} & \textbf{3.75} & \textbf{3.833} & \textbf{54.17} & \textbf{204.0} & \textbf{7.398} & \textbf{75.56} \\
        \hline
        \multirow{6}{*}{$p=-0.5$} &        \multirow{3}{*}{time (s)} & $\abbsgreedy$ & \textbf{0.049} & \textbf{0.06} & \textbf{0.075} & \textbf{0.054} & \textbf{1.8} & \textbf{1.101} & \textbf{1.781} & \textbf{1.558} & \textbf{0.507} & \textbf{1.615} \\
        &         & 1-mean DSG & 4.649 & 5.505 & 7.566 & 5.26 & 213.7 & 145.4 & 176.7 & 160.7 & 51.3 & 163.9 \\
        &         & $\abbsgpp$ & 4.649 & 5.505 & 7.566 & 5.26 & 213.7 & 145.4 & 176.7 & 160.7 & 51.3 & 163.9 \\
        \cline{2-13}
        &         \multirow{3}{*}{\begin{tabular}{c}density \\ ($M_p$) \end{tabular}} & $\abbsgreedy$ & 57.08 & 29.0 & \textbf{74.66} & \textbf{65.09} & 3.29 & 3.278 & 53.25 & \textbf{204.5} & \textbf{7.456} & \textbf{78.0} \\
        &         & 1-mean DSG & 57.28 & 27.65 & 72.71 & 63.7 & 3.672 & \textbf{3.899} & 54.55 & \textbf{204.5} & 5.692 & 75.89 \\
        &         & $\abbsgpp$ & \textbf{59.83} & \textbf{29.05} & \textbf{74.66} & \textbf{65.09} & \textbf{3.774} & \textbf{3.899} & \textbf{54.65} & \textbf{204.5} & \textbf{7.456} & \textbf{78.0} \\
        \hline
        \multirow{8}{*}{$p=0.25$} &        \multirow{4}{*}{time (s)} & $\greedy$ & 1.601 & 0.651 & 1.859 & 3.443 & 4.337 & 3.003 & 151.7 & 4049.2 & 3.08 & 294.0 \\
        &         & $\abbsgreedy$ & \textbf{0.037} & \textbf{0.038} & \textbf{0.047} & \textbf{0.038} & \textbf{0.861} & \textbf{0.586} & \textbf{1.104} & \textbf{1.22} & \textbf{0.272} & \textbf{0.891} \\
        &         & 1-mean DSG & 3.694 & 3.801 & 4.838 & 3.833 & 99.06 & 67.22 & 109.3 & 124.1 & 26.12 & 89.37 \\
        &         & $\abbsgpp$ & 3.694 & 3.801 & 4.838 & 3.833 & 99.06 & 67.22 & 109.3 & 124.1 & 26.12 & 89.37 \\
        \cline{2-13}
        &         \multirow{4}{*}{\begin{tabular}{c}density \\ ($M_p$) \end{tabular}} & $\greedy$ & 56.78 & 29.0 & \textbf{77.47} & \textbf{68.98} & 2.935 & 2.924 & 52.71 & \textbf{205.5} & 5.89 & \textbf{83.08} \\
        &         & $\abbsgreedy$ & 57.37 & 29.01 & 77.45 & 68.95 & 3.322 & 3.317 & 53.81 & \textbf{205.5} & \textbf{7.546} & 83.01 \\
        &         & 1-mean DSG & 60.4 & 29.5 & 76.74 & 68.46 & 3.804 & \textbf{4.015} & 55.35 & \textbf{205.5} & 6.637 & 82.1 \\
        &         & $\abbsgpp$ & \textbf{61.33} & \textbf{30.14} & 77.45 & 68.95 & \textbf{3.842} & \textbf{4.015} & \textbf{55.36} & \textbf{205.5} & \textbf{7.546} & 83.02 \\
        \hline
        \multirow{8}{*}{$p=0.5$} &        \multirow{4}{*}{time (s)} & $\greedy$ & 1.553 & 0.644 & 1.868 & 3.403 & 4.364 & 3.005 & 147.0 & 4483.8 & 2.792 & 292.3 \\
        &         & $\abbsgreedy$ & \textbf{0.036} & \textbf{0.038} & \textbf{0.05} & \textbf{0.039} & \textbf{0.857} & \textbf{0.596} & \textbf{1.165} & \textbf{1.31} & \textbf{0.26} & \textbf{0.887} \\
        &         & 1-mean DSG & 3.772 & 3.792 & 4.896 & 3.798 & 99.07 & 67.15 & 107.8 & 124.0 & 25.56 & 90.44 \\
        &         & $\abbsgpp$ & 3.772 & 3.792 & 4.896 & 3.798 & 99.07 & 67.15 & 107.8 & 124.0 & 25.56 & 90.44 \\
        \cline{2-13}
        &         \multirow{4}{*}{\begin{tabular}{c}density \\ ($M_p$) \end{tabular}} & $\greedy$ & 56.86 & 29.0 & \textbf{78.56} & \textbf{70.62} & 3.296 & 3.31 & 53.3 & \textbf{205.9} & 6.723 & \textbf{85.28} \\
        &         & $\abbsgreedy$ & 57.48 & 29.44 & 78.54 & 70.61 & 3.333 & 3.332 & 54.0 & \textbf{205.9} & 7.576 & 85.23 \\
        &         & 1-mean DSG & 61.59 & 30.22 & 78.16 & 70.36 & 3.848 & \textbf{4.059} & \textbf{55.6} & \textbf{205.9} & 7.297 & 84.74 \\
        &         & $\abbsgpp$ & \textbf{61.87} & \textbf{30.62} & 78.54 & 70.61 & \textbf{3.867} & \textbf{4.059} & \textbf{55.6} & \textbf{205.9} & \textbf{7.869} & 85.24 \\
        \hline
        \multirow{8}{*}{$p=0.75$} &        \multirow{4}{*}{time (s)} & $\greedy$ & 1.609 & 0.645 & 1.863 & 3.383 & 4.23 & 2.905 & 136.4 & 3566.4 & 2.751 & 280.2 \\
        &         & $\abbsgreedy$ & \textbf{0.036} & \textbf{0.038} & \textbf{0.047} & \textbf{0.038} & \textbf{0.847} & \textbf{0.582} & \textbf{1.087} & \textbf{1.206} & \textbf{0.262} & \textbf{0.882} \\
        &         & 1-mean DSG & 3.586 & 3.714 & 4.807 & 3.83 & 97.5 & 66.34 & 107.7 & 123.4 & 26.12 & 89.1 \\
        &         & $\abbsgpp$ & 3.586 & 3.714 & 4.807 & 3.83 & 97.5 & 66.34 & 107.7 & 123.4 & 26.12 & 89.1 \\
        \cline{2-13}
        &         \multirow{4}{*}{\begin{tabular}{c}density \\ ($M_p$) \end{tabular}} & $\greedy$ & 57.6 & 29.48 & \textbf{79.75} & \textbf{72.52} & 3.305 & 3.325 & 53.95 & \textbf{206.3} & 6.997 & \textbf{87.94} \\
        &         & $\abbsgreedy$ & 57.99 & 30.01 & \textbf{79.75} & 72.51 & 3.344 & 3.348 & 54.18 & \textbf{206.3} & 7.606 & 87.93 \\
        &         & 1-mean DSG & 62.86 & 31.0 & 79.62 & 72.43 & 3.892 & \textbf{4.105} & \textbf{55.84} & \textbf{206.3} & 8.26 & 87.75 \\
        &         & $\abbsgpp$ & \textbf{62.87} & \textbf{31.21} & \textbf{79.75} & \textbf{72.52} & \textbf{3.897} & \textbf{4.105} & \textbf{55.84} & \textbf{206.3} & \textbf{8.42} & 87.93 \\
        \hline
      \end{tabular}
      }
\caption{Results for algorithms for $p < 1$. 
We write $\sgreedy$ as $\abbsgreedy$ and $\sgpp$ as $\abbsgpp$ for spacing issues. 
See Section~\ref{sec:exp-lt1} for more details.}
\label{fig:exp-lt1-full}
\end{table*}

\end{document}
\endinput